\title[Emergent dynamics of the Winfree sphere model]{Generalization of the Winfree model to the high-dimensional sphere and its emergent dynamics}
\author[Park]{Hansol Park}
\address[Hansol Park]{\newline Department of Mathematical Sciences\newline Seoul National University, Seoul 08826, Republic of Korea}
\email{hansol960612@snu.ac.kr}
\newtheorem{theorem}{Theorem}[section]
\newtheorem{lemma}{Lemma}[section]
\newtheorem{corollary}{Corollary}[section]
\newtheorem{proposition}{Proposition}[section]
\newtheorem{remark}{Remark}[section]
\newtheorem{definition}{Definition}[section]
\newcommand{\bbr}{\mathbb R}
\newcommand{\bbs}{\mathbb S}
\newcommand{\bbz}{\mathbb Z}
\begin{document}

\date{\today}

\subjclass[2020]{82C10, 82C22} \keywords{}

\thanks{\textbf{Acknowledgment.} The work of H. Park is supported by National Research Foundation of Korea (NRF-2020R1A2C3A01003881).}

\begin{abstract}
We present a high-dimensional Winfree model in this paper. The Winfree model is the first mathematical model for synchronization on the circle. We generalize this model to the high-dimensional sphere and we call it ``the Winfree sphere model." We restricted the support of the influence function in the neighborhood of the attraction point with a small diameter to mimic the influence function as the Dirac-delta distribution. Restricting the support of the influence function allows several new conditions of the complete phase-locking states for the identical Winfree sphere model compare to previous results. We also provide the exponential $\ell^1$-stability and the existence of the equilibrium solution to obtain the complete oscillator death state of the Winfree sphere model. 
\end{abstract}

\maketitle \centerline{\date}
%\tableofcontents

\section{Introduction}\label{sec:1}
Self-organization of many body systems in the real worlds, such as the flocking of birds, schooling of fish \cite{Ao, D-M, T-B}, and flashing of fireflies \cite{B-C} are ubiquitous. Nowadays, they are getting a spotlight regarding unmanned aerial vehicles(UAV) \cite{A-B-F}, cooperative robot system \cite{D-B, D-F, G-K}, etc. Since two pioneers A. Winfree \cite{Winf} and Y. Kuramoto \cite{Kura1, Kura2} of the collective dynamics, works on this field of research have been studied intensively.

In this paper, we study the following system which is called the Winfree sphere model given as follows:
\begin{align}\label{A-0}2
\begin{aligned}
\begin{cases}
&\dot{x}_i=\Omega_ix_i+\displaystyle\frac{\kappa}{N}(e-\langle x_i, e\rangle x_i)\eta(x_i)\sum_{j=1}^NI(x_j)\quad t>0,\\
&x_i(0)=x_i^0\in\bbs^d,\quad \forall i\in\mathcal{N}:=\{ 1, 2, \cdots, N\},
\end{cases}
\end{aligned}
\end{align}
where $\kappa\in\bbr_+$ is the coupling strength, and $\Omega_i$ is a natural frequency matrix of the $i^{th}$ particle. Here, $e$ is the attraction point of the given system and it is a generalization of the points $2\pi n$ with $n\in\bbz$ in the Winfree model \eqref{A-1}. Since the Winfree coupling induce the attractive force between $x_i$ and $e$, we call $e$ an attraction point. The Winfree sphere model is a generalization of the Winfree model proposed by A. Winfree in 1967 given as follows:
\begin{align}\label{A-1}
\begin{cases}
\displaystyle\dot{\theta}_i=\nu_i+\frac{\kappa}{N}\sum_{k=1}^NS(\theta_i)I(\theta_k)\quad t>0,\\
\theta_i(0)=\theta_i^0\quad i\in \mathcal{N},
\end{cases}
\end{align}
where $\kappa\in\bbr_+$ is the coupling strength and $\nu_i\in\bbr$ is the natural frequency of the $i^{th}$ particle. Here, $S$ and $I$ are the sensitivity and influence functions, respectively. We refer to the swarm sphere model \cite{O1, S-H} to construct the Winfree-coupling and the free flow on the sphere. We present the relation between two models \eqref{A-0} and \eqref{A-1} in Section \ref{sec:3}.

The Winfree model \eqref{A-1} with the specific functions $(S(\theta), I(\theta))=(-\sin\theta, 1+\cos\theta)$ has been studied in previous works \cite{A-S, N-M-P, Q-R-S1, Q-R-S2}. In \cite{A-S}, phase diagram for the Winfree model has been studied. Several variants of the Winfree model have been studied. Dor example, a time-delay effect and general network topology \cite{H-K}, a locally coupled network topology \cite{H-K-P-R-(2)}, a partial phase-locking state \cite{H-K-P-R-(3)}, and the kinetic analogue \cite{H-P-Z}. An infinite cylinder $\mathbb{T}\times\mathbb{R}$ extension of the Winfree model \cite{H-K-M}.

Furthermore, authors of \cite{H-P-R} generalized these paper to general functions $(S, I)$ under reasonable conditions. We denote these conditions in Theorem \ref{thm2.1}. This conditions of $(S, I)$ focused on the geometric structure of the graphs. However, in this paper, we approach to influence function from a different point of view. Since we want to put an approximated delta function into the influence function $I$, in this paper we assume that the support of the influence function is contained in the neighborhood of the attraction point. The framework of the influence function $I$ is introduced in Section \ref{sec:3.2}. We assume that $\mathrm{supp}I=\mathcal{D}_\beta:=\{x\in\bbs^d: \angle(e, x)\leq \beta\}$. Here $\angle(a, b):=\arccos\langle a, b\rangle$ is an angle between two vectors $a, b\in\bbs^d$. This assumption is quite new approach, since support of a usually considered influence function $I(\theta)=1+\cos\theta$ is $\bbr$.  Also, the authors of \cite{H-P-R} only considered the differentiable influence function. In this paper, we only assumed the Lipschitz continuity of $I$. \\

The main results of this paper are three-fold. First, we construct the Winfree sphere model and introduce a framework of the influence function $I$. 

Second, we study emergent dynamics of the Winfree sphere model. Here, we study invariant set of the Winfree sphere model. If coupling strength is sufficiently large to satisfy the following inequality for a $0<\gamma<\frac{\pi}{2}$:
\[
\sin\gamma\tilde{I}(\gamma)>\frac{1}{\kappa}\max_{j\in\mathcal{N}}\|\Omega_j\|_{\mathrm{op}},
\]
then $\mathcal{D}_\gamma=\{x\in\bbs^d: \angle(e, x)\leq \gamma\}$ is an invariant set of the Winfree sphere model. Here $\|\cdot\|_{\mathrm{op}}$ is the operator norm(see Proposition \ref{prop4.1}). We also show that if a coupling strength is sufficiently large compare to $\max_{j\in\mathcal{N}}\|\Omega_j\|_{\mathrm{op}}$, then every particles move into $\mathcal{D}_\beta$ in finite time $T$ and stay in $\mathcal{D}_\beta$. We could express the explicit value of $T=\max_{j\in\mathcal{N}}T_j$ with $T_j$ given in \eqref{D-10}(see Theorem \ref{thm4.1} and Remark \ref{rmk4.2}). 

Third, for the identical Winfree sphere model \eqref{E-1}, if a coupling strength is large enough than operator norm of the common natural frequency $\|\Omega\|_{\mathrm{op}}$, then exponential aggregation occurs(see Theorem \ref{thm5.2}). This result is a simple generalization of previous results into the Winfree sphere model.  For the arbitrary coupling strength $\kappa>0$, we have a weaker result. From the assumption of the influence function introduced in Subsection \ref{sec:3.2}, if the initial data satisfy $\angle(x_i, x_j)<\frac{\pi}{2}-\beta$ for all $i, j\in\mathcal{N}$, then we could obtain a dichotomy for the long-time behaviors:
\begin{enumerate}
\item Complete synchronization occurs. i.e.,
\[
\lim_{t\to\infty}\|x_c\|=1.
\]

\item All particles escape the support of the influence function $I$:
\[
\lim_{t\to\infty}I_c(t)=0\quad\Leftrightarrow \quad \liminf_{t\to\infty}\angle(e, x_i)\geq\beta\quad\forall~i\in\mathcal{N}.
\]
Furthermore, all particles move along the free flow after a sufficiently large time $T>0$:
\[
\dot{x}_i=\Omega x_i\quad \forall ~i\in\mathcal{N},~ t\geq T.
\]
\end{enumerate}
Detail proof and statement are provided in Theorem \ref{thm5.3}. For the original Winfree model \eqref{A-1}, if $\dot{\theta}_i=\nu$ with $\nu\neq0$, then particles meet attraction point $e$ infinitely many times, so the second case is impossible for the original Winfree model.  \\

The rest of this paper is organized as follows. In Section \ref{sec:2}, we introduce previous results of the Winfree model and basic lemma that we use in this paper. In Section \ref{sec:3}, we construct the Winfree model on the high-dimensional sphere($\bbs^d$), and we call this model the Winfree sphere model. We also imposed some reasonable assumptions on the influence function $I$. In Section \ref{sec:4}, we provide the emergent dynamics of the Winfree sphere model. In Section \ref{sec:5}, we provided the aggregation of the Winfree sphere model if all particles have the same natural frequency. Finally, Section \ref{sec:6} is devoted to a brief summary of the paper.

\setcounter{equation}{0}
\section{Preliminaries}\label{sec:2}

In this section, we provide previous results of the original Winfree model in \eqref{A-1}. Below, we recall a definition of the asymptotic phase-locked state and the complete synchronization.
\begin{definition}\label{def2.1}
Let $\Theta=(\theta_1, \theta_2, \cdots, \theta_N)$ be a solution of the Winfree model \eqref{A-1}.
\begin{enumerate}
\item The solution $\Theta=\Theta(t)$ tends to the (strong) phase-locked state asymptotically if and only if its transversal phase differences tend to constant values: for all $i, j\in\mathcal{N}$, there exists a constant value $\theta_{ij}^\infty$, such that
\[
\lim_{t\to\infty}|\theta_i(t)-\theta_j(t)|=\theta_{ij}^\infty.
\]
\item The solution $\Theta=\Theta(t)$ exhibits a complete (frequency) synchronization if and only if its transversal frequency differences tend to zero: for all $i, j\in\mathcal{N}$, we have
\[
\lim_{t\to\infty}|\dot{\theta}_i(t)-\dot{\theta}_j(t)|=0.
\]
\end{enumerate}
\end{definition}
We can generalize the above definition to the Winfree sphere model as follows.
\begin{definition}\label{def2.2}
Let $\mathcal{X}=(x_1, x_2, \cdots, x_N)$ be a solution of the Winfree sphere model \eqref{A-0}. The solution $\mathcal{X}=\mathcal{X}(t)$ tends to the (strong) phase-locked state asymptotically if and only if its transversal phase differences tend to constant values: for all $i, j\in\mathcal{N}$, there exists a constant value $L_{ij}$, such that
\[
\lim_{t\to\infty}|x_i(t)-x_j(t)|=L_{ij}.
\]
\end{definition}
Before we present some results, we assume that sensitivity and influence functions satisfy the following conditions in the rest of this section.\vspace{0.2cm}
\begin{itemize}
\item $(\mathcal{A}1)$: the sensitivity function $S$ is a $2\pi$-periodic, $\mathcal{C}^2$, and odd function; and the influence function $I$ is a $2\pi$-periodic, $\mathcal{C}^2$, and even function.\vspace{0.2cm}

\item $(\mathcal{A}2)$: the sensitivity and influence functions satisfy some geometric conditions: there exist a positive constants $\theta_*$ and $\theta^*$, satisfying
\[
0<\theta_*<\theta^*<2\pi,
\]
such that,
\begin{enumerate}
\item $S\leq 0$ on $[0, \theta^*]$ and $S'\leq 0$, $S''\geq0$ on $[0, \theta_*]$,
\item $I\geq0$, $I'\leq 0$ on $[0, \theta^*]$, and $I''\leq0$ on $[0, \theta_*]$, 
\item $(SI)'<0$ on $(0, \theta_*)$, $(SI)'>0$ on $(\theta_*, \theta^*)$.
\end{enumerate}
\end{itemize}

From the above assumption,  we set $\alpha^\infty$ is the unique solution of $(SI)(x)=(SI)(\alpha)$, $x\in[0, \theta_*]$ for given $\alpha\in(0, \theta^*)$. We also define 
\begin{align*}
\begin{cases}
\displaystyle K_e(\alpha^\infty):=-\frac{\max_{j\in\mathcal{N}}|\nu_i|}{S(\alpha^\infty)I(\alpha^\infty)},\\
\mathcal{R}(\alpha):=\{\Theta=(\theta_1, \cdots, \theta_N)\in\bbr^N: \theta_i\in(-\alpha, \alpha), ~\forall~i\in\mathcal{N}\}.
\end{cases}
\end{align*}

The following theorem is about the existence of the phase-locked state.
\begin{theorem}[\cite{H-P-R} Existence of the phase-locked state]\label{thm2.1}
Suppose that the sensitivity and influence functions satisfy $(\mathcal{A}1)$-$(\mathcal{A}2)$, and initial data satisfy
\[
\Theta_0\in \overline{\mathcal{R}(\alpha)}\quad\text{and}\quad \kappa>\kappa_e(\alpha^\infty).
\]
Let $\Theta=\Theta(t)$ be a solution of the original Winfree model \eqref{A-1}. Then, $\Theta(t)$ converges to a unique equilibrium state $\Phi=(\phi_1, \cdots, \phi_N)$ in the region $\mathcal{R}(\alpha^\infty)$, i.e., there exists a unique phase-locked state $\Phi:=(\phi_1,\cdots, \phi_N)\in\mathcal{R}(\alpha^\infty)$, such that
\[
\nu_i+\frac{\kappa}{N}S(\phi_i)\sum_{j=1}^NI(\phi_j)=0,\quad\lim_{t\to\infty}\Theta(t)=\Phi.
\]
\end{theorem}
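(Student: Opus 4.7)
The plan is to proceed in three stages: establish positive invariance of $\overline{\mathcal{R}(\alpha)}$, show finite-time trapping into $\overline{\mathcal{R}(\alpha^\infty)}$, and finally prove convergence to a unique equilibrium inside this smaller region.

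First, for the invariance of $\overline{\mathcal{R}(\alpha)}$, I would examine the vector field on the boundary. At $\theta_i=\alpha$, since $\alpha<\theta^*$ and $S\leq0$ on $[0,\theta^*]$, we have $S(\alpha)\leq 0$; by evenness of $I$ and $I\geq 0$ on $[0,\theta^*]$, every $I(\theta_j)\geq 0$ throughout $\overline{\mathcal{R}(\alpha)}$. One verifies (using the unimodal shape of $SI$) that the coupling term is at least as strong as at $\alpha^\infty$, so the assumption $\kappa>\kappa_e(\alpha^\infty)=-\max_j|\nu_j|/(S(\alpha^\infty)I(\alpha^\infty))$ forces $\dot\theta_i<0$. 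The symmetry afforded by oddness of $S$ and evenness of $I$ handles $\theta_i=-\alpha$.

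Second, for attraction into $\overline{\mathcal{R}(\alpha^\infty)}$, the key structural fact is that $(SI)'<0$ on $(0,\theta_*)$ and $(SI)'>0$ on $(\theta_*,\theta^*)$, so $SI$ is unimodal on $[0,\theta^*]$ and $\alpha^\infty\in[0,\theta_*]$ is the unique pre-image with $(SI)(\alpha^\infty)=(SI)(\alpha)$. On the outer shell $|\theta_i|\in[\alpha^\infty,\alpha]$ one has $|S(\theta_i)I(\theta_i)|\geq|S(\alpha^\infty)I(\alpha^\infty)|$, so $\dot\theta_i$ retains a definite sign pointing toward $\mathcal{R}(\alpha^\infty)$; integrating the resulting differential inequality yields a finite entry time into $\overline{\mathcal{R}(\alpha^\infty)}$, and the invariance argument of stage one is easily upgraded to show the set is positively invariant.

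Third, inside $\overline{\mathcal{R}(\alpha^\infty)}\subset(-\theta_*,\theta_*)$ the function $SI$ is strictly decreasing on the positive half and strictly increasing on the negative half, so the linearization of the vector field around any point picks up a negative-definite contribution of the form $\kappa S'(\theta_i)\sum_j I(\theta_j)/N + \kappa S(\theta_i)I'(\theta_i)/N$, which combines the signs $S'\leq 0$, $I\geq 0$, $S\leq 0$, $I'\leq 0$. I would convert this into a one-sided Lipschitz/contraction estimate of the form
\[
\frac{d}{dt}\sum_i|\theta_i-\tilde\theta_i|^2 \leq -c\sum_i|\theta_i-\tilde\theta_i|^2
\]
for any two solutions $\Theta,\tilde\Theta$ in $\overline{\mathcal{R}(\alpha^\infty)}$. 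Existence of an equilibrium in this compact convex set then follows from Brouwer applied to a time-$t$ map, and uniqueness plus exponential convergence follows from the contraction.

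The main obstacle will be stage three: monotonicity and trapping are straightforward bookkeeping tailored to $(\mathcal{A}2)$, but for heterogeneous $\nu_i$ there is no obvious Lyapunov function, so everything rests on extracting a genuine contraction from $(SI)'<0$. The careful point is that the off-diagonal term $\kappa S(\theta_i)I'(\theta_j)/N$ has a definite sign thanks to $(\mathcal{A}2)$, so by summing appropriately one obtains diagonal dominance and the contraction estimate closes.
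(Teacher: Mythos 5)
Your three-stage skeleton --- positive invariance of $\overline{\mathcal{R}(\alpha)}$, finite-time trapping into $\mathcal{R}(\alpha^\infty)$, then a contraction estimate plus existence of an equilibrium --- is exactly the architecture of the paper's argument (which itself defers to Theorem 2.2 of \cite{H-P-R}): entry into $\mathcal{R}(\alpha^\infty)$, an $\ell^1$-stability estimate between any two solutions there, existence of an equilibrium, and then combination of the three. Stages one and two are fine, and your Brouwer route to an equilibrium is a legitimate alternative to the scalar fixed-point construction in \cite{H-P-R} (a fixed point of the time-$t$ map is a periodic orbit, and a contraction admits no nontrivial periodic orbit). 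The problem is that stage three, which carries all the content, is justified incorrectly. You assert that the off-diagonal term $\kappa S(\theta_i)I'(\theta_j)/N$ ``has a definite sign thanks to $(\mathcal{A}2)$'' and that this yields diagonal dominance. The sign is indeed definite, but it is the \emph{wrong} one: $S\le 0$ and $I'\le 0$ give $SI'\ge 0$, so the off-diagonal block of the Jacobian is a nonnegative, destabilizing perturbation, and in the $\ell^2$ quadratic form its contribution $\frac{\kappa}{N}\sum_{i\neq j}(\theta_i-\tilde\theta_i)(\theta_j-\tilde\theta_j)S(\tilde\theta_i)I'(\xi_j)$ is sign-indefinite because the products $(\theta_i-\tilde\theta_i)(\theta_j-\tilde\theta_j)$ are. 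Likewise your claimed ``negative-definite'' diagonal entry $\kappa S'(\theta_i)\sum_jI(\theta_j)/N+\kappa S(\theta_i)I'(\theta_i)/N$ contains the nonnegative piece $SI'$; negativity of the diagonal is carried only by $S'I\le 0$.

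A contraction does hold, but for a reason you have not supplied: one must show that the diagonal gain $-S'(\theta_i)\cdot\frac{1}{N}\sum_jI(\theta_j)$ uniformly dominates the off-diagonal loss $|S|\cdot|I'|$ over all of $\mathcal{R}(\alpha^\infty)$, and the margin is precisely $-(SI)'(\alpha^\infty)>0$. Locating the worst case at $\alpha^\infty$ --- i.e.\ the chain $-S'(\theta)I(\theta')\ge -S'(\alpha^\infty)I(\alpha^\infty)>S(\alpha^\infty)I'(\alpha^\infty)\ge |S(\theta)|\,|I'(\theta')|$ --- requires the monotonicity of $S'$ and $I'$ coming from the hypotheses $S''\ge 0$ and $I''\le 0$ on $[0,\theta_*]$, together with the strict inequality $(SI)'<0$ from $(\mathcal{A}2)(3)$. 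You never invoke the second-derivative conditions, and without them the dominance does not follow. This recombination of diagonal and off-diagonal into $(SI)'$ is exactly why the paper and \cite{H-P-R} run the estimate in $\ell^1$ (compare Lemma \ref{lemma4.3} here, where the analogous condition $\cos\gamma\tilde I(\gamma)-\mathrm{Lip}(\tilde I)>0$ has to be \emph{assumed}): summing over $i$ before estimating turns the off-diagonal block into a single factor weighted by $\frac{1}{N}\sum_i S(\tilde\theta_i)$, after which the two pieces assemble into $(SI)'$ evaluated at intermediate points. As written, your stage three does not close; with the dominance inequality above it can be repaired in $\ell^1$, $\ell^2$, or $\ell^\infty$.
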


\begin{proof}[Sketch of proof]
For a detail proof, we refer to Theorem 2.2 of \cite{H-P-R}. We only provide a sketch of the proof. First, we show any solution $\Theta(t)$ enters $\mathcal{R}(\alpha^\infty)$ within some finite time. Second, using the stability estimate in $\ell^1$-distance, we show any two solutions to \eqref{A-1} with initial datas in $\mathcal{R}(\alpha^\infty)$ must converges to each other asymptotically. Third, we show the existence of an equilibrium in $\mathcal{R}(\alpha^\infty)$. Finally, combining these three facts yields, any solution $\Theta(t)$ converges to equilibrium asymptotically.
\end{proof}

Now we introduce the results of the positive invariance of the Winfree model.
\begin{lemma}[\cite{H-P-R} Existence of positively invariant set]\label{lem2.1}
 Suppose that $\Theta=\Theta(t)$ be a solution introduced in Theorem \ref{thm2.1}. Let $\alpha$ and the coupling strength satisfy
\[
\alpha\in(0, \theta^*)\quad\text{and}\quad \kappa>\kappa_e(\alpha^\infty).
\]
Then, the set $\mathcal{R}(\alpha^\infty)$ is positively invariant along the Winfree flow \eqref{A-1}:
\[
\Theta_0\in\mathcal{R}(\alpha^\infty)\quad\Rightarrow \Theta(t)\in\mathcal{R}(\alpha^\infty),\quad t\in(0, \infty).
\]
\end{lemma}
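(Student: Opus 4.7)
\bigskip

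\noindent\textbf{Proof proposal.} My plan is to run the standard ``first exit time'' argument on the extremal trajectories, exploiting the monotonicity properties $(\mathcal{A}2)$ of $S$ and $I$ at the boundary of $\mathcal{R}(\alpha^\infty)$. Define the upper and lower envelopes
\[
M(t):=\max_{i\in\mathcal{N}}\theta_i(t),\qquad m(t):=\min_{i\in\mathcal{N}}\theta_i(t),
\]
and argue by contradiction that neither can reach $\pm\alpha^\infty$. Concretely, assume $\Theta_0\in\mathcal{R}(\alpha^\infty)$ and set $t_0:=\inf\{t>0:\,M(t)=\alpha^\infty\text{ or }m(t)=-\alpha^\infty\}$; suppose for contradiction that $t_0<\infty$.

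Suppose the obstruction occurs on the upper side, i.e.\ $M(t_0)=\alpha^\infty$. Pick any index $i_*$ with $\theta_{i_*}(t_0)=M(t_0)=\alpha^\infty$. Because $\theta_j(t_0)\in[-\alpha^\infty,\alpha^\infty]\subset[-\theta^*,\theta^*]$ for every $j$, and because $I$ is even with $I\geq0,\ I'\leq0$ on $[0,\theta^*]$ by $(\mathcal{A}2)$, I get the pointwise bound $I(\theta_j(t_0))\geq I(\alpha^\infty)\geq 0$ for all $j$. Combined with $S(\alpha^\infty)\leq 0$ (again from $(\mathcal{A}2)$) this yields
\[
\dot{\theta}_{i_*}(t_0)=\nu_{i_*}+\frac{\kappa}{N}S(\alpha^\infty)\sum_{j=1}^{N}I(\theta_j(t_0))\leq \max_{j\in\mathcal{N}}|\nu_j|+\kappa\,S(\alpha^\infty)\,I(\alpha^\infty),
\]
and the hypothesis $\kappa>\kappa_e(\alpha^\infty)=-\max_j|\nu_j|/(S(\alpha^\infty)I(\alpha^\infty))$ forces $\dot{\theta}_{i_*}(t_0)<0$. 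Since this holds for \emph{every} active index $i_*\in\{i:\theta_i(t_0)=M(t_0)\}$, the upper Dini derivative satisfies $D^{-}M(t_0)\leq \max_{i_*}\dot{\theta}_{i_*}(t_0)<0$. On the other hand, the minimality of $t_0$ guarantees $M(t)<\alpha^\infty=M(t_0)$ for $t<t_0$, whence $D^{-}M(t_0)\geq 0$, a contradiction.

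The lower-side case $m(t_0)=-\alpha^\infty$ is handled symmetrically: at an active index $i_*$, oddness of $S$ gives $S(-\alpha^\infty)=-S(\alpha^\infty)\geq0$, evenness of $I$ preserves $I(\theta_j(t_0))\geq I(\alpha^\infty)$, and the same threshold on $\kappa$ forces $\dot{\theta}_{i_*}(t_0)>0$, contradicting $D^{-}m(t_0)\leq 0$ at a putative first descent to $-\alpha^\infty$. Hence $t_0=\infty$ and $\Theta(t)\in\mathcal{R}(\alpha^\infty)$ for all $t\geq 0$.

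I do not expect a genuine obstacle: the key ingredients are all packaged into $(\mathcal{A}2)$ and the definition of $\kappa_e(\alpha^\infty)$. The only technical care needed is that $\alpha^\infty$ lies in the ``good'' monotonicity window $[0,\theta_*]$ where $I(\alpha^\infty)>0$ and $S(\alpha^\infty)<0$ (so the threshold $\kappa_e$ is finite and positive), and the usual Dini-derivative/differential inequality bookkeeping when several particles simultaneously realise the maximum or minimum.
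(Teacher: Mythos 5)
The paper does not actually prove this lemma; it simply cites Lemma 3.1 of \cite{H-P-R}. Your first-exit-time argument is correct and is the standard proof of this statement: at the first time the extremal particle reaches $\pm\alpha^\infty$, the bounds $I(\theta_j)\geq I(\alpha^\infty)$ (from $I$ even and nonincreasing on $[0,\theta^*]$), the sign of $S(\pm\alpha^\infty)$, and the threshold $\kappa>\kappa_e(\alpha^\infty)$ force the velocity to point strictly inward, contradicting the one-sided Dini derivative bound coming from minimality of $t_0$. The only caveat, which you already flag, is that the argument needs $S(\alpha^\infty)I(\alpha^\infty)<0$ (equivalently $\alpha^\infty>0$), which follows from $(SI)(0)=0$ and $(SI)'<0$ on $(0,\theta_*)$ whenever $(SI)(\alpha)<0$, and is implicit in $\kappa_e(\alpha^\infty)$ being well defined.
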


\begin{proof}
We refer to Lemma 3.1 of \cite{H-P-R}.
\end{proof}
We provide the Winfree sphere model version of Lemma \ref{lem2.1} in Proposition \ref{prop4.1}. We also introduce the results on the transition to invariant set of the Winfree model.

\begin{proposition}[\cite{H-P-R} Transition to $\mathcal{R}(\alpha^\infty)$]\label{prop2.1}
Suppose that $\Theta=\Theta(t)$ be a solution introduced in Theorem \ref{thm2.1}.  Let $\alpha$ and the coupling strength satisfy
\[
\alpha\in(0, \theta^*)\quad\text{and}\quad \kappa>\kappa_e(\alpha^\infty).
\]
Let $\Theta=(\theta_1, \cdots, \theta_N)$ be a global smooth solution to \eqref{A-1}, satisfying $\Theta_0\in\overline{\mathcal{R}(\alpha)}$, then there exists $t_*\geq0$ such that
\[
\Theta(t)\in\mathcal{R}(\alpha^\infty),\quad \forall ~t>t_*.
\]
\end{proposition}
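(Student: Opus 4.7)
The plan is to track the extremal envelope
\[
\mu(t) := \max_{i\in\mathcal{N}} |\theta_i(t)|
\]
and to establish a scalar differential inequality $\dot\mu(t) \leq -\delta$ on the window $\mu(t)\in[\alpha^\infty, \alpha]$. Since $\Theta_0 \in \overline{\mathcal{R}(\alpha)}$ is exactly $\mu(0)\leq \alpha$, such a bound forces $\mu$ to drop to $\alpha^\infty$ by time $t_*\leq (\alpha - \alpha^\infty)/\delta$, whereupon Lemma \ref{lem2.1} traps $\Theta$ inside $\mathcal{R}(\alpha^\infty)$ for all later times and produces the claim.

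To derive the inequality I would first note that $\mu$ is Lipschitz and, by Danskin's theorem, $\dot\mu(t)=\mathrm{sgn}(\theta_{i^*})\dot\theta_{i^*}$ at a.e.\ $t$, where $i^*=i^*(t)$ attains the maximum. Because $(\mathcal{A}1)$ makes $S$ odd and $I$ even, the Winfree system \eqref{A-1} is invariant under the reflection $(\theta_i,\nu_i)\mapsto(-\theta_i,-\nu_i)$, which preserves both $\mu$ and $\max_j|\nu_j|$, so I may assume $\theta_{i^*}=\mu\geq 0$. Every other phase then satisfies $|\theta_j|\leq \mu \leq \alpha < \theta^*$, and the evenness and non-increasing character of $I$ on $[0,\theta^*]$ from $(\mathcal{A}2)$ yield $I(\theta_j)=I(|\theta_j|)\geq I(\mu)\geq 0$. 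Combining with $S(\mu)\leq 0$ on $[0,\theta^*]$ gives
\[
\dot\mu(t) = \nu_{i^*} + \frac{\kappa}{N} S(\mu)\sum_{j=1}^{N} I(\theta_j) \leq \max_{j\in\mathcal{N}}|\nu_j| + \kappa\,(SI)(\mu).
\]

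The geometric part of $(\mathcal{A}2)$ ($(SI)'<0$ on $(0,\theta_*)$ and $(SI)'>0$ on $(\theta_*,\theta^*)$) together with the defining identity $(SI)(\alpha^\infty)=(SI)(\alpha)<0$ imply that $SI$ attains its maximum on $[\alpha^\infty,\alpha]$ at the endpoints, so $(SI)(\mu)\leq (SI)(\alpha^\infty)$ for every $\mu\in[\alpha^\infty,\alpha]$. Plugging this in and invoking the hypothesis $\kappa>\kappa_e(\alpha^\infty)=-\max_j|\nu_j|/(SI)(\alpha^\infty)$ produces
\[
\dot\mu(t) \leq \max_{j\in\mathcal{N}}|\nu_j| + \kappa\,(SI)(\alpha^\infty) =: -\delta < 0
\]
whenever $\mu(t)\in[\alpha^\infty,\alpha]$; in particular $\mu$ is monotone non-increasing as long as it exceeds $\alpha^\infty$, never exits $[0,\alpha]$ (so along the way we also see that $\overline{\mathcal{R}(\alpha)}$ is forward invariant), and must cross $\alpha^\infty$ within time $(\alpha-\alpha^\infty)/\delta$. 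Applying Lemma \ref{lem2.1} at that instant closes the argument.

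The main obstacle is the envelope nature of $\mu$: the maximizing index $i^*(t)$ need not be unique and may switch in time, so the pointwise computation of $\dot\mu$ must be justified via Danskin's theorem (or equivalently by restricting to the smooth intervals between the finitely many swap times and patching with Lipschitz continuity). The reduction to $\theta_{i^*}\geq 0$ via the sign-flip symmetry is what keeps the estimate scalar; without it one would have to analyze $\mu=\max\theta_i$ and $\mu=-\min\theta_i$ separately, both giving the same bound at the cost of additional bookkeeping.
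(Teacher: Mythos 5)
Your argument is correct; the paper itself gives no proof of Proposition \ref{prop2.1} beyond citing Proposition 1 of \cite{H-P-R}, and your extremal-phase differential inequality $\dot\mu\leq\max_j|\nu_j|+\kappa(SI)(\mu)\leq\max_j|\nu_j|+\kappa(SI)(\alpha^\infty)<0$ for $\mu=\max_i|\theta_i|\in[\alpha^\infty,\alpha]$, followed by the invariance of $\mathcal{R}(\alpha^\infty)$ from Lemma \ref{lem2.1}, is essentially the argument used there. The only points needing care --- the a.e.\ differentiability of the envelope $\mu$ and the fact that $(\mathcal{A}2)$ forces $SI$ to attain its maximum over $[\alpha^\infty,\alpha]$ at the endpoints, where it equals $(SI)(\alpha^\infty)<0$ --- are both handled adequately in your write-up.
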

\begin{proof}
We refer to Proposition 1 of \cite{H-P-R}.
\end{proof}
We provide the Winfree sphere model version of Proposition \ref{prop2.1} in Theorem \ref{thm4.1}. Now we present the following Barbalat's lemma without a proof.
\begin{lemma}[\cite{Ba} Barbalat's lemma]\label{lem2.2}
Suppose that  a real-valued function $f: [0, \infty) \to \bbr$ is uniformly continuous and it satisfies
\[ \lim_{t \to \infty} \int_0^t f(s)d s \quad \textup{exists}. \]
Then, $f$ tends to zero as $t \to \infty$:
\[ \lim_{t \to \infty} f(t) = 0. \]
\end{lemma}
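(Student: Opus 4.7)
The plan is to argue by contradiction. Suppose $f(t)\not\to 0$ as $t\to\infty$; then there exist $\epsilon_0>0$ and a sequence $t_n\to\infty$ with $|f(t_n)|\geq\epsilon_0$ for every $n$. The strategy is to exploit uniform continuity to upgrade this pointwise lower bound into an integral lower bound on a short fixed-length interval anchored at each $t_n$, and then contradict the Cauchy criterion applied to the convergent improper integral $\int_0^t f(s)\,ds$.

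First I would use uniform continuity to pick $\delta>0$ such that $|t-s|<\delta$ implies $|f(t)-f(s)|<\epsilon_0/2$. For each $n$ and for $t\in[t_n,\,t_n+\delta]$, the reverse triangle inequality then yields $|f(t)|>\epsilon_0/2$. Moreover, since the total oscillation of $f$ on this interval is strictly less than $\epsilon_0\leq|f(t_n)|$, the function $f$ cannot change sign on $[t_n,\,t_n+\delta]$. Combining these two observations gives the one-sided integral estimate
\[
\left|\int_{t_n}^{t_n+\delta} f(s)\,ds\right|\;\geq\;\frac{\epsilon_0\,\delta}{2}.
\]

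On the other hand, the hypothesis that $\lim_{t\to\infty}\int_0^t f(s)\,ds$ exists supplies the Cauchy criterion: for any $\eta>0$ there is $T$ such that $\left|\int_{t_1}^{t_2} f(s)\,ds\right|<\eta$ whenever $t_1,t_2>T$. Taking $\eta=\epsilon_0\delta/4$ and choosing $n$ so large that $t_n>T$, I would obtain $\left|\int_{t_n}^{t_n+\delta} f(s)\,ds\right|<\epsilon_0\delta/4$, which contradicts the previous display and finishes the argument.

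The only genuine subtlety is the sign-stability of $f$ on each interval $[t_n,\,t_n+\delta]$: this is exactly where \emph{uniform} continuity (rather than mere continuity) is needed, because the same $\delta$ must work at every $t_n$, and it is precisely what lets us convert a pointwise-in-modulus lower bound into a genuine lower bound on the integral (without cancellation). Once this is secured, the remainder is routine application of the Cauchy criterion.
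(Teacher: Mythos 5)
Your argument is the standard (and correct) proof of Barb\u{a}lat's lemma: contradiction via uniform continuity to get a sign-constant lower bound $|f|>\epsilon_0/2$ on intervals $[t_n,t_n+\delta]$, hence $\bigl|\int_{t_n}^{t_n+\delta}f(s)\,ds\bigr|\geq\epsilon_0\delta/2$, against the Cauchy criterion for the convergent improper integral. The paper itself states this lemma without proof, citing Barb\u{a}lat's original article, so there is nothing to compare against; the only cosmetic point is that the bound $|f(t)-f(t_n)|<\epsilon_0/2$ is stated for $|t-t_n|<\delta$ while you integrate over the closed interval up to $t_n+\delta$, which is harmless since the endpoint has measure zero (or one can shrink to $\delta/2$).
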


\setcounter{equation}{0}
\section{Generalization of the Winfree model}\label{sec:3}
In this section, we generalize the Winfree model \eqref{A-1} to the high-dimensional sphere($\bbs^d$) with $d\geq2$ to obtain the Winfree sphere model \eqref{A-0}. 

\subsection{Construction of the Winfree sphere model}
The original Winfree model was defined on the circle($\mathbb{S}^1$) and the system is given as follows:
\[
\dot{\theta}_i=\nu_i+\frac{\kappa}{N}\sum_{j=1}^N S(\theta_i)I(\theta_j)=\nu_i+\kappa S(\theta_i) I_c,
\]
where for the handy notation, we denote the average of $I(\theta_k)$ as $I_c$. i.e. $I_c=\frac{1}{N}\sum_{k=1}^NI(\theta_k)$. Let $x_1, x_2, \cdots, x_N\in\bbs^d\subset\bbr^{d+1}$. Since we can consider that each particle $\theta_i$ in the Winfree model was attracted by the special point $\theta=0$, we define this kind of special point $e$ in $\bbs^d$ which attracting particles as follows:
\[
e=[1, \underbrace{0, 0, \cdots, 0}_{d\text{-times}}]^{\top}\in\bbs^d\subset\bbr^{d+1}.
\]
Throughout this paper, we call $e$ \textit{the attraction point} of the Winfree sphere model. If we assume that the velocity of the $i^{th}$ particle is heading to the attraction point, then the velocity of the $i^{th}$ particle can be expressed as
\[
\dot{x}_i=\alpha_i\frac{e-\langle x_i, e\rangle x_i}{\|e-\langle x_i, e\rangle x_i\|}
\]
with some non-negative constant $\alpha_i$. Recall that $\|e-\langle x_i, e\rangle x_i\|=\sqrt{1-\langle x_i, e\rangle^2}$. This means, if $e=x_i$, then $\dot{x}_i$ is not well defined if $\alpha_i\neq0$ at $x_i=e$. For the further condition, we assume that $\alpha_i=0$ if $x_i=e$ for the well-definedness of the system. Winfree assumed that the velocity of $x_i$ heading to $e$ is influenced from other $j^{th}$ particles $x_j$ independently, we can write as follows:
\[
\dot{x}_i=\sum_{j=1}^N \beta_{ij}\frac{e-\langle x_i, e\rangle x_i}{\|e-\langle x_i, e\rangle x_i\|}.
\]
We also know that $\beta_{ij}$ is proportional to $S_i=S(x_i)$ (the sensitivity of the $i^{th}$ particle), $I_j=I(x_j)$ (the influence of the $j^{th}$ particle), and coupling strength $\kappa$ in the original Winfree model. Following this philosophy, we assume
\[
\beta_{ij}=\kappa_iS(x_i)I(x_j).
\]
Now we want to construct the natural frequency term of the high-dimensional Winfree model. To generalize the natural frequency terms of the Winfree model to the natural frequency terms of the Winfree sphere model, we refer to a relation between the Kuramoto model \cite{A-B, A-R, D-X, H-K-R-(1), J-C} and the Lohe sphere model \cite{H-K-P-R, Lo-0, Lo-1, Lo-2}. Then the canonical extension of the natural frequency term to the Winfree sphere model can be expressed as follows:
\[
\dot{x}_i=\Omega_ix_i,
\]
where $\Omega_i\in\mathrm{Skew}_{d+1}\bbr$. Here, we denote a set of skew-symmetric matrices with size $(d+1)\times (d+1)$ as $\mathrm{Skew}_{d+1}\bbr$. On the other hand, we can interpret this natural frequency term as a natural frequency tensor $A_i$ introduced in the Lohe tensor model \cite{H-P-(1)}. Natural frequency tensors of the Lohe tensor model was proposed to preserve the norm of given tensors along the time evolution. Since we want to preserve norm of vectors, definition of natural frequency term for the Winfree sphere model is natural.

Based on the above arguments, we obtain the high-dimensional Winfree model as follows:
\begin{align}\label{C-2}
\begin{cases}
\dot{x}_i=\displaystyle\Omega_ix_i+\frac{\kappa}{N}\sum_{j=1}^NS(x_i)I(x_j)\frac{e-\langle x_i, e\rangle x_i}{\sqrt{1-\langle x_i, e\rangle^2}},\quad t>0,\\
x_i(0)=x_i^0\in\bbs^d,\quad \forall ~i\in\mathcal{N}.
\end{cases}
\end{align}
As we mentioned before, system \eqref{C-2} does not well-defined at $x_i=e$. So we want to assume that $S(x_i)I(x_j)=0$ if $x_i=e$. Of course, $I(x_j)$ has no information of $x_i$, it cannot be 0 when $x_i=0$. So we define $S(e)=0$. To make a well-defined system, we assume that $S(x_i)$ can be expressed as following form:
\begin{align}\label{C-3}
S(x_i)=\eta(x_i)\sqrt{1-\langle x_i, e\rangle^2}
\end{align}
for some Lipschitz continuous function $\eta$ defined on $\bbs^d$. Now we substitute \eqref{C-3} into \eqref{C-2} to obtain the following system:
\begin{align}
\begin{aligned}\label{C-4}
\begin{cases}
&\dot{x}_i=\Omega_ix_i+\displaystyle\frac{\kappa}{N}(e-\langle x_i, e\rangle x_i)\eta(x_i)\sum_{j=1}^NI(x_j),\quad t>0,\\
&x_i(0)=x_i^0\in\bbs^d,\quad \forall i\in\mathcal{N}.
\end{cases}
\end{aligned}
\end{align}
We call this model as ``\textit{the Winfree sphere model.}"\\

Next, we briefly discuss how to reduce the Winfree sphere model \eqref{A-0} to the original Winfree model \eqref{A-1}. First we set $d=1$. Then, attraction point $e$, each particle $x_i$ and natural frequency $\Omega_i$ can be expressed as follows:
\[
e=\begin{bmatrix}
1\\
0
\end{bmatrix},\quad
x_i=\begin{bmatrix}
\cos\theta_i\\
\sin\theta_i
\end{bmatrix},\quad \Omega_i=\begin{bmatrix}
0&-\nu_i\\
\nu_i&0
\end{bmatrix}
\]
for some real numbers $\theta_i$ and $\nu_i$. Then \eqref{C-4}$_1$ can be expressed as
\begin{align*}
\dot{\theta}_i\begin{bmatrix}
-\sin\theta_i\\
\cos\theta_i
\end{bmatrix}=\nu_i\begin{bmatrix}
-\sin\theta_i\\
\cos\theta_i
\end{bmatrix}+\frac{\kappa}{N}\left(
\begin{bmatrix}
1\\
0
\end{bmatrix}-\cos\theta_i \begin{bmatrix}
\cos\theta_i\\
\sin\theta_i
\end{bmatrix}
\right)\eta(\theta_i)\sum_{j=1}^N I(\theta_j).
\end{align*}
This can be simplified as follows:
\[
\dot{\theta}_i=\nu_i-\frac{\kappa}{N}\sin\theta_i\eta(\theta_i)\sum_{j=1}^NI(\theta_j).
\]
Finally, we put $S(\theta_i)=-\sin\theta_i \eta(\theta_i)$ to obtain \eqref{A-1}$_1$. So the Winfree sphere model \eqref{C-4} is a generalization of the original Winfree model \eqref{A-1}.\\

Now we have the following simple lemma to show that each particles $\{x_i(t)\}$ lie on the unit sphere $\bbs^d$.
\begin{lemma}
Let $\{x_i\}$ be a solution of system \eqref{C-4}. Then we have
\[
\|x_i(t)\|=1\quad\forall t\geq0,\quad i\in\mathcal{N}.
\]
\end{lemma}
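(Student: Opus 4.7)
The plan is to differentiate $\|x_i(t)\|^2$ in time and show the derivative vanishes on the sphere, giving a conserved quantity that pins particles to $\bbs^d$.

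First I compute $\frac{d}{dt}\|x_i\|^2 = 2\langle \dot{x}_i, x_i\rangle$ using the equation \eqref{C-4}. The natural-frequency contribution $\langle \Omega_i x_i, x_i\rangle$ vanishes because $\Omega_i \in \mathrm{Skew}_{d+1}\bbr$ (for any skew-symmetric $\Omega$, $\langle \Omega v, v\rangle = 0$). So only the Winfree coupling term survives.

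Second I evaluate the inner product of the coupling term against $x_i$. A direct expansion gives
\begin{equation*}
\langle e - \langle x_i, e\rangle x_i, \, x_i\rangle = \langle e, x_i\rangle - \langle x_i, e\rangle \|x_i\|^2 = \langle x_i, e\rangle\bigl(1 - \|x_i\|^2\bigr).
\end{equation*}
Setting $f_i(t) := \|x_i(t)\|^2$ and
\begin{equation*}
g_i(t) := \frac{2\kappa}{N}\,\langle x_i(t), e\rangle\,\eta(x_i(t))\sum_{j=1}^N I(x_j(t)),
\end{equation*}
I obtain the scalar linear ODE $\dot{f}_i(t) = g_i(t)\bigl(1 - f_i(t)\bigr)$.

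Third, since $f_i(0) = \|x_i^0\|^2 = 1$, the constant function $f_i \equiv 1$ solves this ODE. By the Lipschitz regularity of $\eta$ and $I$ (hence local Lipschitz continuity of $g_i$ along the trajectory) uniqueness for linear scalar ODEs applies, so $f_i(t) = 1$ for all $t \ge 0$, i.e.\ $\|x_i(t)\| = 1$. Equivalently, one can integrate explicitly: $f_i(t) = 1 - (1 - f_i(0))\exp\!\bigl(-\int_0^t g_i(s)\,ds\bigr) = 1$.

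I do not foresee a real obstacle here; the only subtle point is to notice the skew-symmetric cancellation and the identity $\langle e - \langle x_i, e\rangle x_i, x_i\rangle = \langle x_i, e\rangle(1 - \|x_i\|^2)$, which together make the sphere a codimension-one invariant manifold of \eqref{C-4}.
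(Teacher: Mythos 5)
Your proof is correct and follows essentially the same route as the paper: differentiate $\|x_i\|^2$, kill the free-flow term by skew-symmetry of $\Omega_i$, and observe that the coupling term is orthogonal to $x_i$ on the sphere. Your version is in fact slightly more careful than the paper's, which asserts $\langle x_i, e-\langle x_i, e\rangle x_i\rangle=0$ outright (an identity that already presupposes $\|x_i\|=1$), whereas you retain the factor $1-\|x_i\|^2$, close a scalar ODE for $f_i=\|x_i\|^2$, and conclude by uniqueness—removing that small circularity.
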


\begin{proof}
By direct calculations, we have
\begin{align*}
\frac{d}{dt}\langle x_i, x_i\rangle&=2\langle x_i,\dot{x}_i\rangle
=2\left\langle x_i, \Omega_i x_i+\frac{\kappa}{N}\sum_{j=1}^N(e-\langle x_i, e\rangle x_i)\eta(x_i)I(x_j)\right\rangle=0.
\end{align*}
Here, we used the facts $\langle x_i, \Omega_i x_i\rangle=0$ and $\langle x_i, e-\langle x_i, e\rangle x_i\rangle=0$. Finally, we can obtain that
\[
\|x_i(t)\|^2=\|x_i(0)\|^2=1
\]
for all $t\geq0$.
\end{proof}
From now on, we consider only the case when $\eta\equiv 1$. If we substitute $\eta\equiv 1$ into \eqref{C-4}, we have the following system:
\begin{align}
\begin{aligned}\label{C-5}
\begin{cases}
&\dot{x}_i=\Omega_ix_i+\displaystyle\frac{\kappa}{N}(e-\langle x_i, e\rangle x_i)\sum_{j=1}^NI(x_j)\quad t>0,\\
&x_i(0)=x_i^0\in\bbs^d,\quad \forall i\in \mathcal{N}.
\end{cases}
\end{aligned}
\end{align}

\subsection{A framework of the influence function $I$.}\label{sec:3.2}

In this subsection, we introduce a framework of the influence function $I$. Here we consider how to approximate the Dirac-delta distribution.\\

We define the angle between two vectors $x, y\in\bbs^d$ as follows:
\[
\angle(x, y)=\arccos\langle x, y\rangle.
\]
This angle($\angle$) can be considered as the length of a shortest geodesic path between $x$ and $y$ on the unit sphere $\bbs^d$. From this reason, the angle function is a metric on $\bbs^d$, so we have the triangle inequality expressed as follows:
\begin{align}\label{C-5-1}
\angle(x, y)+\angle(y, z)\geq \angle(x, z)\quad \forall ~x, y, z\in\bbs^d.
\end{align}

Since the influence function $I$ is a relaxed smooth function of the Dirac-delta distribution $\delta(x-e)$, we can assume that $I$ is positive and the support of $I$ is a subset of $\mathcal{D}_{\frac{\pi}{2}-\alpha}$ for some $0<\alpha<\frac{\pi}{2}$, where $\mathcal{D}_\theta=\left\{x\in\bbs^d: \angle( e, x)\leq \theta\right\}$. i.e., $I(x)\geq0$ for all $x\in\bbs^d$ and $I(x)=0$ if $\angle(e, x)\geq\frac{\pi}{2}-\alpha$ for the fixed $0<\alpha\leq \frac{\pi}{2}$. We can draw $\mathcal{D}_{\frac{\pi}{2}-\alpha}$ on the unit sphere as Figure \ref{Fig1}.

\begin{center}
\begin{figure}[h]
\begin{tikzpicture}
  \coordinate (O) at (0,0);

  % ball background color
  \shade[ball color = white, opacity = 0.2] (0,0) circle [radius = 2cm];

  % cone
  \begin{scope}
    \def\rx{0.71}% horizontal radius of the ellipse
    \def\ry{0.15}% vertical radius of the ellipse
    \def\z{0.725}% distance from center of ellipse to origin

    \path [name path = ellipse]    (0,\z) ellipse ({\rx} and {\ry});
    \path [name path = horizontal] (-\rx,\z-\ry*\ry/\z)
                                -- (\rx,\z-\ry*\ry/\z);
    \path [name intersections = {of = ellipse and horizontal}];
  \end{scope}
  % label of cone
  %\draw (0.25,0.4) -- (0.9,0.1) node at (1.05,0.0) {$q$};
  \draw[thick] (0.5,0) arc (0:45:0.5) node at (0.7, 0.3) {$\alpha$};
  % ball
  \draw (O) circle [radius=2cm];
  % label of ball center point
  \filldraw (O) circle (1pt) node[below] {$O$};
  \filldraw (0, 2) circle (1pt) node [above] {$e$};
  % radius
  \draw[densely dashed] (O) --(-1.33,1.33);
  \draw [densely dashed](O) -- (1.33,1.33);
  %\draw (O) -- (0, 2);
  \draw [densely dashed](-2, 0) -- (2, 0);
  % cut of ball surface
  \draw (-1.35,1.47) arc [start angle = 140, end angle = 40,
    x radius = 17.6mm, y radius = 14.75mm];
  \draw[densely dashed] (-1.36,1.46) arc [start angle = 170, end angle = 10,
    x radius = 13.8mm, y radius = 3.6mm];
  \draw (-1.29,1.52) arc [start angle=-200, end angle = 20,
    x radius = 13.75mm, y radius = 3.15mm];
    \draw (-2,0) arc [start angle=-180, end angle = 0,
    x radius = 20mm, y radius = 6mm];
    \draw[densely dashed] (-2,0) arc [start angle=180, end angle = 0,
    x radius = 20mm, y radius = 6mm];
  % label of cut of ball surface
  \draw[->] (-1,2) -- (-0,1.5) node at (-1.2,2.2) {$\mathcal{D}_{\frac{\pi}{2}-\alpha}$};
\end{tikzpicture}
\caption{Geometric meaning of $\mathcal{D}_{\frac{\pi}{2}-\alpha}$ when $d=2$.}
\label{Fig1}
\end{figure}
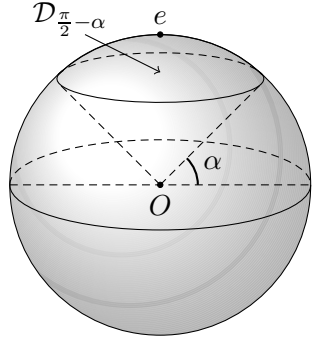
\end{center}
The Winfree sphere model is defined on the sphere $\bbs^d$, so we assume the symmetry of $I$:
\[
\angle(e, x)=\angle(e, y)\quad\Longrightarrow \quad I(x)=I(y).
\]
This implies that there exists $\tilde{I}: [0, \pi]\to [0, \infty)$ such that
\[
I(x)=\tilde{I}(\angle(e, x))\quad\forall ~x\in\bbs^d.
\]
Since $\mathrm{supp}I\subseteq \mathcal{D}_{\frac{\pi}{2}-\alpha}$, we have $\mathrm{supp}\tilde{I}\subseteq \left[0, \frac{\pi}{2}-\alpha\right]$. Physically, closer particle makes more influence. In the work of $I$, if $x$ is closer to $e$, then $I(x)$ should be larger. From this reason, we assume that $\tilde{I}$ is a decreasing function. Since $\tilde{I}$ is decreasing, the support of $\tilde{I}$ can be expressed as a closed interval $[0, \beta]$ for some $0<\beta\leq \frac{\pi}{2}-\alpha$. Finally, we assume the Lipschitz continuity of $\tilde{I}$ to guarantee the uniqueness and the existence of a solution for system \eqref{C-5}. We can summarize conditions of $\tilde{I}:[0, \pi]\to [0, \infty)$ as follows:

\noindent$(\mathcal{C}1)$: $\tilde{I}$ is a decreasing function.

\noindent$(\mathcal{C}2)$: the support of $\tilde{I}$ forms $[0, \beta]$ for some $0<\beta\leq \frac{\pi}{2}-\alpha$.

\noindent$(\mathcal{C}3)$: the maximum of $\tilde{I}$ is normalized as $\tilde{I}(0)=1$.

\noindent$(\mathcal{C}4)$: $\tilde{I}$ is Lipschitz continuous.\\

From now on, we define $0<\beta<\frac{\pi}{2}$ which satisfies the follows:
\[
\mathrm{supp}{\tilde{I}}=[0, \beta],
\]
and we use this definition throughout the whole paper. For example, we consider two functions $\tilde{I}_1$ and $\tilde{I}_2$ given as follows:
\begin{align*}
\tilde{I}_1(x)=\begin{cases}
1-x,\quad&x\in[0, 1]\\
0,&x\in[1, \pi]
\end{cases},\quad
\tilde{I}_2(x)=\begin{cases}
(2x-1)^2,\quad&x\in[0, 0.5]\\
0,&x\in[0.5, \pi]
\end{cases}.
\end{align*}
The graph of them are plotted in Figure \ref{Fig2}. We know that $\tilde{I}_1$ and $\tilde{I}_2$ are decreasing,
\[
\begin{cases}
\mathrm{supp}\tilde{I}_1=[0, 1],\quad \beta_1=1,\quad \beta_1< \frac{\pi}{2},\\
\mathrm{supp}\tilde{I}_2=[0, 0.5],\quad \beta_2=0.5,\quad \beta_2< \frac{\pi}{2}, 
\end{cases}
\]
$\tilde{I}_1(0)=\tilde{I}_2(0)=1$, and of course they are Lipschitz continuous. So $\tilde{I}_1$ and $\tilde{I}_2$ satisfy conditions $(\mathcal{C}1)$, $(\mathcal{C}2)$, $(\mathcal{C}3)$, and $(\mathcal{C}4)$.

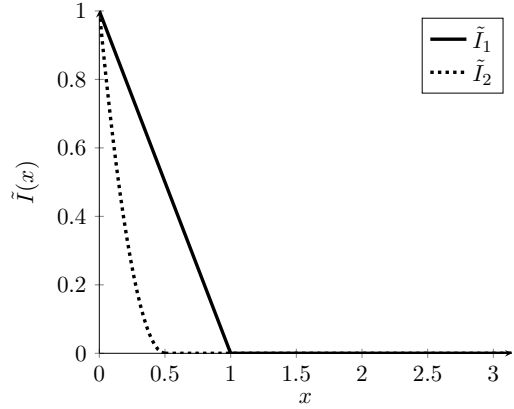
\begin{figure}[h]
\begin{tikzpicture}[scale = 0.8]
\begin{axis}[
    axis lines = left,
    xlabel = $x$,
    ylabel = {$\tilde{I}(x)$},
    every axis plot/.append style={ultra thick}
]
\addplot [
    domain=0:1,
    samples=3, 
]
{1-x};
\addlegendentry{$\tilde{I}_1$}
\addplot [
    domain=0:0.5, 
    samples=20, 
    dotted,
    ]
    {4*(x-0.5)^2};
\addplot  [
    domain=0.5:3.14, 
    samples=2, 
    dotted,
    ]
    {0};
    \addplot  [
    domain=1:3.14, 
    samples=2,
    ]
    {0};
\addlegendentry{$\tilde{I}_2$}
\end{axis}
\end{tikzpicture}
\caption{Examples of $\tilde{I}$.}
\label{Fig2}
\end{figure}

From the simple calculation, we have the following relation:
\begin{align*}
\mathrm{Lip}I&=\sup_{x\neq y}\frac{|I(x)-I(y)|}{\|x-y\|}=\sup_{x\in\bbs^d}\limsup_{y\to x}\frac{|I(x)-I(y)|}{\|x-y\|}\\
&=\sup_{x\in\bbs^d}\limsup_{y\to x}\left(\frac{|\tilde{I}(\angle(x, e))-\tilde{I}(\angle(y, e))|}{|\angle(x, e)-\angle(y, e)|}\cdot\frac{|\angle(x, e)-\angle(y, e)|}{\|x-y\|}\right)\\
&=\sup_{x\in\bbs^d}\limsup_{y\to x}\left(\frac{|\tilde{I}(\angle(x, e))-\tilde{I}(\angle(y, e))|}{|\angle(x, e)-\angle(y, e)|}\right)
=\mathrm{Lip}\tilde{I}.
\end{align*}
This implies that if $\tilde{I}$ is a Lipschitz continuous function, then $I$ is also a Lipschitz continuous function, and they have the same Lipschitz constant.

\setcounter{equation}{0}
\section{Emergent dynamics of the Winfree sphere model}\label{sec:4}
In this section, we study the emergent dynamics of the Winfree sphere model \eqref{C-5}.

\subsection{Existence of an Invariant set}
In this subsection, we study some invariant sets of the Winfree sphere model when a sufficiently large coupling strength $\kappa$ is given. We have the following simple lemma.
\begin{lemma}\label{lemma4.1}
Let $\{x_i\}$ be a solution of the Winfree sphere model \eqref{C-5}. Then we have the following inequality:
\begin{align}\label{D-3}
\frac{d}{dt}\angle(x_j, e)\leq \|\Omega_j\|_{\mathrm{op}}-\frac{\kappa}{N}\sin\angle(x_j, e)\sum_{k=1}^N\tilde{I}(\angle(x_k, e)).
\end{align}

\end{lemma}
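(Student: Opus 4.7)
The plan is to compute $\frac{d}{dt}\angle(x_j,e)$ directly from the equation of motion and then bound the contribution of the natural frequency term using the skew-symmetry of $\Omega_j$. Set $\theta_j := \angle(x_j,e) = \arccos\langle x_j,e\rangle$, so that $\cos\theta_j = \langle x_j,e\rangle$. Differentiating in $t$ on a set where $\sin\theta_j > 0$ gives
\begin{equation*}
\sin\theta_j \cdot \dot{\theta}_j \;=\; -\langle \dot{x}_j, e\rangle.
\end{equation*}

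Next I would substitute the Winfree sphere dynamics \eqref{C-5} into $\langle \dot{x}_j,e\rangle$. The coupling term produces $\frac{\kappa}{N}(1 - \langle x_j,e\rangle^2)\sum_k I(x_k) = \frac{\kappa}{N}\sin^2\theta_j \sum_k \tilde{I}(\angle(x_k,e))$, since $1-\cos^2\theta_j = \sin^2\theta_j$ and $I(x_k)=\tilde{I}(\angle(x_k,e))$ by the symmetry of the influence function imposed in Subsection \ref{sec:3.2}. Dividing by $\sin\theta_j$ cancels one sine and yields the desired coupling term with the correct sign. What remains is the contribution $-\langle \Omega_j x_j, e\rangle / \sin\theta_j$, which I would bound as follows: since $\Omega_j$ is skew-symmetric, $\langle \Omega_j x_j, x_j\rangle = 0$, so $\Omega_j x_j$ is orthogonal to $x_j$. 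Decomposing $e = \cos\theta_j \, x_j + (e - \cos\theta_j \, x_j)$ and using this orthogonality gives
\begin{equation*}
\langle \Omega_j x_j, e\rangle \;=\; \langle \Omega_j x_j,\, e - \cos\theta_j\, x_j\rangle,
\end{equation*}
so Cauchy--Schwarz together with $\|e - \cos\theta_j\, x_j\| = \sin\theta_j$ and $\|\Omega_j x_j\| \le \|\Omega_j\|_{\mathrm{op}}$ yields $|\langle \Omega_j x_j, e\rangle| \le \|\Omega_j\|_{\mathrm{op}}\, \sin\theta_j$. Dividing by $\sin\theta_j$ produces the claimed bound $\|\Omega_j\|_{\mathrm{op}}$. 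Assembling these pieces gives \eqref{D-3}.

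The main obstacle is the degeneracy at $\sin\theta_j = 0$, i.e.\ when $x_j$ hits $e$ or $-e$, where $\theta_j$ is not smooth in time. I would handle this by interpreting \eqref{D-3} as an inequality for the upper Dini derivative $D^+\theta_j$: on the open set $\{\sin\theta_j>0\}$ the computation above is literal, and at a point where $\theta_j = 0$ the right-hand side is nonnegative (the coupling term vanishes and $\|\Omega_j\|_{\mathrm{op}}\geq 0$) while the left-hand side of a Dini derivative is automatically bounded by this quantity, so the inequality holds trivially; the $\theta_j=\pi$ case can only be reached instantaneously and is handled the same way. Since the lemma will be applied downstream only as a differential inequality on the angle (for invariance and transition-time arguments in Proposition \ref{prop4.1} and Theorem \ref{thm4.1}), this Dini-derivative reading is enough.
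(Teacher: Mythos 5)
Your proposal is correct and is essentially the paper's own argument: dividing $-\langle\dot x_j,e\rangle$ by $\sin\theta_j$ is exactly the paper's step of pairing $\dot x_j$ with the unit vector $n_e(x_j)=(e-\langle x_j,e\rangle x_j)/\|e-\langle x_j,e\rangle x_j\|$, and your use of skew-symmetry plus Cauchy--Schwarz to get $|\langle\Omega_j x_j,e\rangle|\le\|\Omega_j\|_{\mathrm{op}}\sin\theta_j$ is the same bound the paper obtains from $|\langle n_e(x_j),\Omega_j x_j\rangle|\le\|\Omega_j\|_{\mathrm{op}}$. Your Dini-derivative treatment of the degenerate points $\sin\theta_j=0$ is a welcome extra precaution that the paper omits, but it does not change the substance of the proof.
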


\begin{proof}
Assume that $\{x_i\}_{i=1}^N\subset\mathcal{D}_\gamma$ and $x_j\in\partial\mathcal{D}_\gamma$. We define a unit vector $n_e(x)$ for all $x\in\bbs^d$ as follows:
\[
n_e(x)=\frac{e-\langle x, e\rangle x}{\|e-\langle x, e\rangle x\|}.
\]
We provide a geometric meaning of $n_e(x)$ in Figure \ref{Fig3}. Then the definition of $n_e(x)$ yields
\[
-\frac{d}{dt}\angle(x_j, e)=\frac{\frac{d}{dt}\cos(\angle(x_j, e))}{\sin\angle(x_j, e)}=\frac{\frac{d}{dt}\langle x_j, e\rangle}{\sqrt{1-\langle x_j, e\rangle^2}}=\frac{\langle \dot{x}_j, e-\langle x_j, e\rangle x_j\rangle}{\sqrt{1-\langle x_j, e\rangle^2}}=\langle \dot{x}_j, n_e(x_j)\rangle.
\]
By simple calculations, we have
\begin{align*}
-\frac{d}{dt}\angle(x_j, e)&=\langle n_e(x_j), \dot{x}_j\rangle \\
&=\left\langle n_e(x_j), \Omega_jx_j+\frac{\kappa}{N}(e-\langle x_j, e\rangle x_j)\sum_{k=1}^NI(x_k)\right\rangle\\
&=\langle n_e(x_j), \Omega_j x_j\rangle+\frac{\kappa}{N}\|e-\langle x_j,e\rangle x_j\|\sum_{k=1}^N I(x_k)\\
&\geq-\|\Omega_j\|_{\mathrm{op}}+\frac{\kappa}{N}\sin\angle(x_j, e)\sum_{k=1}^N\tilde{I}(\angle(x_k, e)).
\end{align*}
This gives a desired result.
\end{proof}

\begin{center}
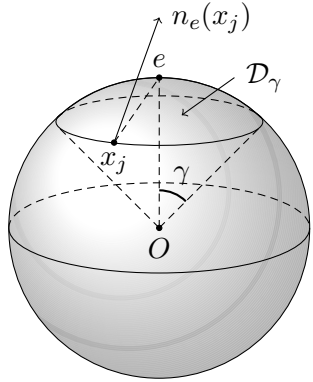
\begin{figure}[h]
\begin{tikzpicture}
  \coordinate (O) at (0,0);
  % ball background color
  \shade[ball color = white, opacity = 0.2] (0,0) circle [radius = 2cm];

  % cone
  \begin{scope}
    \def\rx{0.71}% horizontal radius of the ellipse
    \def\ry{0.15}% vertical radius of the ellipse
    \def\z{0.725}% distance from center of ellipse to origin

    \path [name path = ellipse]    (0,\z) ellipse ({\rx} and {\ry});
    \path [name path = horizontal] (-\rx,\z-\ry*\ry/\z)
                                -- (\rx,\z-\ry*\ry/\z);
    \path [name intersections = {of = ellipse and horizontal}];
  \end{scope}
  % label of cone
  %\draw (0.25,0.4) -- (0.9,0.1) node at (1.05,0.0) {$q$};
  \draw[thick] (0.3535,0.3535) arc (45:90:0.5) node at (0.3, 0.7) {$\gamma$};
  % ball
  \draw (O) circle [radius=2cm];
  % label of ball center point
  \filldraw (O) circle (1pt) node[below] {$O$};
  \filldraw (0, 2) circle (1pt) node [above] {$e$};
  % radius
  \draw[densely dashed] (O) --(-1.33,1.33);
  \draw [densely dashed](O) -- (1.33,1.33);
  %\draw (O) -- (0, 2);
  \draw [densely dashed](O) -- (0, 2);
  % cut of ball surfac
  \draw (-1.35,1.47) arc [start angle = 140, end angle = 40,
    x radius = 17.6mm, y radius = 14.75mm];
  \draw[densely dashed] (-1.36,1.46) arc [start angle = 170, end angle = 10,
    x radius = 13.8mm, y radius = 3.6mm];
  \draw (-1.29,1.52) arc [start angle=-200, end angle = 20,
    x radius = 13.75mm, y radius = 3.15mm];
    \draw (-2,0) arc [start angle=-180, end angle = 0,
    x radius = 20mm, y radius = 6mm];
    \draw[densely dashed] (-2,0) arc [start angle=180, end angle = 0,
    x radius = 20mm, y radius = 6mm];
  % label of cut of ball surface
  \draw[->] (1,2) -- (0.3,1.5) node at (1.4,2) {$\mathcal{D}_{\gamma}$};
  \filldraw (-0.6, 1.135) circle (1pt) node[below] {$x_j$};
  \draw[densely dashed] (-0.6, 1.135)--(0, 2);
  \draw[->] (-0.6, 1.135)--(0, 2.8) node at (0.7, 2.8) {$n_e(x_j)$};
\end{tikzpicture}
\caption{Geometric meaning of $n_e(x_j)$ in Lemma \ref{lemma4.1}.}
\label{Fig3}
\end{figure}
\end{center}

From Lemma \ref{lemma4.1}, we can prove the following proposition on an invariant set of the Winfree sphere model.

\begin{proposition}[Invariant set of the Winfree sphere model]\label{prop4.1}
Suppose that $0<\gamma<\frac{\pi}{2}$ and natural frequencies satisfy the following condition:
\[
\sin\gamma \tilde{I}(\gamma)> \frac{1}{\kappa}\max_{j\in\mathcal{N}}\|\Omega_j\|_{\mathrm{op}},
\]
and let $\{x_i\}$ be a solution of the Winfree sphere model \eqref{C-5}. Then, $\mathcal{D}_\gamma$ is an invariant set of the given system. i.e., if $\{x_i^0\}\subset \mathcal{D}_\gamma$ then $\{x_i(t)\}\subset \mathcal{D}_\gamma$ for all $t\geq0$.
\end{proposition}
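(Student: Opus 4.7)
The proposal is to combine the differential inequality of Lemma \ref{lemma4.1} with a maximum-angle argument showing that on the boundary $\partial\mathcal{D}_\gamma$ the radial component of the velocity is strictly inward. Set $M(t):=\max_{j\in\mathcal{N}}\angle(x_j(t),e)$. The goal is to show that $M(t)\leq \gamma$ for all $t\geq 0$ whenever $M(0)\leq \gamma$, which immediately gives $\{x_i(t)\}\subset\mathcal{D}_\gamma$.

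\textbf{Key step.} Suppose $M(t_0)=\gamma$ at some time $t_0\geq 0$ and pick an index $j$ realizing the maximum, i.e.\ $\angle(x_j(t_0),e)=\gamma$. By Lemma \ref{lemma4.1} applied at $t_0$,
\begin{equation*}
\frac{d}{dt}\angle(x_j,e)\bigg|_{t_0}\leq \|\Omega_j\|_{\mathrm{op}}-\frac{\kappa}{N}\sin\gamma\sum_{k=1}^{N}\tilde{I}(\angle(x_k(t_0),e)).
\end{equation*}
Since $\angle(x_k(t_0),e)\leq \gamma$ for every $k$ and since $\tilde{I}$ is decreasing by $(\mathcal{C}1)$, we have $\tilde{I}(\angle(x_k(t_0),e))\geq \tilde{I}(\gamma)$. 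Summing,
\begin{equation*}
\frac{d}{dt}\angle(x_j,e)\bigg|_{t_0}\leq \|\Omega_j\|_{\mathrm{op}}-\kappa\sin\gamma\,\tilde{I}(\gamma)<0,
\end{equation*}
by the assumed inequality $\sin\gamma\,\tilde{I}(\gamma)>\frac{1}{\kappa}\max_j\|\Omega_j\|_{\mathrm{op}}$. So every boundary-touching particle is strictly pushed back inward.

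\textbf{Conclusion via contradiction.} Assume for contradiction that there exists $t_1>0$ with $M(t_1)>\gamma$. By continuity of each $\angle(x_j,e)$, there is a first time $t_*=\inf\{t\geq 0:M(t)>\gamma\}$, and at this time $M(t_*)=\gamma$ while there is a sequence $s_n\downarrow t_*$ with $M(s_n)>\gamma$. Choose an index $j_*$ with $\angle(x_{j_*}(t_*),e)=\gamma$ (which exists by continuity and the definition of $t_*$). Applying the boundary estimate above to this $j_*$ gives $\frac{d}{dt}\angle(x_{j_*},e)\big|_{t_*}<0$, so $\angle(x_{j_*}(t),e)<\gamma$ in a right-neighborhood of $t_*$. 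Since the same argument applies to any index attaining $\gamma$ at $t_*$ and there are only finitely many indices, a short right-neighborhood exists on which $M(t)<\gamma$, contradicting the definition of $t_*$.

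\textbf{Expected obstacle.} The main delicate point is handling the non-smoothness of $t\mapsto M(t)$, since it is only Lipschitz and not everywhere differentiable. The cleanest way is to avoid differentiating $M$ and instead argue as above, picking at each critical time an index realizing the maximum and invoking Lemma \ref{lemma4.1} pointwise on each smooth curve $t\mapsto \angle(x_j(t),e)$. The finite cardinality of $\mathcal{N}$ makes the switching between indices harmless. No additional regularity beyond the well-posedness of \eqref{C-5} (guaranteed by Lipschitzness of $\tilde I$ from $(\mathcal{C}4)$) is needed.
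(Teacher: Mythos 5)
Your proof is correct and follows essentially the same route as the paper: apply Lemma \ref{lemma4.1} at a boundary index, use monotonicity of $\tilde I$ to bound the sum below by $N\tilde I(\gamma)$, and conclude the angle is strictly decreasing there. You simply make explicit the first-exit-time/continuity argument that the paper leaves implicit, which is a welcome (and harmless) refinement.
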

\begin{proof}
Let $\{x_i^0\}\subset \mathcal{D}_\gamma$, and we assume that there exists $x_j$ such that $ \langle x_j, e\rangle=\cos \gamma$. Then we have $I(x_k)\geq I(x_j)=\tilde{I}(\gamma)$ for all $k\in\mathcal{N}$. By an assumption $\sin\gamma \tilde{I}(\gamma)> \frac{1}{\kappa}\|\Omega_j\|_{\mathrm{op}}$ and Lemma \ref{lemma4.1}, we have
\begin{align*}
\frac{d}{dt}\angle(x_j, e)&\leq \|\Omega_j\|_{\mathrm{op}}-\frac{\kappa}{N}\sin\angle(x_j, e)\sum_{k=1}^N\tilde{I}(\angle(x_k, e))
<\|\Omega_j\|_{\mathrm{op}}-\kappa\sin\gamma\tilde{I}(\gamma)<0.
\end{align*}
Since $\frac{d}{dt}\angle(x_j, e)<0$, we know that $\mathcal{D}_\gamma$ is an invariant set of the given system. 
\end{proof}

Now we study complete aggregation of the Winfree sphere model. From now on, for simplicity, we set
\[
\varphi_i(t):=\angle(x_i(t), e),\quad \varphi_i^0:=\angle(x_i^0, e),\quad \forall ~i\in\mathcal{N},~t\geq0.
\]

\begin{lemma}\label{lemma4.2}
Suppose that there exists an index $j\in \mathcal{N}$ such that $\varphi_j^0<\beta$ and for some $\gamma\in [\varphi_j^0, \beta)$ such that
\[
\|\Omega_j\|_{\mathrm{op}}<\frac{\kappa}{N}\sin\gamma\tilde{I}(\gamma),
\]
and let $\{x_i\}$ be a solution of system \eqref{C-5}. Then, we have
\[
\varphi_j(t)\leq\gamma \quad\forall t\geq0.
\]
\end{lemma}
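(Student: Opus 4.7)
The approach is a standard barrier (continuity) argument built directly on top of Lemma \ref{lemma4.1}. Fix the index $j$ given in the hypothesis and write $\varphi_j(t)=\angle(x_j(t),e)$. Since $\tilde I$ is decreasing with support exactly $[0,\beta]$ and $\gamma<\beta$, one has $\tilde I(\gamma)>0$. Define
\[
T^*:=\sup\{\,t\ge 0 : \varphi_j(s)\le\gamma \text{ for all } s\in[0,t]\,\}.
\]
Because $\varphi_j^0\le\gamma$ and $\varphi_j$ is continuous, $T^*>0$; the goal is to show $T^*=\infty$ by contradiction.

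Assume $T^*<\infty$. Continuity of $\varphi_j$ together with the definition of the supremum yields $\varphi_j(T^*)=\gamma$ and produces a sequence $t_n\searrow T^*$ with $\varphi_j(t_n)>\gamma$; hence the (upper Dini) derivative satisfies $D^+\varphi_j(T^*)\ge 0$. On the other hand, applying Lemma \ref{lemma4.1} at $t=T^*$, keeping only the $k=j$ term in the non-negative sum (permissible since $\tilde I\ge 0$), and using $\tilde I(\varphi_j(T^*))=\tilde I(\gamma)$, one gets
\[
\frac{d}{dt}\varphi_j(T^*)\;\le\;\|\Omega_j\|_{\mathrm{op}}-\frac{\kappa}{N}\sin\gamma\,\tilde I(\gamma)\;<\;0,
\]
the last strict inequality being precisely the standing assumption. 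This contradicts $D^+\varphi_j(T^*)\ge 0$, so $T^*=\infty$ as required.

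The only minor subtlety concerns the regularity of $\varphi_j$ at $T^*$: because $\gamma<\beta<\pi/2$, the value $\langle x_j,e\rangle$ stays bounded away from $\pm 1$ on a neighborhood of $T^*$, so $\varphi_j=\arccos\langle x_j,e\rangle$ is genuinely $\mathcal{C}^1$ there and the identity from the proof of Lemma \ref{lemma4.1} is the classical derivative, not merely a Dini derivative. I therefore do not expect any real obstacle; the entire content of the argument is the observation that the very first term in the interaction sum already provides a strictly inward drift $-(\kappa/N)\sin\gamma\,\tilde I(\gamma)$ that beats $\|\Omega_j\|_{\mathrm{op}}$ whenever $\varphi_j$ tries to cross the level $\gamma$.
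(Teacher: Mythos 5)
Your proof is correct and is essentially the same as the paper's: the paper derives the scalar differential inequality $\dot{\varphi}_j\leq\|\Omega_j\|_{\mathrm{op}}-\frac{\kappa}{N}\sin\varphi_j\tilde{I}(\varphi_j)$ from Lemma \ref{lemma4.1} by keeping only the $k=j$ term, and then invokes "a simple argument of the differential inequality" — which is exactly the first-crossing/barrier argument you spell out in detail. Your version merely makes explicit the continuity and regularity points that the paper leaves implicit.
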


\begin{proof}

From Lemma \ref{lemma4.1}, we have
\[
\dot{\varphi}_j\leq \|\Omega_j\|_{\mathrm{op}}-\frac{\kappa}{N}\sin\varphi_j\sum_{k=1}^N\tilde{I}(\varphi_k).
\]
Since $\tilde{I}\geq0$, we have
\[
\dot{\varphi}_j\leq \|\Omega_j\|_{\mathrm{op}}-\frac{\kappa}{N}\sin\varphi_j\tilde{I}(\varphi_j).
\]
By a simple argument of the differential inequality, if there exists $\gamma\in[\varphi_j^0, \beta)$ such that
\[
\|\Omega_j\|_{\mathrm{op}}<\frac{\kappa}{N}\sin\gamma\tilde{I}(\gamma),
\]
then $\varphi_j(t)\leq\gamma$ for all $t\geq0$. 
\end{proof}

\begin{remark}
(1) In the condition of Lemma \ref{lemma4.2}, 
\[
\|\Omega_j\|_{\mathrm{op}}<\frac{\kappa}{N}\sin\gamma\tilde{I}(\gamma)
\]
is reasonable, since for a sufficiently large coupling strength $\kappa>0$ satisfies this condition.\\

\noindent(2) If $\|\Omega_j\|_{\mathrm{op}}=0$, then the given condition always satisfies for all $\gamma\in [\varphi_j^0, \beta)$.

\end{remark}

\begin{proposition}\label{P4.2}
Suppose that there exists an index $l\in \mathcal{N}$ such that $\varphi_l^0<\beta$ and for some $\gamma\in [\varphi_l^0, \beta)$ such that
\begin{align}\label{D-6}
\max_{i\in\mathcal{N}}\|\Omega_i\|_{\mathrm{op}}<\frac{\kappa}{N}\sin\gamma\tilde{I}(\gamma).
\end{align}
We also assume that 
initial data satisfy:
\begin{align}\label{D-7}
\varphi_j^0<\pi-\arcsin\left(\frac{N\|\Omega_j\|_{\mathrm{op}}}{\kappa\tilde{I}(\gamma)}\right).
\end{align}
Let $\{x_i\}$ be a solution of system \eqref{C-5}, then we have
\[
\liminf_{t\to\infty}\varphi_j(t)\leq\arcsin\left(\frac{N\|\Omega_j\|_{\mathrm{op}}}{\kappa\tilde{I}(\gamma)}\right)\quad\forall j\in\mathcal{N}.
\]
\end{proposition}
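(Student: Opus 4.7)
The plan is to exploit Lemma \ref{lemma4.2} applied to the distinguished index $l$ in order to produce a uniform-in-$t$ lower bound on $\sum_{k=1}^N \tilde{I}(\varphi_k(t))$, substitute this bound into the differential inequality of Lemma \ref{lemma4.1}, and then compare the resulting scalar inequality with an ODE whose unique attracting equilibrium in $[0,\pi)$ is precisely $\psi_j := \arcsin\!\left(\frac{N\|\Omega_j\|_{\mathrm{op}}}{\kappa\tilde{I}(\gamma)}\right)$.

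First, since $\gamma\in[\varphi_l^0,\beta)$ and $\|\Omega_l\|_{\mathrm{op}}\leq \max_{i\in\mathcal{N}}\|\Omega_i\|_{\mathrm{op}}<\frac{\kappa}{N}\sin\gamma\,\tilde{I}(\gamma)$ by \eqref{D-6}, Lemma \ref{lemma4.2} yields $\varphi_l(t)\leq\gamma$ for all $t\geq 0$. Since $\tilde{I}$ is non-negative and decreasing (condition $(\mathcal{C}1)$), this produces the uniform lower bound
\[
\sum_{k=1}^N \tilde{I}(\varphi_k(t))\;\geq\;\tilde{I}(\varphi_l(t))\;\geq\;\tilde{I}(\gamma),\qquad t\geq 0.
\]
Feeding this into Lemma \ref{lemma4.1} and using the identity $\|\Omega_j\|_{\mathrm{op}}=\frac{\kappa\tilde{I}(\gamma)}{N}\sin\psi_j$, which is just the definition of $\psi_j$, I obtain the scalar differential inequality
\[
\dot{\varphi}_j(t)\;\leq\;\frac{\kappa\tilde{I}(\gamma)}{N}\bigl(\sin\psi_j-\sin\varphi_j(t)\bigr),\qquad \forall\, j\in\mathcal{N}.
\]
Note that \eqref{D-6} forces $\psi_j<\gamma<\pi/2$, so $\pi-\psi_j\in(\pi/2,\pi)$ and \eqref{D-7} is a meaningful constraint on the initial data.

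Next, I compare $\varphi_j$ with the scalar ODE $\dot y=\frac{\kappa\tilde{I}(\gamma)}{N}(\sin\psi_j-\sin y)$ on $[0,\pi]$. Its equilibria are $\psi_j$ (attracting) and $\pi-\psi_j$ (repelling); a sign analysis of $\sin\psi_j-\sin y$ shows that the basin of attraction of $\psi_j$ is exactly $[0,\pi-\psi_j)$, which by \eqref{D-7} contains $\varphi_j^0$. Since the right-hand side is Lipschitz in $y$, the standard ODE comparison principle gives $\varphi_j(t)\leq y_j(t)$ for all $t\geq 0$, where $y_j$ solves this ODE with $y_j(0)=\varphi_j^0$, and a phase-line argument gives $y_j(t)\to\psi_j$ as $t\to\infty$. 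Therefore
\[
\limsup_{t\to\infty}\varphi_j(t)\;\leq\;\psi_j,
\]
which is strictly stronger than the claimed $\liminf$ bound.

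The only delicate point is verifying that $y_j$ really converges to $\psi_j$ and cannot cross the repelling equilibrium $\pi-\psi_j$; this is immediate from the monotone behaviour of $y_j$ on each side of $\psi_j$ dictated by the sign of $\sin\psi_j-\sin y$. I do not expect any further obstacle: the whole argument reduces to a one-dimensional phase-line analysis, and the crucial input that makes this reduction possible is the uniform bound $\tilde{I}(\varphi_l(t))\geq\tilde{I}(\gamma)$, which is why a single distinguished index $l$ with $\varphi_l^0<\beta$ (supplied by Lemma \ref{lemma4.2}) suffices to control the angle of every particle.
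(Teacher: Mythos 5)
Your argument is correct and follows essentially the same route as the paper: Lemma \ref{lemma4.2} applied to the distinguished index $l$ gives $\varphi_l(t)\leq\gamma$, the monotonicity of $\tilde{I}$ turns this into the lower bound $\frac{1}{N}\sum_k\tilde{I}(\varphi_k)\geq\frac{1}{N}\tilde{I}(\gamma)$, and substituting into Lemma \ref{lemma4.1} yields exactly the scalar inequality $\dot{\varphi}_j\leq\frac{\kappa\tilde{I}(\gamma)}{N}\bigl(\sin\psi_j-\sin\varphi_j\bigr)$ that the paper uses. The only difference is that you push the phase-line/comparison analysis through to the stronger conclusion $\limsup_{t\to\infty}\varphi_j(t)\leq\psi_j$, whereas the paper stops at the stated $\liminf$ bound after observing the sign of $\dot{\varphi}_j$ on $(\psi_j,\pi-\psi_j)$.
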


\begin{proof}
From Lemma \ref{lemma4.2} and the given condition, we know that $\varphi_j(t)\leq \gamma$ for all $t\geq0$. We substitute this result into the inequality \eqref{D-3} to get
\begin{align*}
\dot{\varphi_j}&\leq \|\Omega_j\|_{\mathrm{op}}-\frac{\kappa}{N}\sin\varphi_j \tilde{I}(\varphi_l)
=\|\Omega_j\|_{\mathrm{op}}-\frac{\kappa}{N}\sin\varphi_j \tilde{I}(\gamma).
\end{align*}
Then $\varphi_j$ decrease when
\[
\sin\varphi_j>\frac{N\|\Omega_j\|_{\mathrm{op}}}{\kappa \tilde{I}(\gamma)}\quad\Longleftrightarrow\quad \arcsin\left(\frac{N\|\Omega_j\|_{\mathrm{op}}}{\kappa\tilde{I}(\gamma)}\right)<\varphi_j<\pi-\arcsin\left(\frac{N\|\Omega_j\|_{\mathrm{op}}}{\kappa\tilde{I}(\gamma)}\right).
\]
Here, $\arcsin\left(\frac{N\|\Omega_j\|_{\mathrm{op}}}{\kappa\tilde{I}(\gamma)}\right)$ is well-defined and less than $\frac{\pi}{2}$ from the condition \eqref{D-6}. So, we can obtain that
\[
\liminf_{t\to\infty}\varphi_j(t)\leq\arcsin\left(\frac{N\|\Omega_j\|_{\mathrm{op}}}{\kappa\tilde{I}(\gamma)}\right),
\]
and this leads to a desired result.
\end{proof}
From this proposition, we have the following theorem about transition to invariant set $\mathcal{D}_\beta$ of the Winfree sphere model. We mimic the proof of Theorem \ref{thm2.1} to prove Theorem \ref{thm4.2}. The first, second, and third steps of the proof for Theorem \ref{thm2.1} correspond to Theorem \ref{thm4.1}, Lemma \ref{lemma4.3}, and argument between Lemma \ref{lemma4.3} and Theorem \ref{thm4.2}, respectively.\\

\begin{theorem}\label{thm4.1}
Suppose that there exists an index $l\in \mathcal{N}$ such that $\varphi_l^0<\beta$ and a coupling strength $\kappa$ is larger than the critical coupling strength $\kappa_c$:
\begin{align}\label{D-8}
\kappa> \kappa_c:=\frac{N\max_{j\in\mathcal{N}}\|\Omega_j\|_{\mathrm{op}}}{\sin\beta\tilde{I}(\gamma)}
\end{align}
for some $\gamma\in [\varphi_l^0, \beta)$, and initial data satisfy \eqref{D-7}. Let $\{x_i\}$ be a solution of system \eqref{C-5}. Then we have the following relation:
\begin{align}\label{D-9}
\liminf_{t\to\infty}\varphi_j(t)\leq \arcsin\left(\frac{\kappa_c\sin\beta}{\kappa}\right)<\beta.
\end{align}
If $\varphi_j(s)<\beta$ for some $s>0$, then $\varphi_j(t)<\beta$ for all $s<t$. i.e. there exists $T_j>0$  such that $x_j(t)\in\mathcal{D}_\beta$ for all $t>T_j$. Furthermore, we can find a specific $T_j$ as follows:
\begin{align}\label{D-10}
T_j=\frac{N}{\kappa\tilde{I}(\gamma)\cos\varphi_j^0}\ln\left(\frac{\varphi_j^0-\beta_*}{\beta-\beta_*}\right).
\end{align}
\end{theorem}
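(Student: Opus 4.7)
The plan is to decompose the statement into three layers: the asymptotic bound \eqref{D-9}, forward invariance of $\mathcal{D}_\beta$ once entered, and the explicit entry time \eqref{D-10}. The engine for all three is a single scalar comparison inequality obtained by combining Lemma \ref{lemma4.1} with a uniform lower bound on the influence sum. Concretely, applying Lemma \ref{lemma4.2} to the distinguished index $l$ keeps $\varphi_l(t)\leq\gamma$ for every $t\geq 0$, hence $\frac{1}{N}\sum_{k}\tilde{I}(\varphi_k)\geq \tilde{I}(\gamma)/N$ uniformly in $t$. Plugging this into \eqref{D-3} reduces the problem to the scalar inequality
\[
\dot{\varphi}_j\leq \|\Omega_j\|_{\mathrm{op}}-\frac{\kappa\tilde{I}(\gamma)}{N}\sin\varphi_j,\qquad j\in\mathcal{N},\ t\geq 0,
\]
which will drive every subsequent estimate.

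With $\beta_{*}:=\arcsin\bigl(N\|\Omega_j\|_{\mathrm{op}}/(\kappa\tilde{I}(\gamma))\bigr)$, hypothesis \eqref{D-8} rewrites as $\sin\beta>\sin\beta_{*}$, so $\beta_{*}<\beta$, and the right-hand side above is strictly negative on the open arc $(\beta_{*},\pi-\beta_{*})$. Condition \eqref{D-7} places $\varphi_j^0$ inside this arc (or already below $\beta_{*}$), so a standard scalar comparison principle forces $\varphi_j$ to be monotone decreasing as long as it exceeds $\beta_{*}$; in particular $\liminf_{t\to\infty}\varphi_j(t)\leq \beta_{*}=\arcsin(\kappa_c\sin\beta/\kappa)$, which is \eqref{D-9}. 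For the invariance statement, the same inequality shows that at $\varphi_j=\beta$ one has $\dot{\varphi}_j\leq \|\Omega_j\|_{\mathrm{op}}-\frac{\kappa\tilde{I}(\gamma)}{N}\sin\beta<0$ by \eqref{D-8}, so once $\varphi_j(s)<\beta$ the trajectory cannot cross $\beta$ from below. Together these yield a first entry time $T_j$ after which $x_j(t)\in\mathcal{D}_\beta$.

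To obtain the explicit formula \eqref{D-10} I would linearize the sine around $\beta_{*}$. Restricting to the regime $\varphi_j^0\in(\beta,\pi/2)$, the mean-value theorem combined with monotonicity of $\cos$ on $[0,\pi/2]$ yields $\sin\varphi_j-\sin\beta_{*}\geq\cos\varphi_j^0\,(\varphi_j-\beta_{*})$ for $\varphi_j\in[\beta_{*},\varphi_j^0]$. Substituting this lower bound back in gives the linear inequality
\[
\dot{\varphi}_j\leq -\frac{\kappa\tilde{I}(\gamma)\cos\varphi_j^0}{N}(\varphi_j-\beta_{*}),
\]
whence Gronwall's lemma produces $\varphi_j(t)-\beta_{*}\leq (\varphi_j^0-\beta_{*})\exp\bigl(-\kappa\tilde{I}(\gamma)\cos\varphi_j^0\,t/N\bigr)$. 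Setting $\varphi_j(T_j)=\beta$ and solving recovers exactly \eqref{D-10}.

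The hard part is reconciling the quantitative hypotheses in the first step. Lemma \ref{lemma4.2} demands $\|\Omega_l\|_{\mathrm{op}}<(\kappa/N)\sin\gamma\tilde{I}(\gamma)$, while \eqref{D-8} only gives the weaker bound with $\sin\beta$ in place of $\sin\gamma$; closing the gap requires either using the continuity and strict positivity of $\tilde{I}$ on $[0,\beta)$ to pick a refined threshold $\gamma'\in(\varphi_l^0,\beta)$ where the stronger inequality holds, or performing the bootstrap on $\varphi_l$ directly from \eqref{D-3} while tracking $\min_k\varphi_k(t)$. A secondary difficulty is that \eqref{D-7} permits $\varphi_j^0\geq\pi/2$, in which case $\cos\varphi_j^0\leq 0$ and the linearization collapses; this case should be handled in two stages, first exploiting the pure sign of the right-hand side on $(\beta_{*},\pi-\beta_{*})$ to bring $\varphi_j$ into $(0,\pi/2)$, then applying the linear Gronwall estimate to finish.
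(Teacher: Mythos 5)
Your proposal is correct and follows essentially the same route as the paper's proof: a uniform lower bound $I_c\geq \tilde{I}(\gamma)/N$ obtained from Lemma \ref{lemma4.2} applied to the distinguished index $l$, the scalar comparison inequality from Lemma \ref{lemma4.1}, the threshold $\beta_*=\arcsin(\kappa_c\sin\beta/\kappa)$, a mean-value linearization of $\sin\varphi_j-\sin\beta_*$ with the factor $\cos\varphi_j^0$, and a logarithmic (Gronwall) integration yielding \eqref{D-10}. The two difficulties you flag at the end are genuine, and the paper's own proof does not resolve them either: it invokes Proposition \ref{P4.2} under \eqref{D-8} even though $\sin\gamma\leq\sin\beta$ makes \eqref{D-8} strictly weaker than the hypothesis \eqref{D-6}, and it uses $\cos\tilde{\beta}\geq\cos\varphi_j^0$ without excluding $\varphi_j^0\geq\pi/2$, where the linearization degenerates and \eqref{D-10} itself becomes negative --- so your proposed patches (tightening $\gamma$ using continuity of $\tilde{I}$, and a two-stage argument to first bring $\varphi_j$ below $\pi/2$) are exactly the repairs the paper's argument also needs.
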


\begin{proof}
If we choose such $\kappa$ satisfies \eqref{D-8} and apply Proposition \ref{P4.2}, then we can obtain \eqref{D-9} easily. Since $\frac{\kappa_c\sin\beta}{\kappa}<\sin\beta$, we have $\arcsin\left(\frac{\kappa_c\sin\beta}{\kappa}\right)<\beta$. Now, we find a specific $T_j$ which satisfyies the statement of this theorem. If $\varphi_j^0\leq \beta$, then $T_j=0$ and it is not an interesting case. So, we assume that $\varphi_j^0>\beta$ for some $j\in\mathcal{N}$. Then we have the following differential inequality:
\begin{align*}
\dot{\varphi}_j\leq\|\Omega_j\|_{\mathrm{op}}-\frac{\kappa}{N}\sin\varphi_j\tilde{I}(\gamma)
<\frac{\kappa}{N}\tilde{I}(\gamma)\left(\frac{\kappa_c}{\kappa}\sin\beta-\sin\varphi_j\right)
\end{align*}
by definition of $\kappa_c$ in \eqref{D-8}. Now we define 
\[
\beta_*=\arcsin\left(\frac{\kappa_c}{\kappa}\sin\beta\right).
\]
Then we have
\[
\dot{\varphi}_j<-\frac{\kappa}{N}\tilde{I}(\gamma)(\sin\varphi_j-\sin\beta_*)=-\frac{\kappa}{N}\tilde{I}(\gamma)(\varphi_j-\beta_*)\cos\tilde{\beta},
\]
where $\tilde{\beta}\in[\beta_*, \varphi_j]$. Since $\cos\tilde{\beta}\geq \cos\varphi_j^0$, we get
\[
\dot{\varphi}_j<-\frac{\kappa}{N}\tilde{I}(\gamma)\cos\varphi_j^0(\varphi_j-\beta_*),
\]
and this yields
\[
\frac{d}{dt}\ln(\varphi_j-\beta_*)<-\frac{\kappa\tilde{I}(\gamma)\cos\varphi_j^0}{N}.
\]
Now, we set $\tilde{T}_j=\inf\{t\geq0: \varphi_j(t)\leq\beta\}$ and we integrate the above differential inequality on $t\in[0,  \tilde{T}_j]$ to get
\[
\ln\left(\frac{\varphi_j(\tilde{T}_j)-\beta_*}{\varphi_j^0-\beta_*}\right)<-\frac{\kappa\tilde{I}(\gamma)\cos\varphi_j^0\tilde{T}_j}{N}.
\]
This implies
\[
\tilde{T}_j<-\frac{N}{\kappa\tilde{I}(\gamma)\cos\varphi_j^0}\ln\left(\frac{\beta-\beta_*}{\varphi_j^0-\beta_*}\right)=\frac{N}{\kappa\tilde{I}(\gamma)\cos\varphi_j^0}\ln\left(\frac{\varphi_j^0-\beta_*}{\beta-\beta_*}\right).
\]
Here, we used the fact $\varphi_j(\tilde{T}_j)=\beta$. Since $T_j$ defined as \eqref{D-10} satisfies $\tilde{T}_j<T_j$, we know that such $T_j$ satisfies $x_j(t)\in\mathcal{D}_\beta$ for all $t>T_j$.
\end{proof}
From Theorem \ref{thm4.1}, we have the following remark.

\begin{remark}\label{rmk4.2}
If we define the index set $\mathcal{I}:=\{j: \varphi_j^0>\beta\}$ and we define $T=\sup_{j\in\mathcal{I}}T_j$, where $T_j$ defined as \eqref{D-10}. Then $\{x_i(t)\}\subset \mathcal{D}_\beta$ for all $t>T$. 
\end{remark}

\subsection{Uniform $\ell^1$-stability and phase-locked state}
In this subsection, we study uniform $\ell^1$ stability and the existence of an equilibrium solution under suitable condition. From these two properties, we prove the phase-locked state of the Winfree sphere model with a sufficiently large coupling strength $\kappa$. From now on, for the handy notation, we define $I_c(t):=\frac{1}{N}\sum_{k=1}^NI(x_k(t))$. 

\begin{lemma}[Uniform stability]\label{lemma4.3}
Let $\{x_i\}$ and $\{\tilde{x}_i\}$ be solutions of system \eqref{C-5} with the initial data $\{x_i^0\}$ and $\{\tilde{x}_i^0\}$, respectively. If the following conditions hold a priori:
\[
\{x_i\},\{\tilde{x}_i\}\subset \mathcal{D}_\gamma
\]
for some $0<\gamma<\beta$, and positive constant $C=\cos\gamma\tilde{I}(\gamma)-\mathrm{Lip}(\tilde{I})>0$. 
Then, we have
\[
\|\mathcal{X}(t)-\tilde{\mathcal{X}}(t)\|_{\ell^1}\leq e^{-Ct}\cdot \|\mathcal{X}^0-\tilde{\mathcal{X}}^0\|_{\ell^1}\quad t\geq0,
\]
where the $\ell^1$-distance between $\mathcal{X}$ and $\tilde{\mathcal{X}}$ given as follows:
\[
\|\mathcal{X}-\tilde{\mathcal{X}}\|_{\ell^1}=\sum_{i=1}^N\|x_i-\tilde{x}_i\|.
\]
\end{lemma}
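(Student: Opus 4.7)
The plan is to derive a Gr\"onwall-type differential inequality for $\|\mathcal{X}-\tilde{\mathcal{X}}\|_{\ell^1}$ particle by particle. Starting from
\[
\|x_i-\tilde{x}_i\|\,\frac{d}{dt}\|x_i-\tilde{x}_i\| = \langle x_i-\tilde{x}_i,\, \dot{x}_i-\dot{\tilde{x}}_i\rangle,
\]
I would subtract the two copies of \eqref{C-5}. The skew-symmetry of $\Omega_i$ immediately kills the linear contribution since $\langle x_i-\tilde{x}_i, \Omega_i(x_i-\tilde{x}_i)\rangle=0$, so only the Winfree coupling difference remains to be controlled.

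I would then split that coupling difference by adding and subtracting $(e-\langle x_i, e\rangle x_i)\sum_j I(\tilde{x}_j)$. The first piece, which carries the factor $\sum_j [I(x_j)-I(\tilde{x}_j)]$, is estimated using the already-recorded identity $\mathrm{Lip}(I)=\mathrm{Lip}(\tilde{I})$ from Section \ref{sec:3.2}, together with the elementary bound $\|e-\langle x_i,e\rangle x_i\|=\sin\varphi_i\leq 1$; after dividing by $\|x_i-\tilde{x}_i\|$, it contributes at most $\frac{\kappa}{N}\mathrm{Lip}(\tilde{I})\sum_j\|x_j-\tilde{x}_j\|$ to $\frac{d}{dt}\|x_i-\tilde{x}_i\|$.

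The dissipative contribution comes from the second piece. Using the algebraic identity
\[
\langle x_i,e\rangle x_i - \langle \tilde{x}_i,e\rangle \tilde{x}_i = \langle x_i-\tilde{x}_i,e\rangle\, x_i + \langle \tilde{x}_i,e\rangle(x_i-\tilde{x}_i),
\]
together with the spherical relation $\langle x_i-\tilde{x}_i, x_i\rangle = \frac{1}{2}\|x_i-\tilde{x}_i\|^2$ (which follows from $\|x_i\|=\|\tilde{x}_i\|=1$), the relevant inner product collapses cleanly to $\frac{1}{2}\langle x_i+\tilde{x}_i,e\rangle\,\|x_i-\tilde{x}_i\|^2$. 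Since $\{x_i\},\{\tilde{x}_i\}\subset\mathcal{D}_\gamma$, the coefficient satisfies $\frac{1}{2}\langle x_i+\tilde{x}_i,e\rangle\geq \cos\gamma$; and since $\gamma<\beta$ and $\tilde{I}$ is decreasing, $\sum_j I(\tilde{x}_j)\geq N\tilde{I}(\gamma)>0$. Combined, this piece yields a per-particle dissipation of the form $-\kappa\cos\gamma\,\tilde{I}(\gamma)\|x_i-\tilde{x}_i\|$ after dividing by $\|x_i-\tilde{x}_i\|$.

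Summing the resulting per-particle inequality over $i\in\mathcal{N}$ telescopes the double sum and produces
\[
\frac{d}{dt}\|\mathcal{X}-\tilde{\mathcal{X}}\|_{\ell^1}\leq -C\,\|\mathcal{X}-\tilde{\mathcal{X}}\|_{\ell^1},
\]
with $C = \cos\gamma\,\tilde{I}(\gamma) - \mathrm{Lip}(\tilde{I})>0$ by hypothesis; Gr\"onwall's inequality then delivers the stated exponential contraction. The main technical hurdle is the bookkeeping in the dissipative piece: the exact reduction $\langle x_i,e\rangle x_i - \langle \tilde{x}_i,e\rangle\tilde{x}_i \;\rightsquigarrow\; \frac{1}{2}\langle x_i+\tilde{x}_i,e\rangle\|x_i-\tilde{x}_i\|^2$, which uses the sphere constraint in an essential way, is what isolates the coefficient $\cos\gamma\,\tilde{I}(\gamma)$ rather than some weaker constant and explains why the hypothesis $\gamma<\beta$ (so that $\tilde{I}(\gamma)>0$) is needed. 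A minor technical point to handle is the non-differentiability of $t\mapsto\|x_i-\tilde x_i\|$ at times where it vanishes, which is bypassed by working with a one-sided (Dini) derivative or by estimating $\frac{d}{dt}\|x_i-\tilde x_i\|^2$ first and then extracting the square root in Gr\"onwall's lemma.
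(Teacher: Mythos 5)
Your proposal is correct and follows essentially the same route as the paper: a per-particle Gr\"onwall estimate in which the coupling difference is split into a Lipschitz part (controlled by $\mathrm{Lip}(I)=\mathrm{Lip}(\tilde I)$) and a dissipative part whose coefficient is bounded below by $\cos\gamma\,\tilde I(\gamma)$ using the a priori confinement in $\mathcal{D}_\gamma$; the paper reaches the same two terms by differentiating $\langle x_i,\tilde x_i\rangle$ directly, which yields the symmetric coefficient $\frac{1}{2}(\langle x_i,e\rangle I_c+\langle\tilde x_i,e\rangle\tilde I_c)$ in place of your $\frac{1}{2}\tilde I_c\langle x_i+\tilde x_i,e\rangle$, an immaterial difference. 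Note that both your derivation and the paper's actually produce the decay rate $\kappa C$ rather than $C$, so the stated bound $e^{-Ct}$ is a shared cosmetic slip of the lemma's statement rather than a flaw in your argument.
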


\begin{proof}
Let $\mathcal{X}:=\{x_i\}$ and $\tilde{\mathcal{X}}:=\{\tilde{x}_i\}$ be solutions to the system \eqref{C-5} with the initial data $\{x_i^0\}$ and $\{\tilde{x}_i^0\}$, respectively. From simple calculations, we have
\begin{align*}
-\frac{d}{dt}\langle x_i, \tilde{x}_i\rangle
&=-\langle x_i, \Omega_i\tilde{x}_i+\kappa \tilde{I}_c(e-\langle \tilde{x}_i, e\rangle \tilde{x}_i)\rangle-\langle \Omega_ix_i+\kappa I_c(e-\langle x_i, e\rangle x_i), \tilde{x}_i\rangle\\
&=-\kappa\Big(\tilde{I}_c(\langle x_i, e\rangle-\langle \tilde{x}_i, e\rangle\langle x_i, \tilde{x}_i\rangle)+I_c(\langle \tilde{x}_i, e\rangle-\langle x_i, e\rangle \langle x_i, \tilde{x}_i\rangle\Big).
\end{align*}
We use a relation $-\langle x_i, \tilde{x}_i\rangle=\frac{1}{2}\|x_i-\tilde{x}_i\|^2-1$ to obtain
\begin{align*}
\frac{1}{2}\frac{d}{dt}\|x_i-\tilde{x}_i\|^2&=\kappa(\langle x_i-\tilde{x}_i, e\rangle)(I_c-\tilde{I}_c)- \frac{\kappa}{2}(\langle x_i, e\rangle I_c+\langle \tilde{x}_i, e\rangle\tilde{I}_c)\|x_i-\tilde{x}_i\|^2\\
&\leq\frac{\kappa}{N}\sum_{k=1}^N\|x_i-\tilde{x}_i\|\cdot|I(x_k)-I(\tilde{x}_k)|- \frac{\kappa}{2}(\langle x_i, e\rangle I_c+\langle \tilde{x}_i, e\rangle\tilde{I}_c)\|x_i-\tilde{x}_i\|^2\\
&\leq\frac{\kappa}{N}\sum_{k=1}^N\mathrm{Lip}(I)\cdot\|x_i-\tilde{x}_i\|\cdot\|x_k-\tilde{x}_k\|- \kappa\cos\gamma \tilde{I}(\gamma) \|x_i-\tilde{x}_i\|^2,
\end{align*}
where $\mathrm{Lip}(I)$ is a Lipschitz constant of the function $I$. Finally, we get
\[
\frac{d}{dt}\|x_i-\tilde{x}_i\|\leq\frac{\kappa}{N}\sum_{k=1}^N\mathrm{Lip}(I)\cdot\|x_k-\tilde{x}_k\|- \kappa\cos\gamma \tilde{I}(\gamma) \|x_i-\tilde{x}_i\|.
\]
By definition of $\ell^1$-distance, we have
\begin{align*}
\frac{d}{dt}\|\mathcal{X}-\tilde{\mathcal{X}}\|_{\ell^1}&=\sum_{i=1}^N\frac{d}{dt}\|x_i-\tilde{x}_i\|\\
&=\sum_{i=1}^N\left(\frac{\kappa}{N}\sum_{k=1}^N\mathrm{Lip}(I)\cdot\|x_k-\tilde{x}_k\|- \kappa\cos\gamma \tilde{I}(\gamma) \|x_i-\tilde{x}_i\|\right)\\
&=-\kappa\Big(\cos\gamma\tilde{I}(\gamma)-\mathrm{Lip}(\tilde{I})\Big)\sum_{i=1}^N\|x_i-\tilde{x}_i\|.
\end{align*}
In the last inequality, we used that $\mathrm{Lip}I=\mathrm{Lip}\tilde{I}$. From the assumption $C=\cos\gamma\tilde{I}(\gamma)-\mathrm{Lip}(\tilde{I})>0$, we have
\[
\|\mathcal{X}(t)-\tilde{\mathcal{X}}(t)\|_{\ell^1}\leq e^{-Ct}\cdot \|\mathcal{X}^0-\tilde{\mathcal{X}}^0\|_{\ell^1}\quad t\geq0.
\]
This gives a desired result.
\end{proof}
\begin{remark}
Since $\cos\gamma\tilde{I}(\gamma)$ is a decreasing function of $\gamma$ on $[0, \pi/2]$ and $\cos 0\tilde{I}(0)=1$, if $\mathrm{Lip}(\tilde{I})<1$, there exists positive $\gamma^0$ which satisfies
\[
0\leq \gamma<\gamma^0\quad\Longrightarrow\quad \cos\gamma\tilde{I}(\gamma)-\mathrm{Lip}(\tilde{I})>0.
\]
\end{remark}

We consider restricted natural frequency matrices to use the result of the original Winfree model. For any $\Omega_i$, we assume that there exists a unit vector $n_i$ which perpendicular to $e$ such that $\Omega n=0$ for all $n\not\in \mathrm{span}\{n_i, e\}$. By the linear algebra, there exists $\nu_i\in\bbr$ such that
\[
\Omega_i n_i=-\nu_i e,\quad \Omega_i e=\nu_i n_i.
\]
Now we want to find a solution of
\begin{align}\label{D-20}
\Omega_ix_i+\kappa I_c(e-\langle x_i, e\rangle x_i)=0,\quad i\in\mathcal{N},
\end{align}
where $x_i\in\bbs^d$. Now we restrict the domain to $x_i\in \bbs^d\cap  \mathrm{span}\{n_i, e\}$. If there exists a solution in the restricted domain, then of course there exists a solution in the original domain $(\bbs^d)^N$. Now we substitute
\[
x_i=\cos\varphi_i e+\sin\varphi_i n_i
\]
into \eqref{D-20} to get
\[
\nu_i\cos\varphi_i n_i-\nu_i\sin\varphi_i e+\kappa I_c(e-\cos^2\varphi_i e-\cos\varphi_i\sin\varphi_i n_i)=0.
\]
Since $e\perp n_i$, we have
\[
\cos\varphi_i(\nu_i-\sin\varphi_i\kappa I_c)=0,\quad \sin\varphi_i(\nu_i- \sin\varphi_i\kappa I_c)=0.
\]
This is equivalent to 
\[
\nu_i-\kappa \tilde{I}_c\sin\varphi_i =0,\quad \forall i\in\mathcal{N},
\]
where $\tilde{I}_c=\frac{1}{N}\sum_{k=1}^N\tilde{I}(|\varphi_k|)$. Since $\tilde{I}_c>0$, if we set $\kappa \tilde{I}_c=\lambda$, we have
\[
\sin\varphi_i=\frac{\nu_i}{\lambda}.
\]
This yields
\[
\tilde{I}_c=\frac{1}{N}\sum_{k=1}^N\tilde{I}(\arcsin(|\nu_k|\lambda^{-1})).
\]
If there exists $\lambda$ such that
\begin{align}\label{D-21}
\frac{\lambda}{\kappa}-\frac{1}{N}\sum_{k=1}^N\tilde{I}(\arcsin(|\nu_k|\lambda^{-1}))=0,
\end{align}
then a solution exists. We know that the L.H.S. of \eqref{D-21} is an increasing function of
\[
\lambda\in \left[\frac{\max_{i\in\mathcal{N}}|\nu_i|}{\sin\beta}, \infty\right),
\]
since $\tilde{I}$ is decreasing and $\arcsin$ is increasing. Now, we set the R.H.S. of \eqref{D-21} as $F(\lambda)$. Then we have
\[
F\left(\frac{\max_{i\in\mathcal{N}}|\nu_i|}{\sin\beta}\right)=\frac{\max_{i\in\mathcal{N}}|\nu_i|}{\kappa\sin\beta}-\frac{1}{N}\sum_{k=1}^N\tilde{I}\left(\arcsin\left(\frac{|\nu_k|\sin\beta}{\max_{i\in\mathcal{N}}|\nu_i|}\right)\right).
\]
If we impose the following condition:
\[
\kappa> \kappa_c:=\frac{N\max_{j\in\mathcal{N}}\|\Omega_j\|_{\mathrm{op}}}{\sin\beta\tilde{I}(\gamma)},
\]
then we have
\begin{align}
\begin{aligned}\label{D-22}
F\left(\frac{\max_{i\in\mathcal{N}}|\nu_i|}{\sin\beta}\right)&=\frac{\max_{i\in\mathcal{N}}|\nu_i|}{\kappa\sin\beta}-\frac{1}{N}\sum_{k=1}^N\tilde{I}\left(\arcsin\left(\frac{|\nu_k|\sin\beta}{\max_{i\in\mathcal{N}}|\nu_i|}\right)\right)\\
&<\frac{\tilde{I}(\gamma)}{N}-\frac{1}{N}\sum_{k=1}^N\tilde{I}\left(\arcsin\left(\frac{|\nu_k|\sin\beta}{\max_{i\in\mathcal{N}}|\nu_i|}\right)\right)\\
&\leq\frac{\tilde{I}(\gamma)}{N}-\frac{1}{N}\tilde{I}\left(\arcsin\left(\frac{|\nu_i|\sin\beta}{\max_{i\in\mathcal{N}}|\nu_i|}\right)\right)=0.
\end{aligned}
\end{align}
Here we used $\max_{i\in\mathcal{N}}|\nu_i|=\max_{i\in\mathcal{N}}\|\Omega_i\|_{\mathrm{op}}$ in the second inequality. We also have
\begin{align}\label{D-23}
\lim_{\lambda\to\infty}F(\lambda)=\infty,
\end{align}
since $\lim_{x\to0}\tilde{I}(x)=0$. From relations \eqref{D-21} and \eqref{D-22}, we can apply the intermediate value theorem to obtain a solution of $F$. We can conclude that the existence of an equilibrium solution. Now we can apply Lemma \ref{lemma4.3} to obtain the following theorem.

\begin{theorem}\label{thm4.2}
Suppose that for any natural frequency matrix $\Omega_i$, there exists a unit vector $n_i$ which perpendicular to $e$ such that 
\[
\Omega n=0\quad \forall ~n\not\in \mathrm{span}\{n_i, e\},
\]
and  there exists an index $l\in \mathcal{N}$ such that $\varphi_l^0<\beta$ and for some $\gamma\in [\varphi_l^0, \beta)$. We also assume that the coupling strength $\kappa$ is larger than the critical coupling strength $\kappa_c$ defined in \eqref{D-8}, the initial data satisfy \eqref{D-7}, and the influence function $I$ satisfies 
\[
\cos\beta \tilde{I}(\beta)-\mathrm{Lip}(\tilde{I})>0.
\]
Let $\{x_i\}$ be a solution of system \eqref{C-5}. Then a solution converges to an equilibrium exponentially.
\end{theorem}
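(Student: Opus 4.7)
The plan is to combine the three ingredients already assembled in this section: the finite-time confinement of every trajectory near $e$ (Theorem \ref{thm4.1} and Remark \ref{rmk4.2}), the explicit construction of an equilibrium carried out just before the theorem via the intermediate value theorem applied to \eqref{D-21}, and the exponential $\ell^1$-contraction of Lemma \ref{lemma4.3}. Applying the contraction to the solution paired with the (constant) equilibrium will upgrade ``eventual confinement'' to ``exponential convergence''.

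First, I invoke Theorem \ref{thm4.1}: under $\kappa>\kappa_c$ and \eqref{D-7}, there is a time $T>0$ after which $\{x_i(t)\}\subset\mathcal{D}_\beta$. The sharper bound \eqref{D-9} actually gives $\liminf_{t\to\infty}\varphi_j(t)\le \arcsin(\kappa_c\sin\beta/\kappa)<\beta$, so by enlarging $T$ I may fix some $\gamma_\ast<\beta$ (as close to $\beta$ as needed) with $\{x_i(t)\}\subset\mathcal{D}_{\gamma_\ast}$ for all $t\ge T$. Second, the reduction preceding the theorem turns the equilibrium condition $\Omega_i x_i+\kappa I_c(e-\langle x_i,e\rangle x_i)=0$ on the slice $\mathrm{span}\{n_i,e\}$ into the scalar equation \eqref{D-21}; the sign change \eqref{D-22}--\eqref{D-23} together with the monotonicity of $F$ produces a root $\lambda$ via the intermediate value theorem, hence an equilibrium $\mathcal{X}^\infty=(x_1^\infty,\dots,x_N^\infty)$ with $\sin\varphi_i^\infty=\nu_i/\lambda$. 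Since $\lambda>\max_i|\nu_i|/\sin\beta$, one has $|\varphi_i^\infty|<\beta$, so after possibly shrinking $\gamma_\ast$ toward $\beta$ the equilibrium also lies in $\mathcal{D}_{\gamma_\ast}$.

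Third, I apply Lemma \ref{lemma4.3} to the time-shifted trajectory $\mathcal{X}(\cdot+T)$ and to the constant trajectory $\mathcal{X}^\infty$, both contained in $\mathcal{D}_{\gamma_\ast}$. Since $\cos\gamma\,\tilde{I}(\gamma)$ is decreasing in $\gamma$ on $[0,\pi/2]$ and the standing hypothesis $\cos\beta\,\tilde{I}(\beta)-\mathrm{Lip}(\tilde{I})>0$ passes by continuity to any $\gamma_\ast$ chosen sufficiently close to $\beta$, the contraction constant $C:=\cos\gamma_\ast\,\tilde{I}(\gamma_\ast)-\mathrm{Lip}(\tilde{I})>0$ required by the lemma is available, and one obtains
\[
\|\mathcal{X}(t)-\mathcal{X}^\infty\|_{\ell^1}\le e^{-C(t-T)}\,\|\mathcal{X}(T)-\mathcal{X}^\infty\|_{\ell^1},\qquad t\ge T,
\]
which is the claimed exponential convergence to an equilibrium.

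The main difficulty is the bookkeeping of thresholds: the $\gamma$ in Theorem \ref{thm4.1} (used to certify entry into $\mathcal{D}_\beta$), the $\gamma_\ast$ used to trap both $\mathcal{X}(t)$ and $\mathcal{X}^\infty$ simultaneously, and the $\gamma$ feeding the contraction constant in Lemma \ref{lemma4.3} must be chosen compatibly. They are all controlled by the same pair of standing hypotheses ($\kappa>\kappa_c$ and $\cos\beta\,\tilde{I}(\beta)>\mathrm{Lip}(\tilde{I})$), so a single valid choice of $\gamma_\ast<\beta$ is produced by a continuity and monotonicity argument; once this is arranged, the three ingredients slot together without further analytic work.
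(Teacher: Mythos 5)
Your proposal follows essentially the same route as the paper: the paper explicitly assembles Theorem \ref{thm4.1} (finite-time entry into $\mathcal{D}_\beta$), the intermediate-value-theorem construction of an equilibrium from \eqref{D-21}--\eqref{D-23}, and the $\ell^1$-contraction of Lemma \ref{lemma4.3} applied to the solution and the constant equilibrium trajectory, exactly as you do. The only cosmetic difference is that you justify the positivity of the contraction constant for $\gamma_\ast<\beta$ by continuity, whereas the monotonicity of $\cos\gamma\,\tilde{I}(\gamma)$ already gives $\cos\gamma_\ast\tilde{I}(\gamma_\ast)\ge\cos\beta\,\tilde{I}(\beta)>\mathrm{Lip}(\tilde{I})$ directly; this is immaterial.
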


\setcounter{equation}{0}
\section{Emergent dynamics of the identical Winfree sphere model}\label{sec:5}

In this section, we study the identical Winfree sphere model given as follows:
\begin{align}
\begin{aligned}\label{E-1}
\begin{cases}
&\dot{x}_i=\Omega x_i+\displaystyle\frac{\kappa}{N}\sum_{j=1}^N(e-\langle x_i, e\rangle x_i)I(x_j),\\
&x_i(0)=x_i^0\in\bbs^d,\quad \forall i\in\mathcal{N}.
\end{cases}
\end{aligned}
\end{align}
We have solution splitting property of the identical Winfree sphere model.
\begin{proposition}[Solution splitting property]\label{prop5.1}
Let $\{x_i\}$ be a solution of system \eqref{E-1} with $\Omega\in\mathrm{Skew}_{d+1}\bbr$ which satisfying $\Omega e=0$. If we substitute $y_i=\exp(-t\Omega )x_i$, then $\{y_i\}$ is a solution of the following system:
\[
\begin{cases}
\displaystyle\dot{y}_i=\frac{\kappa}{N}(e-\langle y_i, e\rangle y_i)\sum_{j=1}^NI(y_j),\\
y_i(0)=x_i^0,\quad \forall ~i\in \mathcal{N}.
\end{cases}
\]
\end{proposition}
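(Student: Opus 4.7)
The plan is a direct computation: differentiate the change of variables $y_i = \exp(-t\Omega)x_i$, plug in the equation \eqref{E-1} for $\dot{x}_i$, and verify that the natural-frequency part drops out while the coupling term transforms correctly. The two structural ingredients that make everything work are (i) $\exp(-t\Omega)$ is orthogonal and commutes with $\Omega$, and (ii) the standing hypothesis $\Omega e = 0$ together with the rotation-invariance of the influence function built into the framework of Subsection \ref{sec:3.2}.

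First I would compute
\[
\dot{y}_i = -\Omega\, e^{-t\Omega} x_i + e^{-t\Omega}\dot{x}_i = -\Omega y_i + e^{-t\Omega}\dot{x}_i,
\]
and substitute $\dot{x}_i = \Omega x_i + \frac{\kappa}{N}(e - \langle x_i, e\rangle x_i)\sum_{j=1}^N I(x_j)$. Because $\Omega$ commutes with $e^{-t\Omega}$, the identity $e^{-t\Omega}\Omega x_i = \Omega y_i$ cancels the $-\Omega y_i$ term. What remains is
\[
\dot{y}_i = \frac{\kappa}{N}\Big(e^{-t\Omega}e - \langle x_i,e\rangle\, e^{-t\Omega}x_i\Big)\sum_{j=1}^N I(x_j).
\]

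Next I would invoke $\Omega e = 0$, which implies $e^{-t\Omega}e = e$, to replace the first factor by $e$. Orthogonality of $e^{-t\Omega}$ gives
\[
\langle x_i, e\rangle = \langle e^{-t\Omega}x_i,\, e^{-t\Omega}e\rangle = \langle y_i, e\rangle,
\]
so $\langle x_i,e\rangle\, e^{-t\Omega}x_i = \langle y_i, e\rangle\, y_i$. For the influence factor, the same orthogonality yields $\angle(e, x_j) = \angle(e^{-t\Omega}e, e^{-t\Omega}x_j) = \angle(e, y_j)$, and by the symmetry assumption on $I$ in Subsection \ref{sec:3.2} (namely $I(x) = \tilde{I}(\angle(e,x))$) we conclude $I(x_j) = \tilde{I}(\angle(e,y_j)) = I(y_j)$ for every $j$. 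Putting these reductions together gives exactly
\[
\dot{y}_i = \frac{\kappa}{N}(e - \langle y_i, e\rangle y_i)\sum_{j=1}^N I(y_j),
\]
and the initial condition is immediate from $y_i(0) = e^{0}x_i^0 = x_i^0$.

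There is no real obstacle here: the proof is a book-keeping exercise. The only point worth flagging is the need to explicitly record the three ingredients, $\Omega e = 0$ to fix the attraction point, orthogonality of $e^{-t\Omega}$ to preserve inner products, and the axial symmetry of $I$ about $e$, so that the reader sees why the identical-$\Omega$ and $\Omega e = 0$ hypotheses together are precisely what is required for the splitting to close. One might also remark, as a sanity check, that $\|y_i(t)\| = 1$ is preserved (since $\exp(-t\Omega)$ is an isometry), so the transformed flow remains on $\bbs^d$, consistent with the target system.
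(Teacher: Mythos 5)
Your proposal is correct and follows essentially the same route as the paper's proof: differentiate $y_i=\exp(-t\Omega)x_i$, cancel the natural-frequency term, and use $\Omega e=0$, the orthogonality of $\exp(-t\Omega)$, and the axial symmetry $I(x)=\tilde{I}(\angle(e,x))$ to identify the transformed coupling term. No gaps.
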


\begin{proof}
From direct calculations, we have
\begin{align}
\begin{aligned}\label{E-1-1}
\dot{y}_i&=\exp(-t\Omega )\dot{x}_i-\Omega \exp(-t\Omega ) x_i
=\exp(-t\Omega )\Big(\dot{x}_i-\Omega x_i\Big)\\
&=\exp(-t\Omega )\Big(\Omega x_i+\displaystyle\frac{\kappa}{N}(e-\langle x_i, e\rangle x_i)\sum_{j=1}^NI(x_j)-\Omega x_i\Big)\\
&=\frac{\kappa}{N}\Big(\exp(-t\Omega )e-\langle \exp(-t\Omega )x_i, \exp(-t\Omega )e\rangle\exp(-t\Omega )x_i\Big)\sum_{j=1}^N I(\exp(-t\Omega )y_j)\\
&=\frac{\kappa}{N}\Big(\exp(-t\Omega )e-\langle y_i, \exp(-t\Omega )e\rangle y_i\Big)\sum_{j=1}^N I(\exp(-t\Omega )y_j).
\end{aligned}
\end{align}
By $\Omega e=0$, we have
\[
\frac{d}{dt}\exp(-t\Omega)e=-\exp(-t\Omega )(\Omega e)=0
\]
and this yields
\begin{align}\label{E-1-2}
e=\exp(-0\cdot \Omega )e=\exp(-t\Omega)e,\quad \forall t\in\bbr.
\end{align}
From this fact, we also have
\begin{align}\label{E-1-3}
\langle e, \exp(-t\Omega)y\rangle=\langle \exp(t\Omega) e,y\rangle=\langle e, y\rangle
\end{align}
and this implies
\begin{align}\label{E-1-4}
I(y_j)=\tilde{I}(\angle(y_j, e))=\tilde{I}(\angle(\exp(-t\Omega)y_j, e))=I(\exp(-t\Omega)y_j).
\end{align}
If we substitute the results of \eqref{E-1-2}, \eqref{E-1-3}, and \eqref{E-1-4} into \eqref{E-1-1}, then we get
\[
\dot{y}_i=\frac{\kappa}{N}(e-\langle y_i, e\rangle y_i)\sum_{j=1}^NI(y_j).
\]
\end{proof}

\subsection{Complete synchronization}
In this subsection, we study complete synchronization of the identical Winfree sphere model.
\begin{proposition}\label{prop5.2}
Let $\{x_i\}$ be a solution of solution \eqref{E-1}. Then we have the following identity:
\begin{align}\label{E-2}
\frac{d}{dt}\|x_i-x_j\|^2=-\kappa I_c\|x_i-x_j\|^2\left(\langle e, x_i\rangle+\langle e, x_j\rangle\right).
\end{align}
Furthermore, if $x_i^0\neq x_j^0$, then the above relation can be rewritten as follows:
\[
\frac{d}{dt}\ln\left(\|x_i-x_j\|^2\right)=-\kappa I_c \left(\langle e, x_i\rangle+\langle e, x_j\rangle\right).
\]
\end{proposition}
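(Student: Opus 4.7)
The plan is a direct computation via the product rule, followed by the invocation of the solution splitting property (Proposition \ref{prop5.1}) to dispose of the $\Omega$ contribution. More concretely, the Winfree sphere dynamics written in the compact form
\[
\dot{x}_i=\Omega x_i+\kappa I_c\bigl(e-\langle x_i,e\rangle x_i\bigr)
\]
will be substituted into
\[
\frac{d}{dt}\|x_i-x_j\|^2=2\langle x_i-x_j,\dot{x}_i-\dot{x}_j\rangle.
\]
The term coming from $\Omega$ produces $2\langle x_i-x_j,\Omega(x_i-x_j)\rangle$, which vanishes because $\Omega\in\mathrm{Skew}_{d+1}\bbr$.

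Next I would expand the coupling contribution. Since the $e$-parts cancel between $i$ and $j$, what remains is
\[
-2\kappa I_c\bigl\langle x_i-x_j,\langle x_i,e\rangle x_i-\langle x_j,e\rangle x_j\bigr\rangle.
\]
Using $\|x_i\|=\|x_j\|=1$, the inner product simplifies to
\[
\langle x_i,e\rangle\bigl(1-\langle x_i,x_j\rangle\bigr)+\langle x_j,e\rangle\bigl(1-\langle x_i,x_j\rangle\bigr)=\bigl(1-\langle x_i,x_j\rangle\bigr)\bigl(\langle x_i,e\rangle+\langle x_j,e\rangle\bigr).
\]
Finally, the polarization identity $\|x_i-x_j\|^2=2(1-\langle x_i,x_j\rangle)$ converts $1-\langle x_i,x_j\rangle$ into $\tfrac12\|x_i-x_j\|^2$, producing \eqref{E-2} exactly.

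For the logarithmic form, I would note that \eqref{E-2} is a linear scalar ODE in the quantity $u_{ij}(t):=\|x_i(t)-x_j(t)\|^2$ with bounded time-dependent coefficient $-\kappa I_c(\langle e,x_i\rangle+\langle e,x_j\rangle)$. By uniqueness of solutions to this linear ODE, if $u_{ij}(0)=\|x_i^0-x_j^0\|^2\neq 0$, then $u_{ij}(t)>0$ for all $t\geq 0$, so dividing \eqref{E-2} through by $\|x_i-x_j\|^2$ is legitimate and yields
\[
\frac{d}{dt}\ln\bigl(\|x_i-x_j\|^2\bigr)=-\kappa I_c\bigl(\langle e,x_i\rangle+\langle e,x_j\rangle\bigr).
\]
There is essentially no obstacle here; the only step that requires care is the algebraic simplification of the two $\langle\cdot,e\rangle$-terms, which is where the symmetric structure of the coupling (and the unit-norm constraint) is used crucially.
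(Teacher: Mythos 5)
Your proof is correct and follows essentially the same direct computation as the paper: expand the time derivative of $\|x_i-x_j\|^2$, kill the $\Omega$ contribution by skew-symmetry, and factor the coupling terms using $\|x_i\|=\|x_j\|=1$ and $\|x_i-x_j\|^2=2(1-\langle x_i,x_j\rangle)$; your uniqueness argument for the positivity of $\|x_i-x_j\|^2$ is a slightly more explicit justification of the logarithmic form than the paper gives. The opening mention of Proposition \ref{prop5.1} is superfluous (and would need $\Omega e=0$), but you never actually rely on it, so nothing is affected.
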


\begin{proof}
(i) From the simple calculation, we have 
\begin{align*}
\frac{d}{dt}\|x_i-x_j\|^2&=\frac{d}{dt}\left(2-2\langle x_i, x_j\rangle\right)\\
&=-2\langle \dot{x}_i, x_j\rangle-2\langle x_i, \dot{x}_j\rangle\\
&=-2\left\langle \Omega x_i+\kappa I_c (e-\langle x_i, e\rangle x_i), x_j\right\rangle-2\left\langle x_i, \Omega x_j+\kappa I_c (e-\langle x_j, e\rangle x_j)\right\rangle\\
&=-2\kappa I_c \left(\langle e, x_j\rangle-\langle x_i, e\rangle\langle x_i, x_j\rangle+\langle x_i, e\rangle-\langle x_j , e\rangle\langle x_i, x_j\rangle\right)\\
&=-2\kappa I_c (1-\langle x_i, x_j\rangle)\left(\langle e, x_i\rangle+\langle e, x_j\rangle\right)\\
&=-\kappa I_c \|x_i-x_j\|^2\left(\langle e, x_i\rangle+\langle e, x_j\rangle\right).
\end{align*}
\noindent(ii) Suppose initial data satisfy $x_i^0\neq x_j^0$, then we know that
\[
\|x_i(t)-x_j(t)\|\neq0
\]
for all $t\geq0$. This yields
\[
\frac{d}{dt}\ln\left( \|x_i-x_j\|^2\right)=-\kappa I_c \left(\langle e, x_i\rangle+\langle e, x_j\rangle\right).
\]
\end{proof}

From Proposition \ref{prop5.2}, we can show that $\angle(x_i, x_j)$ is a decreasing function under some suitable conditions.

\begin{theorem}\label{thm5.1}
Suppose that initial data satisfy the following relation:
\begin{align}\label{E-3}
\angle (x_i^0, x_j^0)<\frac{\pi}{2}-\beta\quad\forall~ i, j\in\mathcal{N},
\end{align}
and let $\{x_i\}$ be a solution of the identical Winfree sphere model $\eqref{E-1}$. Then we have
\[
\frac{d}{dt}\|x_i-x_j\|^2=-\kappa I_c\|x_i-x_j\|^2(\langle e, x_i\rangle+\langle e, x_j\rangle)\leq 0
\]
and
\[
\angle (x_i(t), x_j(t))\leq \angle(x_i^0, x_j^0)
\]
for all $i, j\in\mathcal{N}$ and $t\geq0$.
\end{theorem}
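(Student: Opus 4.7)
The first displayed identity is exactly Proposition \ref{prop5.2}, so the genuine content of the statement is the sign assertion together with the angular monotonicity. My plan is to verify $\langle e, x_i\rangle + \langle e, x_j\rangle \geq 0$ whenever the Winfree coupling is active, and then propagate this throughout the evolution via a standard bootstrap. Since $\tilde{I} \geq 0$, the factor $I_c(t) \geq 0$ is automatic, and when $I_c(t) = 0$ there is nothing to check: \eqref{E-2} gives $\frac{d}{dt}\|x_i - x_j\|^2 = 0$ at that instant, and the pairwise distance is momentarily frozen.

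When $I_c(t) > 0$, I would select any index $k$ with $\tilde{I}(\varphi_k(t)) > 0$; the support hypothesis $(\mathcal{C}2)$ forces $\varphi_k(t) < \beta$. The triangle inequality \eqref{C-5-1} for the angular metric then yields
\[
\varphi_i(t) = \angle(x_i(t), e) \leq \angle(x_i(t), x_k(t)) + \varphi_k(t) < \angle(x_i(t), x_k(t)) + \beta,
\]
so provided the running bound $\angle(x_i(t), x_k(t)) < \frac{\pi}{2} - \beta$ is in force at time $t$, one obtains $\varphi_i(t) < \frac{\pi}{2}$, that is $\langle e, x_i(t)\rangle > 0$, and likewise for $x_j$. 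The sign assertion in \eqref{E-2} follows immediately.

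To close the argument I would run a maximal-time continuation. Set $T_* := \sup\{T \geq 0 : \angle(x_i(t), x_j(t)) < \frac{\pi}{2} - \beta \text{ for all } t \in [0,T] \text{ and } i,j \in \mathcal{N}\}$. The strict hypothesis \eqref{E-3} together with continuity of the flow gives $T_* > 0$, and on $[0, T_*]$ the previous step delivers $\frac{d}{dt}\|x_i - x_j\|^2 \leq 0$ pair-by-pair. Via the monotone relation $\|x_i - x_j\|^2 = 2 - 2\cos\angle(x_i, x_j)$ on $[0, \pi]$, this forces $\angle(x_i(t), x_j(t)) \leq \angle(x_i^0, x_j^0) < \frac{\pi}{2} - \beta$ on $[0, T_*]$. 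If $T_*$ were finite, the strict inequality attained at $T_*$ combined with continuity would let us extend the time window, contradicting maximality; hence $T_* = \infty$ and both conclusions of the theorem hold globally.

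The obstacle I anticipate is logical rather than algebraic: the bootstrap must close without circularity, so the implication ``$\frac{d}{dt}\|x_i - x_j\|^2 \leq 0 \Rightarrow \angle(x_i, x_j)$ is non-increasing'' may be invoked only on time intervals where the running angular separation bound is already available for use in the triangle inequality step. The strict open condition $\angle(x_i^0, x_j^0) < \frac{\pi}{2} - \beta$ in the hypothesis is precisely what provides the slack needed for the continuation argument to propagate forward in time.
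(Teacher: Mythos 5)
Your proposal is correct and follows essentially the same route as the paper: the paper likewise splits into the case where some particle lies in $\mathcal{D}_\beta$ (so the triangle inequality $\angle(x_l,e)\leq\angle(x_l,x_k)+\angle(x_k,e)<\frac{\pi}{2}$ forces $\langle e,x_l\rangle>0$ for every $l$) and the case where all particles lie outside the support of $I$ (so $I_c=0$), and then closes with the same maximal-time continuation argument using the strict inequality in \eqref{E-3}. Your phrasing of the dichotomy in terms of $I_c=0$ versus $I_c>0$ is an equivalent reformulation of the paper's split on $\min_k\angle(x_k,e)$ relative to $\beta$.
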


\begin{proof}
Since $\angle(x_i^0, x_j^0)<\frac{\pi}{2}-\beta$ for all $i, j\in\mathcal{N}$, there exists $\varepsilon>0$ such that 
\[
\angle(x_i(t), x_j(t))<\frac{\pi}{2}-\beta \quad\forall~i, j\in\mathcal{N},~t\in[0, \varepsilon].
\]
Now we want to show that
\[
\angle(x_i(t), x_j(t))\leq \angle(x_i^0, x_j^0)\quad\forall~i, j\in\mathcal{N},~t\in[0, \varepsilon].
\]
To show this, we estimate of the temporal derivative of $\|x_i-x_j\|^2$. Here, we fix the time variable $0<t<\varepsilon$. We consider the following two cases.\\

\noindent$\bullet$ Case A (The case when $\min_{k\in\mathcal{N}}\angle(x_k, e)\leq\beta$): There is an index $m\in\mathcal{N}$ which satisfies
\[
\angle (x_m(t), e)=\min_{k\in\mathcal{N}}\angle(x_k(t), e).
\]
Since $\angle(\cdot, \cdot)$ is a metric on $\bbs^d$, we can apply the triangle inequality \eqref{C-5-1} as follows:
\[
\angle(x_l(t), e)\leq \angle (x_m(t), e)+\angle(x_l(t), x_m(t))<\beta+\left(\frac{\pi}{2}-\beta\right)=\frac{\pi}{2},\quad\forall~l\in\mathcal{N}.
\]
This means $\langle e, x_l(t)\rangle>0$ for all $l\in\mathcal{N}$. Since
\[
I_c(t)=\frac{\kappa}{N}\sum_{k=1}^NI(x_k(t))\geq \frac{\kappa}{N}\sum_{k=1}^NI(\beta)=0,
\]
we have
\[
-\kappa I_c(t)(\langle e, x_i(t)\rangle+\langle e, x_j(t)\rangle)\leq0.
\]
~

\noindent~$\bullet$ Case B (The case when $\min_{k\in\mathcal{N}}\angle(x_k(t), e)>\beta$): Recall that
\[
\min_{k\in\mathcal{N}}\angle(x_k(t), e)>\beta\quad\Longrightarrow \quad x_k^0\in\mathcal{D}_{\beta}^c\quad\forall~k\in\mathcal{N}.
\]
Since $\mathrm{supp}I=\mathcal{D}_{\beta}$, we have $I(x_k^0)=0$ for all $k\in \mathcal{N}$. So we have $I_c(t)=0$ and this implies
\[
-\kappa I_c(t)(\langle e, x_i(t)\rangle+\langle e, x_j(t)\rangle)=0.
\]
~

We combine the results of Case A and Case B, we have the following estimate:
\[
-\kappa I_c(t)(\langle e, x_i(t)\rangle+\langle e, x_j(t)\rangle)\leq0\quad\forall~i, j\in\mathcal{N},~t\in[0, \varepsilon].
\]
From the above relation and Proposition \ref{prop5.2}, we can obtain
\[
\angle(x_i(t), x_j(t))\leq \angle(x_i^0, x_j^0),\quad\forall~i, j\in\mathcal{N},~t\in[0, \varepsilon].
\]
Now we define $t^*\geq0$ as follows:
\[
t^*:=\sup\{t\geq0:\angle(x_i(s), x_j(s))\leq \angle(x_i^0, x_j^0)\quad\forall ~i, j\in\mathcal{N}, ~0\leq s\leq t\}.
\]
If we assume $t^*<\infty$, then it makes contradiction from the above argument. Finally, we have $t^*=\infty$ and this implies the desired result:
\[
\angle (x_j(t), x_i(t))\leq \angle (x_j^0, x_i^0)\quad\forall~ i, j\in\mathcal{N},\quad t\geq0.
\]
Furthermore, from this result, we can also obtain that the inequality
\[
-\kappa I_c(t)(\langle e, x_i(t)\rangle+\langle e, x_j(t)\rangle)\leq0
\]
holds for all $i, j\in\mathcal{N}$ and $t\geq0$.
\end{proof}

Next, to apply Barbalat's lemma, we show that the derivative of $x_i$ is a uniformly continuous function with respect to the time variable $t$.
\begin{lemma}\label{lemma5.1}
Let $\{x_i\}$ be a solution of the Winfree sphere model \eqref{C-5}. Then the derivative of $x_i$ uniformly continuous for all $i\in\mathcal{N}$.
\end{lemma}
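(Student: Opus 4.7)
The plan is to prove the stronger statement that $\dot{x}_i$ is Lipschitz continuous on $[0,\infty)$, from which uniform continuity follows immediately. The natural approach of bounding $\ddot{x}_i$ is unavailable because $I$ is only assumed to be Lipschitz (condition $(\mathcal{C}4)$), not $C^1$, so $t \mapsto I(x_j(t))$ may fail to be pointwise differentiable. I therefore proceed by direct difference estimates on the right-hand side of \eqref{C-5}, exploiting Lipschitz continuity and the a priori boundedness coming from $x_i(t)\in\bbs^d$.

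First, I would establish an a priori uniform bound on $\|\dot{x}_i\|$. Since $\|x_i\|=1$ we have $\|e-\langle x_i,e\rangle x_i\|=\sqrt{1-\langle x_i,e\rangle^{2}}\leq 1$, and by $(\mathcal{C}1)$ and $(\mathcal{C}3)$ we have $|I(x_j)|\leq \tilde{I}(0)=1$. Plugging these bounds into \eqref{C-5} gives
\[
M_i:=\sup_{t\geq 0}\|\dot{x}_i(t)\|\leq \|\Omega_i\|_{\mathrm{op}}+\kappa,
\]
so in particular $x_i$ is itself $M_i$-Lipschitz in $t$.

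Next, I would estimate $\|\dot{x}_i(t)-\dot{x}_i(s)\|$ by splitting the right-hand side of \eqref{C-5} into the drift $\Omega_i x_i$ and the coupling $(e-\langle x_i,e\rangle x_i)I_c$. The drift difference is controlled immediately by $\|\Omega_i\|_{\mathrm{op}} M_i |t-s|$. For the coupling, I would add and subtract $(e-\langle x_i(t),e\rangle x_i(t))I_c(s)$ to produce two pieces: one bounded using Lipschitz continuity of the polynomial map $x\mapsto e-\langle x,e\rangle x$ on $\bbs^d$ together with $|I_c(s)|\leq 1$, and the other bounded using $\|e-\langle x_i(t),e\rangle x_i(t)\|\leq 1$ together with $|I_c(t)-I_c(s)|$. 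The latter is where the Lipschitz hypothesis on $\tilde I$ is essential: using the identity $\mathrm{Lip}(I)=\mathrm{Lip}(\tilde{I})$ proved at the end of Section \ref{sec:3.2},
\[
|I_c(t)-I_c(s)|\leq \frac{1}{N}\sum_{j=1}^{N}\mathrm{Lip}(\tilde{I})\,\|x_j(t)-x_j(s)\|\leq \mathrm{Lip}(\tilde{I})\max_{j\in\mathcal{N}}M_j\cdot|t-s|.
\]

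Combining these pieces produces a bound
\[
\|\dot{x}_i(t)-\dot{x}_i(s)\|\leq L_i\,|t-s|
\]
for a finite constant $L_i$ depending only on $\kappa$, $N$, $\max_{j\in\mathcal{N}}\|\Omega_j\|_{\mathrm{op}}$, and $\mathrm{Lip}(\tilde{I})$. Lipschitz continuity implies uniform continuity, which is the claim. The main obstacle is the absence of a pointwise second derivative; the resolution is to carry out the entire argument at the level of finite differences and to rely on the Lipschitz regularity of $\tilde I$ rather than on any smoothness.
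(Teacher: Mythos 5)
Your proposal is correct and follows essentially the same route as the paper: both arguments note that the right-hand side of \eqref{C-5} is a Lipschitz function of $\mathcal{X}$ on the compact set $(\bbs^d)^N$ (using $\mathrm{Lip}(I)=\mathrm{Lip}(\tilde I)$), that $\mathcal{X}(t)$ is Lipschitz in $t$ because $\dot{x}_i$ is uniformly bounded, and conclude by composition that $\dot{x}_i$ is Lipschitz, hence uniformly continuous, in $t$. Your version merely makes the Lipschitz estimate of the vector field explicit via the drift/coupling decomposition, which the paper asserts more briefly.
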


\begin{proof}
Since only the Lipschitz continuity of the influence function $I$ is assumed, we cannot show the uniformly continuity of the derivative of $x_i$ from the uniformly boundedness of the second derivative of $x_i$. Since $I$ is a Lipschitz continuous function, we know that
\[
\mathcal{F}_i(\mathcal{X})=\Omega_ix_i+\frac{\kappa}{N}(e-\langle x_i, e\rangle x_i)\sum_{j=1}^NI(x_j)
\]
is a Lipschitz continuous function of $\mathcal{X}=\{x_i\}$. Let $K_i$ be a Lipschitz constant of $\mathcal{F}_i$. From the direct calculation, we have
\[
\left\|\frac{d}{dt}\mathcal{X}(t)\right\|^2= \sum_{i=1}^N\|\dot{x}_i\|^2=\sum_{i=1}^N\|\mathcal{F}_i(\mathcal{X})\|_F^2,
\]
and this yields the uniformly bounded of $\left\|\frac{d}{dt}\mathcal{X}(t)\right\|_F^2$, since $\mathcal{F}_i$ is continuous and the domain $(\bbs^d)^N$ is compact so the R.H.S. of the above inequality is uniformly bounded. So we have
\[
\|\mathcal{F}_i(\mathcal{X}(t_1))-\mathcal{F}_i(\mathcal{X}(t_2))\|\leq K_i\|\mathcal{X}(t_1)-\mathcal{X}(t_2)\|\leq K_i \left\|\frac{d}{dt}\mathcal{X}(t_*)\right\|\cdot|t_1-t_2|
\]
for some $t_*$ between $t_1$ and $t_2$. Therefore, we know that $\mathcal{F}_i(\mathcal{X}(t))$ is a Lipschitz continuous function of $t$. Finally, the derivative of $x_i$ is uniformly continuous for all $i\in\mathcal{N}$.
\end{proof}

\begin{remark}\label{rmk5.1}
Since the $x_i$ and its derivative are both uniformly continuous and bounded, we know that the inner product $\langle \dot{x}_i, x_j\rangle$ is also uniformly continuous for all $i, j\in\mathcal{N}$.
\end{remark}

From Theorem \ref{thm5.1} and Barbalat's lemma(Lemma \ref{lem2.2}), we can obtain the following corollary.

\begin{corollary}\label{coro5.1}
Let $\{x_i\}$ be a solution of the identical Winfree sphere model $\eqref{E-1}$. Let initial data satisfy
\[
\angle (x_j^0, x_i^0)<\frac{\pi}{2}-\beta\quad\forall~ i, j\in\mathcal{N}.
\]
Then, we have the following limit:
\[
\lim_{t\to\infty}I_c(\langle e, x_i\rangle+\langle e, x_j\rangle)\|x_i-x_j\|^2=0.
\]
\end{corollary}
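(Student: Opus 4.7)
The plan is to combine the monotonicity identity from Theorem \ref{thm5.1} with Barbalat's lemma (Lemma \ref{lem2.2}). Writing
\[
f_{ij}(t):=I_c(t)\bigl(\langle e,x_i(t)\rangle+\langle e,x_j(t)\rangle\bigr)\|x_i(t)-x_j(t)\|^2,
\]
Theorem \ref{thm5.1} gives $\frac{d}{dt}\|x_i-x_j\|^2=-\kappa f_{ij}(t)\le0$ under the standing assumption \eqref{E-3}. Hence $\|x_i(t)-x_j(t)\|^2$ is nonincreasing and bounded below by $0$, so it converges to a finite limit as $t\to\infty$.

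Next, I integrate the identity from $0$ to $t$:
\[
\kappa\int_0^t f_{ij}(s)\,ds=\|x_i^0-x_j^0\|^2-\|x_i(t)-x_j(t)\|^2.
\]
Letting $t\to\infty$, the right-hand side converges, so $\int_0^\infty f_{ij}(s)\,ds$ exists and is finite. Thus Barbalat's lemma will give $f_{ij}(t)\to 0$ provided we verify that $f_{ij}$ is uniformly continuous on $[0,\infty)$.

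The uniform continuity is where the work lies, but it is handled factor by factor. By Lemma \ref{lemma5.1}, each $\dot{x}_i$ is uniformly continuous and bounded (since $x_i\in\bbs^d$), so each $x_i$ itself is uniformly continuous. Consequently, by Remark \ref{rmk5.1}, the scalar functions $\langle e,x_i(t)\rangle$ and $\|x_i(t)-x_j(t)\|^2=2-2\langle x_i(t),x_j(t)\rangle$ are uniformly continuous and bounded. For $I_c$, I use that $\tilde{I}$ (hence $I$) is Lipschitz by $(\mathcal{C}4)$, so
\[
|I(x_k(t_1))-I(x_k(t_2))|\le \mathrm{Lip}(I)\,\|x_k(t_1)-x_k(t_2)\|,
\]
and uniform continuity of $x_k$ transfers to $I(x_k)$ and, by averaging, to $I_c$; boundedness of $I_c$ follows from $\tilde{I}(0)=1$ and $\tilde{I}\ge0$. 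A finite product of bounded uniformly continuous functions is uniformly continuous, so $f_{ij}$ is uniformly continuous.

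With uniform continuity established and the integral convergent, Lemma \ref{lem2.2} yields $\lim_{t\to\infty}f_{ij}(t)=0$, which is precisely the claimed limit. The main obstacle, modest as it is, is the uniform continuity verification for $I_c$: one must chain Lipschitz continuity of $\tilde{I}$ with uniform continuity of the trajectories $x_k$ supplied by Lemma \ref{lemma5.1}, rather than appealing to a second derivative bound that is unavailable under only Lipschitz hypotheses on $I$.
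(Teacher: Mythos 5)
Your proposal is correct and follows essentially the same route as the paper: use the identity and sign from Theorem \ref{thm5.1} to get monotone convergence of $\|x_i-x_j\|^2$, then apply Barbalat's lemma to its derivative, with uniform continuity supplied by Lemma \ref{lemma5.1} and Remark \ref{rmk5.1}. Your factor-by-factor verification of the uniform continuity of $I_c(\langle e,x_i\rangle+\langle e,x_j\rangle)\|x_i-x_j\|^2$ is just a more explicit version of what the paper compresses into the citation of Remark \ref{rmk5.1}.
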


\begin{proof}
From Theorem \ref{thm5.1}, we have
\[
\frac{d}{dt}\|x_i-x_j\|^2=-\kappa I_c(\langle e, x_i\rangle+\langle e, x_j\rangle)\|x_i-x_j\|^2\leq0.
\]
This implies that $\|x_i-x_j\|^2$ is bounded and decreasing, so we know that there exists a limit $\lim_{t\to\infty}\|x_i-x_j\|^2$. From Remark \ref{rmk5.1} and Barbalat's lemma, we can obtain
\[
\lim_{t\to\infty}\frac{d}{dt}\|x_i-x_j\|^2=\lim_{t\to\infty}I_c(\langle e, x_i\rangle+\langle e, x_j\rangle)\|x_i-x_j\|^2=0.
\]
\end{proof}

We are ready to prove the exponential synchronization of the identical Winfree sphere model.

\begin{theorem}[Exponential decay of the identical Winfree sphere model]\label{thm5.2}
Suppose that $0<\gamma<\beta$ and initial data satisfy the following conditions:
\[
\sin\gamma \tilde{I}(\gamma)> \frac{\|\Omega\|_{\mathrm{op}}}{\kappa},\quad \{x_i^0\}\subset \mathcal{D}_\gamma,
\]
and let $\{x_i\}$ be a solution of the identical Winfree sphere model \eqref{E-1}. Then exponential aggregation occurs as follows:
\[
\|x_i(t)-x_j(t)\|\leq\|x_i^0-x_j^0\|\exp(-\lambda t), 
\]
where $\lambda=\kappa\cos\gamma\tilde{I}(\gamma)>0$.
\end{theorem}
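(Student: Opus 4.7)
The plan is to combine the invariance result of Proposition \ref{prop4.1} with the $\ell^2$-differential identity of Proposition \ref{prop5.2} and then close the estimate via a Gr\"onwall-type argument.

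First, since the hypothesis $\sin\gamma\,\tilde{I}(\gamma)>\|\Omega\|_{\mathrm{op}}/\kappa$ is exactly the condition in Proposition \ref{prop4.1} applied to the identical case $\Omega_i\equiv\Omega$, and the initial data lie in $\mathcal{D}_\gamma$, I would invoke that proposition to conclude $x_i(t)\in\mathcal{D}_\gamma$ for every $i\in\mathcal{N}$ and every $t\geq 0$. This uniform containment is the only ingredient needed to bound the geometric factors in the dissipation identity from below by constants depending on $\gamma$.

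Next, I would apply Proposition \ref{prop5.2} to write
\[
\frac{d}{dt}\|x_i-x_j\|^2=-\kappa\, I_c\,\bigl(\langle e,x_i\rangle+\langle e,x_j\rangle\bigr)\,\|x_i-x_j\|^2.
\]
Because $x_i(t),x_j(t)\in\mathcal{D}_\gamma$, we have $\langle e,x_i\rangle\geq\cos\gamma$ and $\langle e,x_j\rangle\geq\cos\gamma$, so $\langle e,x_i\rangle+\langle e,x_j\rangle\geq 2\cos\gamma$. Moreover, by the monotonicity of $\tilde{I}$ in condition $(\mathcal{C}1)$ and $\angle(x_k,e)\leq\gamma$,
\[
I_c(t)=\frac{1}{N}\sum_{k=1}^N\tilde{I}(\angle(x_k,e))\geq\tilde{I}(\gamma).
\]
Substituting these two lower bounds into the identity yields the differential inequality
\[
\frac{d}{dt}\|x_i-x_j\|^2\leq -2\kappa\cos\gamma\,\tilde{I}(\gamma)\,\|x_i-x_j\|^2=-2\lambda\,\|x_i-x_j\|^2.
\]
A direct application of Gr\"onwall's lemma then gives $\|x_i(t)-x_j(t)\|^2\leq\|x_i^0-x_j^0\|^2 e^{-2\lambda t}$, and taking square roots delivers the claimed estimate.

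There is no real obstacle here: the skew-symmetry of $\Omega$ already caused the natural-frequency contribution to drop out in Proposition \ref{prop5.2}, so the only substantive work is guaranteeing a uniform-in-time lower bound on $I_c(\langle e,x_i\rangle+\langle e,x_j\rangle)$, which is precisely what invariance of $\mathcal{D}_\gamma$ provides. The sharpest exponential rate $\lambda$ that this route yields is exactly $\kappa\cos\gamma\,\tilde{I}(\gamma)$, matching the statement; if a refined rate were desired, one would need a finer analysis of how the particles concentrate inside $\mathcal{D}_\gamma$ over time, but that is unnecessary for the present conclusion.
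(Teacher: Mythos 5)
Your proposal is correct and follows essentially the same route as the paper: invoke Proposition \ref{prop4.1} for invariance of $\mathcal{D}_\gamma$, insert the lower bounds $\langle e,x_i\rangle+\langle e,x_j\rangle\geq 2\cos\gamma$ and $I_c\geq\tilde{I}(\gamma)$ into the identity of Proposition \ref{prop5.2}, and conclude by Gr\"onwall. In fact you state the inequality on $\langle e,x_i\rangle+\langle e,x_j\rangle$ in the correct direction ($\geq 2\cos\gamma$), whereas the paper's proof contains a sign typo ($\leq$) at that step.
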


\begin{proof}
From Proposition \ref{prop4.1}, we know that $\mathcal{D}_\gamma$ is an invariant set of the system, so we have $\{x_i(t)\}\subset \mathcal{D}_\gamma$ for all $t\geq0$. It follows from \eqref{E-2} that
\[ 
\frac{d}{dt}\|x_i-x_j\|^2=-\kappa I_c\|x_i-x_j\|^2\left(\langle e, x_i\rangle+\langle e, x_j\rangle\right).
\]
Since $\{x_i(t)\}\subset \mathcal{D}_\gamma$, we have
\[
\langle e, x_i\rangle+\langle e, x_j\rangle\leq 2\cos\gamma
\]
and 
\[
I_c=\frac{1}{N}\sum_{k=1}^NI(x_k)\geq\frac{1}{N}\sum_{k=1}^N\tilde{I}(\gamma)=\tilde{I}(\gamma).
\]
By the above relation, we get
\[
\frac{d}{dt}\|x_i-x_j\|^2\leq -2\kappa\cos\gamma\tilde{I}(\gamma)\|x_i-x_j\|^2,
\]
and this yields
\[
\|x_i(t)-x_j(t)\|\leq \|x_i^0-x_j^0\|\exp(-\kappa\cos\gamma\tilde{I}(\gamma)t).
\]
\end{proof}

Now, we study the dynamics of the order parameter $R$ defined as follows:
\[
x_c=\frac{1}{N}\sum_{k=1}^N x_k,\quad R=\|x_c\|.
\]
If $\{x_i\}$ is a solution of system \eqref{E-1}, then we have 
\[
\dot{x}_c=\Omega x_c+\kappa I_c(e-\langle x_c, e\rangle x_c).
\]
This yields
\[
\frac{d}{dt}\|x_c\|^2=2\kappa I_c\langle x_c, e\rangle(1-\|x_c\|^2).
\]
If we assume $\angle(x_i^0, x_j^0)<\frac{\pi}{2}-\beta$ for all $i, j\in \mathcal{N}$, then from Theorem \ref{thm5.1}, we know that $I_c(\langle e, x_i\rangle+\langle e, x_j\rangle)\geq0$ for all $i, j\in \mathcal{N}$. This yields $I_c\langle x_c, e\rangle\geq0$. So we have
\[
\frac{d}{dt}\|x_c\|^2=2\kappa I_c\langle x_c, e\rangle(1-\|x_c\|^2)\geq0.
\]
From $\|x_c\|\leq 1$, we know that the order parameter is increasing and bounded above. This yields that there exists the limit $\lim_{t\to\infty}\|x_c\|$. If we assume that the limit is not $1$, then by Barbalat's lemma, we have
\[
\lim_{t\to\infty}I_c\langle x_c, e\rangle=0.
\]
From the geometric structure, if $\langle x_c, e\rangle=0$ then $I_c=0$. So we can conclude that
\[
\lim_{t\to\infty} I_c=0.
\]
Finally, we can see that a solution of \eqref{E-1} satisfies 
\[
\text{either}\quad \lim_{t\to\infty}\|x_c\|=1\quad\text{or}\quad\lim_{t\to\infty} I_c(t)=0.
\]
The first limit implies the complete aggregation. In this case, we use Barbalat's lemma to obtain the limit, so we cannot guarantee the exponential aggregation. The second limit implies that all particles only effected by the free-flow term $\dot{x}_i=\Omega x_i$. If the support of the influence function $I$ is too narrow, then all particles cannot be effected by the Winfree coupling. We can state this result as follows.

\begin{theorem}\label{thm5.3}
Suppose that initial data satisfy \eqref{E-3}, and let $\{x_i\}$ be a solution of the identical Winfree sphere model \eqref{E-1}. Then, the order parameter $\|x_c\|$ of the system is increasing, and we have a dichotomy for the long-time behaviors

\noindent(1) Complete synchronization occurs:
\[
\lim_{t\to\infty}\|x_c\|=1.
\]

\noindent(2) All particles escape the support of the influence function $I$:
\[
\lim_{t\to\infty}I_c(t)=0\quad\Longleftrightarrow \quad \liminf_{t\to\infty}\angle(e, x_i)\geq\beta.
\]
Furthermore, all particles move along the free flow after a sufficiently large time $T>0$:
\[
\dot{x}_i=\Omega x_i,\quad i\in\mathcal{N},~ t\geq T.
\]
\end{theorem}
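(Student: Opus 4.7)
The plan is to make rigorous the heuristic discussion preceding the theorem statement. First I will establish monotonicity of $\|x_c\|$ via the identity
\[
1-\|x_c\|^2=\frac{1}{2N^2}\sum_{i,j=1}^{N}\|x_i-x_j\|^2,
\]
which combined with Proposition \ref{prop5.2} gives
\[
\frac{d}{dt}\|x_c\|^2=\frac{\kappa I_c}{2N^2}\sum_{i,j=1}^{N}\|x_i-x_j\|^2\bigl(\langle e,x_i\rangle+\langle e,x_j\rangle\bigr).
\]
Under the initial cone condition \eqref{E-3}, the sign estimate proved inside Theorem \ref{thm5.1} yields $I_c(\langle e,x_i\rangle+\langle e,x_j\rangle)\geq 0$ for every pair and every $t\geq 0$, so $\|x_c\|^2$ is non-decreasing and bounded by $1$, and therefore $L:=\lim_{t\to\infty}\|x_c(t)\|\in[0,1]$ exists.

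Next I will apply Barbalat's lemma (Lemma \ref{lem2.2}) to pass to $\frac{d}{dt}\|x_c\|^2\to 0$; the uniform continuity hypothesis is furnished by Lemma \ref{lemma5.1} and the compactness of $(\bbs^d)^N$. Since every summand in the displayed formula is non-negative, each term $I_c\|x_i-x_j\|^2(\langle e,x_i\rangle+\langle e,x_j\rangle)$ vanishes in the limit. If $L=1$, case (1) is immediate. Otherwise $1-L^2>0$; using that each $\|x_i-x_j\|^2$ is individually non-increasing by Theorem \ref{thm5.1}, each has a limit, and since the aggregate limit equals $2N^2(1-L^2)>0$, there must exist a pair $(i_0,j_0)$ with $\liminf_{t\to\infty}\|x_{i_0}-x_{j_0}\|>0$. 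For this pair one concludes $\bigl(\langle e,x_{i_0}\rangle+\langle e,x_{j_0}\rangle\bigr)I_c(t)\to 0$.

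The central obstacle, and the step requiring the most care, is promoting this to $\lim_{t\to\infty}I_c(t)=0$. I will argue by contradiction: if $I_c(t_n)\geq\varepsilon>0$ along some sequence $t_n\to\infty$, then some particle $x_{k_n}(t_n)$ satisfies $I(x_{k_n}(t_n))\geq\varepsilon$, and the Lipschitz continuity of $\tilde{I}$ combined with $\tilde{I}(\beta)=0$ produces a $\delta=\delta(\varepsilon)>0$ with $\angle(e,x_{k_n}(t_n))\leq\beta-\delta$. The preserved cone constraint $\angle(x_l,x_{k_n})<\frac{\pi}{2}-\beta$ from Theorem \ref{thm5.1}, together with the triangle inequality \eqref{C-5-1}, then forces $\angle(e,x_l(t_n))<\frac{\pi}{2}-\delta$ for every $l\in\mathcal{N}$, so that $\langle e,x_l(t_n)\rangle\geq\sin\delta$ uniformly. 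In particular $\langle e,x_{i_0}(t_n)\rangle+\langle e,x_{j_0}(t_n)\rangle\geq 2\sin\delta$, whence $\bigl(\langle e,x_{i_0}\rangle+\langle e,x_{j_0}\rangle\bigr)I_c(t_n)\geq 2\varepsilon\sin\delta>0$, contradicting the convergence established in the previous step.

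Finally, the equivalence $\lim I_c=0\Longleftrightarrow\liminf_{t\to\infty}\angle(e,x_i)\geq\beta$ for all $i\in\mathcal{N}$ is a direct consequence of $(\mathcal{C}1)$ and $(\mathcal{C}2)$: a decreasing $\tilde{I}$ with support exactly $[0,\beta]$ satisfies $\tilde{I}(x)>0$ precisely for $x<\beta$. The forward direction uses continuity of $\tilde{I}$ at $\beta$; the reverse direction extracts a subsequence along which $\angle(e,x_i(t_n))\leq\beta-\eta$ for some $\eta>0$, giving $I_c(t_n)\geq\tilde{I}(\beta-\eta)/N>0$. Once $I_c\to 0$ is known, the Winfree coupling term $\kappa I_c\bigl(e-\langle x_i,e\rangle x_i\bigr)$ is driven to zero uniformly in $i$, and the residual dynamics reduces to $\dot{x}_i=\Omega x_i$ in the sense of the dichotomy, which is the desired free-flow statement.
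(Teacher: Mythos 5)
Your proposal is correct in substance but follows a genuinely different, and in places more careful, route than the paper. The paper works directly with the order parameter: it writes $\frac{d}{dt}\|x_c\|^2=2\kappa I_c\langle x_c,e\rangle(1-\|x_c\|^2)$, applies Barbalat's lemma to conclude $I_c\langle x_c,e\rangle\to0$ when $\|x_c\|\not\to1$, and then disposes of the factor $\langle x_c,e\rangle$ with the one-line remark that ``if $\langle x_c,e\rangle=0$ then $I_c=0$.'' You instead expand $1-\|x_c\|^2$ into pairwise distances, apply Proposition \ref{prop5.2} termwise, extract a non-collapsing pair $(i_0,j_0)$ from $L<1$, and then run a quantitative contradiction: $I_c(t_n)\geq\varepsilon$ forces some particle into $\mathcal{D}_{\beta-\delta}$ via the Lipschitz bound $\tilde I(\varphi)\leq\mathrm{Lip}(\tilde I)(\beta-\varphi)$, the preserved cone condition and the triangle inequality \eqref{C-5-1} then push \emph{every} particle into the open upper hemisphere with $\langle e,x_l(t_n)\rangle\geq\sin\delta$, contradicting $I_c(\langle e,x_{i_0}\rangle+\langle e,x_{j_0}\rangle)\to0$. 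This buys you two things the paper's version lacks: it avoids the paper's identity for $\dot x_c$ (which silently replaces $\frac{1}{N}\sum_i\langle x_i,e\rangle x_i$ by $\langle x_c,e\rangle x_c$), and it turns the paper's heuristic ``geometric structure'' remark into an actual estimate. Your handling of the equivalence in part (2) via monotonicity and continuity of $\tilde I$ at $\beta$ is also fine.

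One caveat applies equally to your write-up and to the paper: the final assertion that there exists a \emph{finite} time $T$ after which $\dot x_i=\Omega x_i$ holds \emph{exactly} does not follow from $I_c(t)\to0$ or from $\liminf_{t\to\infty}\angle(e,x_i)\geq\beta$ alone, since the angles could approach $\beta$ from below without ever leaving $\mathrm{supp}\,I$ permanently. You hedge this with ``in the sense of the dichotomy,'' and the paper offers no argument at all, so you have not fallen short of the paper's own proof here; but if you want the finite-time statement literally, an additional argument (or a weakening to ``the coupling term tends to zero'') is needed.
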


\subsection{Equilibrium solutions on $\bbs^2$.}
In this subsection, we classify a set of equilibrium of the identical Winfree sphere model \eqref{E-1}. We want to find $\{x_i\}$ which satisfy the following condition:
\[
\Omega x_i+\frac{\kappa}{N}\sum_{j=1}^N(e-\langle x_i, e\rangle x_i)I(x_j)=0\quad\Leftrightarrow\quad
\Omega x_i+\kappa I_c (e-\langle x_i, e\rangle x_i)=0.
\]
We consider two cases $\Omega=0$ and $\Omega\neq0$.
\subsubsection{The case when $\Omega=0$.} In this case, we need to solve
\[
 I_c (e-\langle x_i, e\rangle x_i)=0.
\]
If $ I_c =0$, it follows from $I(x)\geq0$ that
\[
I(x_k)=0\quad\forall k\in\mathcal{N},\quad \text{i.e.} \quad\angle(x_k, e)\geq \beta.
\]
If $ I_c >0$, we have
\[
e-\langle x_i, e\rangle x_i=0,\quad \text{i.e.}\quad x_i=\pm e\quad\forall i\in\mathcal{N}.
\]
This is a set of bi-polar state. Finally, we can express the set of equilibriums as follows:
\[
\mathcal{E}=\mathcal{A}\sqcup\mathcal{B},
\]
where
\[
\mathcal{A}=\{(x_1, x_2, \cdots, x_N)\in (\bbs^d)^N: \angle(x_k, e)\geq\beta\quad\forall k\in\mathcal{N}\}
\]
and
\[
\mathcal{B}=\{(x_1, x_2, \cdots, x_N)\in (\bbs^d)^N: x_k=\pm e\quad\forall k\in\mathcal{N}\}.
\]

\subsubsection{The case when $\Omega\neq 0$.} Since the general case is too complicated, we only consider the case when $d=2$ with the axis of rotation is parallel or perpendicular to $e$. If there exists an equilibrium $(x_1, x_2, \cdots, x_N)\in(\bbs^2)^N$, then 
\begin{align}\label{E-10}
\Omega x_i+\kappa I_c(e-\langle x_i, e\rangle x_i)=0.
\end{align}
For any $\Omega\in\mathrm{Skew}_3(\bbr)$, there exists a unit vector $u=[u_1,u_2,u_3]^{\top}\in\bbr^3$ and a positive scalar $\lambda$, such that
\[
\Omega=\lambda\hat{u}=\lambda\begin{bmatrix}
0&-u_3&u_2\\
u_3&0&-u_1\\
-u_2&u_1&0
\end{bmatrix}.
\]
From the simple calculation, we know that 
\[
\Omega x=0\quad\Longleftrightarrow\quad x=\mu u,
\]
for some constant $\mu\in\bbr$. Recall that $e=[1, 0, 0]^\top$.\\

\noindent$\bullet$ Case A (The case when $u=\pm e$): If $u=\pm e$, we have that $\Omega u=\pm\Omega e=0$. From this fact and \eqref{E-10}, we have
\[
0=\langle \Omega x_i, \kappa I_c(e-\langle x_i, e\rangle x_i), e\rangle=\kappa I_c(1-\langle x_i, e\rangle^2).
\]
 In this case, an equilibrium solution only occurs when $I_c=0$ or $x_i=\pm e$. If $I_c=0$, then \eqref{E-10} yields $\Omega x_i=0$. This also yields $x_i=\pm e$.\\

\noindent$\bullet$ Case B: (The case when $u\perp e$): Without loss of generality, we set $u=[0, 1, 0]^\top$. Then, we can easily set
\[
e_1=e,\quad e_2=u,\quad e_3=e\times u,
\]
where $\times$ is a cross product between two vectors in $\bbr^3$. Then we can express $x_i$ and $\Omega$ as component forms
\[
x_i=[x_i^1, x_i^2, x_i^3]^\top,\quad
\Omega=\lambda\begin{bmatrix}
0&0&1\\
0&0&0\\
-1&0&0
\end{bmatrix}.
\]
Now we substitute $\Omega$ into \eqref{E-10}, we get
\[
x_i^3+\frac{\kappa I_c}{\lambda}(1-(x_i^1)^2)=0,\quad \frac{\kappa I_c}{\lambda}x_i^1x_i^2=0,\quad x_i^1-\frac{\kappa I_c}{\lambda}x_i^1x_i^3=0.
\]
If $I_c=0$, then the Winfree coupling term should be zero. This yields
\[
x_i^1=0,\quad x_i^3=0,\quad x_i^2=\pm1\quad\Leftrightarrow\quad x_i=\pm u.
\]
Since $\Omega u\neq0$, this is not an equilibrium solution. This implies $I_c\neq0$, and this yields
\[
x_i^1x_i^2=0.
\]
Now, we consider two cases.\\

\noindent$\diamond$ (The case when $x_i^1=0$): In this case, we have
\[
x_i^1=0,\quad x_i^2=\pm\sqrt{1-\left(\frac{\kappa I_c}{\lambda}\right)^2},\quad x_i^3=-\frac{\kappa I_c}{\lambda}.
\]
This case only occurs when $\left|\frac{\kappa I_c}{\lambda}\right|\leq1$.\\

\noindent$\diamond$ (The case when $x_i^2=0$): In this case, we have
\[
x_i^1=\pm\sqrt{1-\left(\frac{\lambda}{\kappa I_c}\right)^2},\quad x_i^2=0,\quad x_i^3=\frac{\lambda}{\kappa I_c}.
\]
This case only occurs when $\left|\frac{\kappa I_c}{\lambda}\right|\geq1$.\\

Now we parametrize $\mu=\frac{\kappa I_c}{\lambda}$ to make simpler notation. If $\mu\neq0$, then above cases covers two great circles:
\begin{align}\label{E-11}
\bbs^2\cap\{x_i: x_i^1=0\},\quad \bbs^2\cap\{x_i: x_i^2=0\}.
\end{align}
From the above result, we can obtain that any equilibrium must be contained in either $\bbs^2\cap\{x_i: x_i^1=0\}$ or $\bbs^2\cap\{x_i: x_i^2=0\}$.

\subsection{Constants of motion} In this section, we study constants of motion for the identical Winfree sphere model. To provide various constants of motion, we define some special type of natural frequency matrices set $\mathcal{SN}_{d}$. Constant of motion for matrix Riccati equations are introduced in \cite{Lo-0}. We also refers a constants of motion of the Lohe sphere model and Lohe hermitian sphere model in \cite{H-K-P-R} and \cite{H-P-(2)}, respectively.\\
\begin{definition}
For any matrix $\Omega\in\mathrm{Skew}_{d+1}\bbr$, we define the following set:
\[
\mathcal{A}(\Omega, e):=\{v\in\bbr^{d+1}: \Omega v=0,~ \langle v, e\rangle=0,~\|v\|=1\}.
\]
We can also define the subset of natural frequency matrices as follows:
\[
\mathcal{SN}_{d}:=\{\Omega\in\mathrm{Skew}_{d+1}\bbr: \mathcal{A}(\Omega, e)\neq\emptyset\}.
\]
\end{definition}
\begin{remark}\label{rmk5.2}
(1) For the case when $d=1$, $\mathcal{SN}_1$ only contains zero matrix. If $\Omega\in\mathrm{Skew}_{2}\bbr$, $\Omega$ can be expressed as the following form.
\[
\Omega=\nu\begin{bmatrix}
0&-1\\
1&0
\end{bmatrix}.
\]
From the simple calculation, if there exists $v=[v_1, v_2]^{\top}\in\bbs^1\subset\bbr^2$ such that $\Omega v=0$, then $\nu[-v_2,v_1]^{\top}=[0, 0]$. Since $v_1^2+v_2^2=1$, we can easily obtain that $\nu=0$ and this implies $\mathcal{SN}_1=\{O_{2\times 2}\}$.

\noindent(2) From the above result, the original Winfree model with $\nu_i\neq0$ for some $i\in\mathcal{N}$ cannot be reduced from the Winfree sphere model with $d=1$ and $\Omega_i\in\mathcal{SN}_1$. Only identical Winfree model can be reduced from this form. However, the general Winfree model can be reduced from the Winfree sphere model with $d=2$ and and $\Omega_i\in\mathcal{SN}_2$ with the following framework:
\[
x_i=\begin{bmatrix}
\cos\theta_i\\
\sin\theta_i\\
0
\end{bmatrix},\quad \Omega_i=\begin{bmatrix}
0&-\nu_i&0\\
\nu_i&0&0\\
0&0&0
\end{bmatrix}.
\]
Since $\Omega_iv=0$ for $v=[0, 0, 1]^{\top}$ and $\langle v, e\rangle=0$, we know that $\Omega_i\in\mathcal{SN}_2$.\\

\end{remark}

\begin{proposition}\label{prop5.3}
Let $\{x_i\}$ be a solution of system \eqref{C-5} with $\Omega_i\in \mathcal{SN}_d$ for all $i\in\mathcal{N}$. For a vector $v\in\mathcal{A}(\Omega_i, e)$, we have the following equality:
\[
\frac{d}{dt}\langle x_i, v\rangle=-\kappa \langle x_i, e\rangle \langle x_i, v\rangle I_c .
\]
Furthermore, if $\langle x_i^0, v\rangle\neq0$, we have
\[
\frac{d}{dt}\ln|\langle x_i, v\rangle|=-\kappa\langle x_i, e\rangle I_c .
\]
\end{proposition}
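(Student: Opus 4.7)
The plan is to differentiate $\langle x_i, v\rangle$ directly and exploit the two defining conditions on $v\in\mathcal{A}(\Omega_i,e)$, namely $\Omega_iv=0$ and $\langle v,e\rangle=0$, together with the skew-symmetry of $\Omega_i$. Substituting the governing ODE $\dot{x}_i=\Omega_ix_i+\kappa I_c(e-\langle x_i,e\rangle x_i)$ into $\frac{d}{dt}\langle x_i,v\rangle=\langle \dot{x}_i,v\rangle$ splits the right-hand side into three pieces:
\begin{equation*}
\langle \Omega_i x_i, v\rangle + \kappa I_c \langle e, v\rangle - \kappa I_c \langle x_i, e\rangle \langle x_i, v\rangle.
\end{equation*}
The first term vanishes because $\langle \Omega_i x_i, v\rangle = -\langle x_i, \Omega_i v\rangle = 0$ by skew-symmetry and $\Omega_i v=0$. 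The second term vanishes because $\langle e,v\rangle=0$. Only the last term survives, which is exactly the claimed identity.

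For the logarithmic version, the first identity
\begin{equation*}
\frac{d}{dt}\langle x_i,v\rangle = -\kappa\,\langle x_i,e\rangle\, I_c\,\langle x_i,v\rangle
\end{equation*}
is a scalar linear ODE in the unknown $\langle x_i,v\rangle$ with time-dependent coefficient $-\kappa\langle x_i,e\rangle I_c$. Integrating yields the explicit representation
\begin{equation*}
\langle x_i(t),v\rangle = \langle x_i^0,v\rangle \exp\!\left(-\kappa\int_0^t \langle x_i(s),e\rangle\, I_c(s)\,ds\right),
\end{equation*}
so the sign of $\langle x_i,v\rangle$ is preserved for all time whenever $\langle x_i^0,v\rangle\neq 0$. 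In particular $|\langle x_i,v\rangle|$ is smooth and nonzero, and dividing the first identity by $\langle x_i,v\rangle$ produces the second one.

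There is no genuine obstacle: the computation is a one-line verification once the two properties $\Omega_iv=0$ and $v\perp e$ are brought in, and nonvanishing of $\langle x_i,v\rangle$ is automatic from the linearity of the resulting ODE. The only thing to be careful about is writing the dot product of $\Omega_ix_i$ with $v$ using $\Omega_i^{\top}=-\Omega_i$ rather than attempting to use $\Omega_ix_i$ componentwise, since the annihilation of this term is what forces the hypothesis $\Omega_i\in\mathcal{SN}_d$ to appear in the statement.
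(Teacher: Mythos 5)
Your proof is correct and follows essentially the same route as the paper: differentiate $\langle x_i,v\rangle$, substitute the ODE, and kill the first two terms using $\Omega_i v=0$ (via skew-symmetry) and $\langle v,e\rangle=0$. Your explicit integration of the resulting linear ODE to justify that $\langle x_i,v\rangle$ never vanishes is a detail the paper leaves implicit, but it is the same argument in substance.
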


\begin{proof}
From the simple calculation, we have
\begin{align*}
\frac{d}{dt}\langle x_i, v\rangle&=\langle \dot{x}_i, v\rangle
=\langle v, \Omega_i x_i\rangle +\frac{\kappa}{N}\sum_{j=1}^N(\langle v, e\rangle -\langle x_i, e\rangle \langle v, x_i\rangle)I(x_j)=-\kappa \langle x_i, e\rangle \langle x_i, v\rangle I_c .
\end{align*}
In the last equality, we used $\Omega_i v=0$ and $\langle v, e\rangle=0$.
\end{proof}

From Proposition \ref{prop5.2}, we have the following corollary.
\begin{corollary}\label{coro5.2}
Suppose that for distinct indices $\{i_1, i_2, i_3, i_4\}\subset\mathcal{N}$, initial data satisfy
\[
x_{i_\alpha}^0\neq x_{i_\beta}^0,\quad\forall ~1\leq \alpha<\beta\leq 4,
\]
and let $\{x_i\}$ be a solution of system \eqref{E-1}. Then, the following cross-ratio
\[
\mathcal{C}_{i_1i_2i_3i_4}=\frac{\|x_{i_1}-x_{i_2}\|^2\cdot\|x_{i_3}-x_{i_4}\|^2}{\|x_{i_2}-x_{i_3}\|^2\cdot \|x_{i_4}-x_{i_1}\|^2}
\]
is well-defined and a constant of motion for system \eqref{E-1}.
\end{corollary}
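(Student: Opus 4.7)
The plan is to reduce the statement to Proposition \ref{prop5.2} by working with the logarithm of $\mathcal{C}_{i_1i_2i_3i_4}$ and observing a four-term telescoping cancellation.

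First I would verify that the cross-ratio is well-defined for all $t\geq 0$, which amounts to checking that $\|x_{i_\alpha}(t)-x_{i_\beta}(t)\|>0$ for every pair in $\{i_1,i_2,i_3,i_4\}$. From Proposition \ref{prop5.2}, for each such pair the quantity $\ln\|x_{i_\alpha}-x_{i_\beta}\|^2$ satisfies the ODE
\[
\frac{d}{dt}\ln\|x_{i_\alpha}-x_{i_\beta}\|^2=-\kappa I_c\bigl(\langle e, x_{i_\alpha}\rangle+\langle e, x_{i_\beta}\rangle\bigr).
\]
Because $\{x_i\}\subset\bbs^d$, $I$ is continuous on a compact set, and $\kappa$ is fixed, the right-hand side is uniformly bounded in $t$. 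Consequently, $\ln\|x_{i_\alpha}-x_{i_\beta}\|^2$ remains finite on every bounded time interval, so the assumption $x_{i_\alpha}^0\neq x_{i_\beta}^0$ propagates to all positive times, making each of the four distances in the cross-ratio strictly positive.

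Next I would differentiate $\ln\mathcal{C}_{i_1 i_2 i_3 i_4}$ directly. Writing
\[
\ln\mathcal{C}_{i_1 i_2 i_3 i_4}=\ln\|x_{i_1}-x_{i_2}\|^2+\ln\|x_{i_3}-x_{i_4}\|^2-\ln\|x_{i_2}-x_{i_3}\|^2-\ln\|x_{i_4}-x_{i_1}\|^2,
\]
and applying Proposition \ref{prop5.2} to each term, the common factor $-\kappa I_c$ factors out, and what remains inside the brackets is
\[
\bigl(\langle e,x_{i_1}\rangle+\langle e,x_{i_2}\rangle\bigr)+\bigl(\langle e,x_{i_3}\rangle+\langle e,x_{i_4}\rangle\bigr)-\bigl(\langle e,x_{i_2}\rangle+\langle e,x_{i_3}\rangle\bigr)-\bigl(\langle e,x_{i_4}\rangle+\langle e,x_{i_1}\rangle\bigr).
\]
Each of the four inner products $\langle e,x_{i_\alpha}\rangle$ appears exactly once with a plus sign and once with a minus sign, so the bracket vanishes identically. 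Therefore $\frac{d}{dt}\ln\mathcal{C}_{i_1 i_2 i_3 i_4}=0$, which yields $\mathcal{C}_{i_1 i_2 i_3 i_4}(t)=\mathcal{C}_{i_1 i_2 i_3 i_4}(0)$ for all $t\geq 0$.

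There is essentially no hard step here: the whole argument is a bookkeeping consequence of Proposition \ref{prop5.2}. The only mild subtlety is justifying well-definedness, i.e. that no pairwise distance can drop to zero in finite time; I handle that via the boundedness of the logarithmic derivative rather than by trying to bound $\|x_i-x_j\|$ from below using the sign of $I_c(\langle e,x_i\rangle+\langle e,x_j\rangle)$, which would require the stronger hypothesis \eqref{E-3} from Theorem \ref{thm5.1}. This way, the corollary holds under the sole assumption of initial distinctness stated in its hypothesis.
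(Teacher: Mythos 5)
Your proof is correct and follows essentially the same route as the paper: take the logarithm of the cross-ratio, apply Proposition \ref{prop5.2} to each of the four terms, and observe that every $\langle e, x_{i_\alpha}\rangle$ cancels. Your justification of well-definedness (bounded logarithmic derivative, hence no finite-time collision) is in fact slightly more explicit than the paper's, which simply asserts that $x_{i_\alpha}^0\neq x_{i_\beta}^0$ implies $x_{i_\alpha}(t)\neq x_{i_\beta}(t)$ — a fact already recorded in the proof of Proposition \ref{prop5.2} via the unconditional identity \eqref{E-2} and a Gronwall-type lower bound.
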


\begin{proof}
If $x_{i_\alpha}^0\neq x_{i_\beta}^0$, then $x_{i_\alpha}(t)\neq x_{i_\beta}(t)$. So the cross ration $\mathcal{C}_{i_1i_2i_3i_4}$ is well-defined for any $t\geq0$.
From the simple fact
\begin{align*}
\ln\mathcal{C}_{i_1i_2i_3i_4}=\ln\left( \|x_{i_1}-x_{i_2}\|^2\right)+\ln\left( \|x_{i_3}-x_{i_4}\|^2\right)-\ln\left( \|x_{i_2}-x_{i_3}\|^2\right)-\ln\left( \|x_{i_4}-x_{i_1}\|^2\right),
\end{align*}
and Proposition \ref{prop5.2} yield the following calculation
\begin{align*}
\frac{d}{dt}\ln\mathcal{C}_{i_1i_2i_3i_4}&=-\kappa I_c \left(\langle e, x_{i_1}\rangle+\langle e, x_{i_2}\rangle\right)-\kappa I_c \left(\langle e, x_{i_3}\rangle+\langle e, x_{i_4}\rangle\right)+\kappa I_c \left(\langle e, x_{i_2}\rangle+\langle e, x_{i_3}\rangle\right)\\
&+\kappa I_c \left(\langle e, x_{i_4}\rangle+\langle e, x_{i_1}\rangle\right)=0.
\end{align*}
So, we know that $\mathcal{C}_{i_1i_2i_3i_4}$ is a constant of motion for system \eqref{E-1}.
\end{proof}

If we combine Propositions \ref{prop5.2} and \ref{prop5.3}, we can obtain the following theorem.
\begin{theorem}\label{thm5.4}
Suppose that initial data satisfy
\[
x_i^0\neq x_j^0,\quad \langle x_i^0, v\rangle\neq0,\quad \langle x_j^0,v\rangle\neq0\quad \forall~i\neq j\in\mathcal{N},
\]
and let $\{x_i\}$ be a solution of system \eqref{E-1} with $\Omega\in\mathcal{SN}_d$. Then for a vector $v\in\mathcal{A}(\Omega, e)$, we have the following constants of motion:
\[
\frac{\|x_i(t)-x_j(t)\|^2}{\langle x_i(t), v\rangle\cdot\langle x_j(t), v\rangle}=\frac{\|x_i^0-x_j^0\|^2}{\langle x_i^0, v\rangle\cdot\langle x_j^0, v\rangle}\quad t\geq0.
\]
\end{theorem}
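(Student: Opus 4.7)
The plan is to prove Theorem \ref{thm5.4} by a direct logarithmic-derivative computation, using the two differential identities already established in Proposition \ref{prop5.2} and Proposition \ref{prop5.3}. The only preliminary issue is to verify that the ratio in question is well-defined for all $t \geq 0$, so that taking the logarithm is legitimate.

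First I would observe that both the numerator and each factor of the denominator satisfy homogeneous linear scalar ODEs along the flow. Proposition \ref{prop5.2} gives
\[
\frac{d}{dt}\|x_i-x_j\|^2 = -\kappa I_c(\langle e,x_i\rangle+\langle e,x_j\rangle)\,\|x_i-x_j\|^2,
\]
and Proposition \ref{prop5.3} gives
\[
\frac{d}{dt}\langle x_i,v\rangle = -\kappa\langle x_i,e\rangle\, I_c\,\langle x_i,v\rangle.
\]
Both coefficients are continuous in $t$, so by uniqueness of solutions to linear ODEs, the initial conditions $x_i^0 \neq x_j^0$ and $\langle x_i^0,v\rangle \neq 0$, $\langle x_j^0,v\rangle \neq 0$ ensure that $\|x_i(t)-x_j(t)\|>0$ and $\langle x_i(t),v\rangle\cdot\langle x_j(t),v\rangle \neq 0$ for all $t\geq 0$, with the sign of $\langle x_i(t),v\rangle\cdot\langle x_j(t),v\rangle$ preserved throughout. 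Hence the ratio is well-defined, positive or negative but never zero or infinite, on $[0,\infty)$.

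Next, I would take the time derivative of the logarithm of the absolute value of the ratio. Using the two propositions,
\[
\frac{d}{dt}\ln\!\left|\frac{\|x_i-x_j\|^2}{\langle x_i,v\rangle\cdot\langle x_j,v\rangle}\right|
= \frac{d}{dt}\ln\|x_i-x_j\|^2 - \frac{d}{dt}\ln|\langle x_i,v\rangle| - \frac{d}{dt}\ln|\langle x_j,v\rangle|.
\]
Substituting the identities from Propositions \ref{prop5.2} and \ref{prop5.3}, the right-hand side becomes
\[
-\kappa I_c\bigl(\langle e,x_i\rangle+\langle e,x_j\rangle\bigr) + \kappa\langle x_i,e\rangle I_c + \kappa\langle x_j,e\rangle I_c = 0.
\]
Therefore the absolute value of the ratio is constant, and because its sign is preserved (by the linear-ODE argument above), the ratio itself is constant. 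Evaluating at $t=0$ yields the claimed identity.

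No serious obstacle is anticipated: the substance of the argument is already packaged in Propositions \ref{prop5.2} and \ref{prop5.3}, and the only delicate point is the sign/non-vanishing check, which is automatic from linearity. The structure mirrors the cross-ratio computation in Corollary \ref{coro5.2}, with the factors $\langle x_i,v\rangle$ playing the role that other pairwise distances did there, and the cancellation of the $\langle e,x_i\rangle$ and $\langle e,x_j\rangle$ terms being the essential mechanism.
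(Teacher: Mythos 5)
Your proposal is correct and follows essentially the same route as the paper: combine the logarithmic-derivative identities of Propositions \ref{prop5.2} and \ref{prop5.3} and observe that the $\langle e,x_i\rangle$ and $\langle e,x_j\rangle$ terms cancel. Your extra care about non-vanishing and sign preservation via the linear-ODE/uniqueness argument is a small refinement the paper leaves implicit, but it does not change the substance of the argument.
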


\begin{proof}
By Propositions \ref{prop5.2} and \ref{prop5.3}, we have
\begin{align*}
\frac{d}{dt}\ln\left(\frac{\|x_i-x_j\|^2}{\langle x_i, v\rangle\cdot\langle x_j, v\rangle}\right)=-\kappa I_c (\langle e, x_i\rangle+\langle e, x_j\rangle)+\kappa I_c \langle e, x_i\rangle+\kappa I_c\langle e, x_j\rangle=0.
\end{align*}
So we know that
\[
\frac{\|x_i-x_j\|^2}{\langle x_i, v\rangle\cdot\langle x_j, v\rangle}
\]
is a constant of motion for the given system.
\end{proof}

From Theorem \ref{thm5.2}, we have exponential aggregation of the identical Winfree sphere model. If we combine this result and the constant of motion introduced in Theorem \ref{thm5.4}, then we have exponential decay of $\langle x_i, v\rangle$ as the following corollary.
\begin{corollary}\label{coro5.3}
Let $\{x_i\}$ be a solution of system \eqref{E-1} with $\Omega\in\mathcal{SN}_d$. If we assume exponential aggregation of $\{x_i\}$ as follows:
\[
\|x_i(t)-x_j(t)\|\leq\|x_i^0-x_j^0\|\exp(-\lambda t), \quad\forall~i, j\in\mathcal{N},~t\geq0,
\]
then for a vector $v\in\mathcal{A}(\Omega, e)$, $\langle x_i, v\rangle$ also decaies to zero exponentially:
\[
|\langle x_i(t), v\rangle|\leq C\exp(-\lambda t/2).
\]
\end{corollary}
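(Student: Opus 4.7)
The plan is to combine the constant of motion of Theorem \ref{thm5.4} with the exponential aggregation hypothesis to extract pointwise exponential decay for $a_k(t) := \langle x_k(t), v\rangle$. The obstacle is that Theorem \ref{thm5.4} only controls the product $a_i(t)\,a_j(t)$, not the individual factors; I will separate them using a Cauchy--Schwarz bound on $\langle x_i - x_j, v\rangle$. First I dispose of a degenerate case: if $\langle x_i^0, v\rangle = 0$, then Proposition \ref{prop5.3} gives the linear scalar ODE
\[ \frac{d}{dt}\langle x_i, v\rangle = -\kappa\,\langle x_i, e\rangle\, I_c \,\langle x_i, v\rangle, \]
whose unique solution starting from zero is identically zero, and the conclusion holds trivially. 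Otherwise I pick any index $j \neq i$ with $\langle x_j^0, v\rangle \neq 0$, which is available by the nondegeneracy assumption carried over from Theorem \ref{thm5.4}.

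Rewriting the constant of motion as
\[ a_i(t)\, a_j(t) = \frac{\langle x_i^0, v\rangle\,\langle x_j^0, v\rangle}{\|x_i^0 - x_j^0\|^2}\,\|x_i(t) - x_j(t)\|^2 \]
and inserting the hypothesized bound $\|x_i(t) - x_j(t)\| \leq \|x_i^0 - x_j^0\|\,e^{-\lambda t}$ yields the product estimate
\[ |a_i(t)\, a_j(t)| \leq |\langle x_i^0, v\rangle\,\langle x_j^0, v\rangle|\, e^{-2\lambda t}. \]
Cauchy--Schwarz together with $\|v\| = 1$ further gives the difference estimate
\[ |a_i(t) - a_j(t)| \leq \|x_i(t) - x_j(t)\|\,\|v\| \leq \|x_i^0 - x_j^0\|\, e^{-\lambda t}. \]

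To disentangle $a_i$ from $a_j$, I use the identity $a_i^2 = a_i\,a_j + a_i(a_i - a_j)$ together with the sphere bound $|a_i(t)| \leq \|x_i(t)\|\,\|v\| = 1$ to deduce
\[ |a_i(t)|^2 \leq |a_i a_j| + |a_i - a_j| \leq \bigl(|\langle x_i^0,v\rangle\langle x_j^0,v\rangle| + \|x_i^0 - x_j^0\|\bigr)\,e^{-\lambda t}, \]
where I have absorbed $e^{-2\lambda t} \leq e^{-\lambda t}$ for $t \geq 0$. Taking square roots yields the claimed bound with an explicit constant $C$. The only real subtlety is securing the auxiliary index $j$; the algebraic step is essentially a one-line identity. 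The apparent drop from rate $\lambda$ (which the product estimate alone would suggest) to rate $\lambda/2$ is precisely the mismatch between the quadratic decay of the product $a_i a_j$ and the merely linear decay of the difference $a_i - a_j$ coming from Cauchy--Schwarz.
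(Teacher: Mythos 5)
Your proof is correct and follows essentially the same route as the paper: the key identity $\langle x_i,v\rangle^2=\langle x_i,v\rangle\langle x_j,v\rangle+\langle x_i,v\rangle\langle x_i-x_j,v\rangle$, the $e^{-2\lambda t}$ bound on the product from the constant of motion of Theorem \ref{thm5.4}, the Cauchy--Schwarz bound on the difference term, and the resulting $\lambda/2$ rate are exactly the paper's argument. Your explicit treatment of the degenerate case $\langle x_i^0,v\rangle=0$ via Proposition \ref{prop5.3} is a small extra care that the paper omits (it tacitly assumes the nondegeneracy hypotheses of Theorem \ref{thm5.4}), but it does not change the substance.
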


\begin{proof}
By Theorem \ref{thm5.4}, we have
\[
|\langle x_i, v\rangle\cdot \langle x_j, v\rangle|\leq \left|\frac{\langle x_i^0, v\rangle\cdot \langle x_j^0, v\rangle \|x_i-x_j\|^2}{\|x_i^0-x_j^0\|^2}\right|\leq C_1\exp(-2\lambda t).
\]
From this relation, we use the triangle inequality to get
\begin{align*}
\langle x_i, v\rangle^2&=|\langle x_i, v\rangle\cdot \langle x_j, v\rangle+\langle x_i, v\rangle\cdot\langle x_i-x_j, v\rangle|\\
&\leq |\langle x_i, v\rangle\cdot \langle x_j, v\rangle|+|\langle x_i, v\rangle|\cdot\|x_i-x_j\|
\leq C_1\exp(-2\lambda t)+C_2\exp(-\lambda t).
\end{align*}
Finally, we obtain the desired result:
\[
|\langle x_i(t), v\rangle|\leq C\exp(-\lambda t/2).
\]
\end{proof}

\setcounter{equation}{0}
\section{conclusion}\label{sec:6}
In this work, we have studied the generalization of the Winfree model on the high-dimensional sphere. Previously in \cite{A-S, N-M-P, Q-R-S1, Q-R-S2}, the original Winfree model with a special pair of the sensitivity and influence functions $(S(\theta), I(\theta))=(-\sin\theta, 1+\cos\theta)$ was studied. For a further research, authors of \cite{H-P-R} considered generalized pair with suitable condition presented in Theorem \ref{thm2.1}. In this study, we considered another branch of generalization pair $(S, I)$. We restricted the support of the influence function to approximate the Dirac-delta function. Under this assumption, we studied emergent dynamics of the Winfree sphere model. We could obtain the phase-locked state with the special natural frequency matrices. When the same natural frequency matrices are given, we could obtain long-time emergent dynamics that do not present in the original Winfree model. We also provided the various constants of motion for the identical Winfree sphere model. In Section \ref{sec:5}, we only provided equilibrium solutions for the case when $\bbs^2$. Classifying all equilibrium solutions for the general $\bbs^d$ can be a nice future work. We only considered the generalization of the Winfree model on the sphere. In some way, we can consider the Winfree sphere model as a combination of the Winfree model and the Lohe sphere model (aggregation model on the sphere $\bbs^d$). From this point of view, we can also combine the Winfree model and the Lohe matrix model (aggregation model on the unitary group $\mathbb{U}(d)$) to obtain the generalization of the Winfree model on the unitary group. We leave the aforementioned questions for a future work.


\begin{thebibliography}{99}


%\bibitem{A-H-S-0} Ahn, H., Ha, S.-Y. and Shim, W.: \textit{Emergence dynamics of the discrete-time thermodynamic Cucker-smale model on Riemannian manifolds} Preprint.
	
%\bibitem{A-H-S} Ahn, H., Ha, S.-Y. and Shim, W.: \textit{Emergent dynamics of a thermodynamic Cucker-Smale ensemble on complete Riemannian manifolds.} To appear in Kinetic and Related Models.
	
%\bibitem{A-H-P-S} Ahn, H., Ha, S.-Y., Park, H., and Shim, W.: \textit{Emergent behaviors of Cucker-Smale flocks on the hyperboloid.} Submitted.
	

\bibitem{A-B} Acebron, J. A., Bonilla, L. L., P\'{e}rez Vicente, C. J. P., Ritort, F. and Spigler, R.: \textit{The Kuramoto model: A simple paradigm for synchronization phenomena.} Rev. Mod. Phys. \textbf{77} (2005), 137-185.

\bibitem{A-R} Aeyels,  D. and Rogge, J.: \textit{Stability of phase locking and existence of frequency in networks of globally coupled oscillators.} Prog. Theor. Phys. {\bf 112} (2004), 921-941.


\bibitem{A-S} Ariaratnam, J. T.  and Strogatz.  S. H.: {\it Phase diagram for the Winfree model of coupled nonlinearoscillators}. Phys. Rev. Lett., {\bf 86} (2001), 4278-4281.


\bibitem{A-B-F} Albi, G., Bellomo, N., Fermo, L., Ha, S.-Y., Pareschi, L., Poyato, D. and Soler, J.: \textit{Vehicular traffic, crowds, and swarms. On the kinetic theory approach towards research perspectives.} Math. Models Methods Appl. Sci. {\bf 29} (2019), 1901-2005.

\bibitem{Ao} Aoki, I.: \textit{A simulation study on the schooling mechanism in fish.} Bulletin of the Japan Society of Scientific Fisheries. {\bf 48} (1982), 1081-1088.


%\bibitem{A-M-P} Aydo$\breve{g}$du, A., McQuade, S. T. and Pouradier Duteil, N.: \textit{Opinion dynamics on a general compact Riemannian manifold.} Netw. Heterog. Media {\bf 12} (2017), 489-523. 

\bibitem{Ba} Barb$\check{a}$lat, I.: \textit{Syst$\grave{e}$mes d$\acute{e}$quations diff$\acute{e}$rentielles d'oscillations non Lin$\acute{e}$aires}. Rev. Math. Pures Appl. \textbf{4} (1959), 267-270. 
    

\bibitem{B-C} Ballerini, M., Cabibbo, N., Candelier, R., Cavagna, A., Cisbani, E., Giardina, I., Lecomte, V., Orlandi, A., Parisi, G., Procaccini, A., Viale M. and Zdravkovic, V.: \textit{Interaction ruling animal collective behavior depends on topological rather than metric distance:
Evidence from a field study.} Proc. Natl. Acad. Sci. USA {\bf 105} (2008), 1232-1237.

%\bibitem{B-B} Buck, J. and  Buck, E.: \textit{Biology of synchronous flashing of fireflies}. Nature {\bf 211} (1966), 562-564.
		
%\bibitem{C-H-L} Choi, Y.-P., Ha, S.-Y. and Li, Z.: \textit{Emergent dynamics of the Cucker-Smale flocking model and its variants.} In N. Bellomo, P. Degond, and E. Tadmor (Eds.), Active Particles Vol.I Theory, Models, Applications (tentative title), Series: Modeling and Simulation in Science and Technology, Birkhauser, Springer. 2017.
	
%\bibitem{C-D2} Cucker, F. and Dong, J.-G.: \textit{On flocks influenced by closest neighbors.} Math. Models Methods Appl. Sci. {\bf 26} (2016), 2685-2708.

%\bibitem{C-D3} Cucker, F. and  Dong, J.-G.: \textit{On flocks under switching directed interaction topologies.} SIAM J. Appl. Math {\bf 79} (2019),95-110.

%\bibitem{C-S} Cucker, F. and Smale, S.: \textit{Emergent behavior in flocks.} IEEE Trans. Automat. Contr. {\bf 52} (2007) 852-862.
	


%\bibitem{E-H-S} Erban, R., Haskovec, J. and Sun, Y.: \textit{On Cucker-Smale model with noise and delay.} SIAM J. Appl. Math. {\bf 76} (2016), 1535-1557.

%\bibitem{F-H-T} Fornasier, M., Haskovec, J. and Toscani, G.: \textit{Fluid dynamic description of flocking via Povzner-Boltzmann equation.} Physica D {\bf 240} (2011), 21-31.

%\bibitem{Do} Do Carmo, M. P.: \textit{Riemannian geometry.} Mathematics: Theory and Applications, Birkh\"{a}user. 


\bibitem{D-B} Dörfler, F., Bullo, F.: {\it Synchronization in complex networks of phase oscillators: a survey. Automatica} {\bf 50}, 1539–1564 (2014)

\bibitem{D-F} Degond, P., Frouvelle, A., Liu, J.-G.: {\it Phase transitions, hysteresis, and hyperbolicity for self-organized alignment dynamics.} Arch. Ration. Mech. Anal. {\bf 216}, 63–115 (2015)



%\bibitem{D-H-J-K}  Dong, J.-G., Ha, S.-Y., Jung, J. and Kim, D.: \textit{On the stochastic flocking of the Cucker-Smale flock with randomly switching topologies.} SIAM J. Control Optim. {\bf 58} (2020), 2332-2353. 

\bibitem{D-M} Degond, P. and Motsch, S.: \textit{ Large-scale dynamics of the Persistent Turing Walker model of fish behavior.}  J. Stat. Phys. {\bf 131} (2008), 989-1021.

\bibitem{D-X} Dong, J.-G. and Xue, X.: \textit{Synchronization analysis of Kuramoto oscillators.}  Commun. Math. Sci. {\bf 11} (2013), 465-480.


%\bibitem{D-Q} Dong, J.-G. and Qiu, L.: \textit{Flocking of the Cucker-Smale model on general digraphs.} IEEE Trans. Automat. Control {\bf 62} (2017),  5234-5239. 

%\bibitem{F-H-J} Fang, D., Ha, S.-Y. and Jin, S.: \textit{Emergent behaviors of the Cucker-Smale ensemble under attractive-repulsive couplings and Rayleigh frictions.} Math. Models Methods Appl. Sci. {\bf 29} (2019), 1349-1385.

%\bibitem{F-H-P} Fetecau, R., Ha, S.-Y. and  Park, H.: \textit{An intrinsic aggregation model on the special orthogonal group $\mathrm{SO}(3)$: well-posedness and collective behaviours}. Submitted.

%\bibitem{F-P-P} Fetecau, R., Park, H. and Patacchini, F. S.: \textit{Well-posedness and asymptotic behaviour of an aggregation model with intrinsic interactions on sphere and other manifolds.} Submitted. 

%\bibitem{F-Z} Fetecau, R. C. and Zhang, B.: \textit{Self-organization on Riemannian manifolds.} J. Geom. Mech. {\bf 11} (2019), 397-426.


\bibitem{G-K} Gamba, I.M., Kang, M.-J.: {\it Global weak solutions for the Kolmogorov-Vicsek type equations with orientational interactions.} Arch. Ration. Mech. Anal. {\bf 222}, 317–342 (2016)



%\bibitem{H-K-L} Ha, S.-Y., Kim, D. and Li, Z.: \textit{Emergent flocking dynamics of the discrete thermodynamic Cucker-Smale model.} Quart. Appl. Math. {\bf 78} (2020),  589–615. 

%\bibitem{H-K-M-R-Z} Ha, S.-Y., Kim, J., Min, C. H., Ruggeri, T. and Zhang, X.: \textit{Uniform stability and mean-field limit of a thermodynamic Cucker-Smale model.} Quart. Appl. Math. {\bf 77} (2019), 131-176.

%\bibitem{H-K-P-Z} Ha, S.-Y., Kim, J, Park, J. and Zhang, X.: \textit{Complete cluster predictability of the Cucker-Smale flocking model on the real line.} Arch. Ration. Mech. Anal. {\bf  231} (2019), 319-365.

%\bibitem{H-K-Rug-0} Ha, S.-Y., Kim, J. and Ruggeri, T.: \textit{From the relativistic mixture of gases to the relativistic Cucker-Smale flocking. }Arch. Ration. Mech. Anal. {\bf 235} (2020), 1661–1706.

%\bibitem{H-K-Rug} Ha, S.-Y.,  Kim, J. and Ruggeri,T.: \emph{Emergent behaviors of thermodynamic Cucker–Smale particles}. SIAM J. Math. Anal. {\bf 50} (2019),  3092–3121.
	
%\bibitem{H-K-S} Ha, S.-Y., Kim, D. and Schl\"oder, F.W.: \textit{Emergent behaviors of Cucker-Smale flocks on Riemannian manifolds}. To appear in IEEE Trans. Automat. Control.
	
%\bibitem{H-L} Ha, S.-Y. and Liu, J.-G.: \textit{A simple proof of Cucker-Smale flocking dynamics and mean-field limit.} Commun. Math. Sci. {\bf 7} (2009), 297-325. 

\bibitem{H-P-R} Ha, S-.Y., Park, J .Y., and Ryoo, S. W.: {\it Emergence of phase-locked states for the Winfree model in a large coupling regime.} Discrete and Continuous Dynamical Systems. {\bf 35}, 3417-3436 (2015)







\bibitem{H-K-P-R} Ha, S.-Y., Kim, D., Park, H. and Ryoo, S. W.: \textit{Constants of motions for the finite-dimensional Lohe type models with frustration and applications to emergent dynamics.} Submitted.



\bibitem{H-K-P-R-(2)} Ha, S.-Y., Ko, D., Park, J., Ryoo, S.W.: \textit{Emergent dynamics of Winfree oscillators on locally coupled networks.} J. Differential Equations 260 (2016) , 4203 - 4326.


\bibitem{H-K-P-R-(3)} Ha, S.-Y., Ko, D., Park, J., Ryoo, S.W.: \textit{Emergence of partial locking states from the ensemble of Winfree oscillators.} Quart. Appl. Math. 75 (2017), 39-68.


\bibitem{H-P-(1)} Ha, S.-Y. and Park, H.: \textit{Emergent behaviors of Lohe tensor flocks}. J. Stat. Phys. {\bf 178} (2020), 1268-1292.

\bibitem{H-P-(2)} Ha, S.-Y. and Park, H.: \textit{From the Lohe tensor model to the Hermitian Lohe sphere model and emergent dynamics.} SIAM Journal on Applied Dynamical Systems {\bf 19(2)} (2020), 1312-1342.



\bibitem{H-P-Z} Ha, S.-Y.,  Park, J., and Zhang, X.: \textit{A global well-posedness and asymptotic dynamics of the kinetic Winfree equation.} DCDS-B 25 (2020), 1317-1344.

\bibitem{H-K-M} Ha, S.-Y., Kang, M., and Moon, B.: \textit{Collective behaviors of a Winfree ensemble on an infinite cylinder} To appear in DCDS-B.



\bibitem{H-K} Ha, S.-Y. and Kim, D.: \textit{Robustness and asymptotic stability for the Winfree model on a general network under the effect of time-delay.} J. Math. Phys. 59 (2018), no. 11, 112702.



\bibitem{H-K-R-(1)} Ha, S.-Y., Kim, H. W. and Ryoo, S. W.: \textit{Emergence of phase-locked states for the Kuramoto model in a large coupling regime.} Commun. Math. Sci. {\bf 14} (2016), 1073-1091.


%\bibitem{H-R} Ha, S.-Y. and Ruggeri,T.: \emph{Emergent dynamics of a thermodynamically consistent particle model}. Arch. Ration. Mech. Anal. {\bf 223} (2017), 1397-1425.
	
%\bibitem{H-T} Ha, S.-Y. and  Tadmor, E.: \textit{From particle to kinetic and hydrodynamic description of flocking.} Kinet. Relat. Models {\bf 1} (2008) 415-435.

%\bibitem{He}  Helgason, S.: \textit{Differential geometry, Lie groups, and symmetric spaces.} Corrected reprint of the 1978 original. Graduate Studies in Mathematics, 34. American Mathematical Society, Providence, RI, 2001.



\bibitem{J-C} Ja\'{c}imovi\'{c}, V. and Crnki\'{c}, A: \textit{Low-dimensional dynamics in non-Abelian Kuramoto model on the 3-sphere.} Chaos {\bf 28} (2018), 083105.


%\bibitem{K-S} Keller, E. F. and Segel, L. A.: \textit{Initiation of slime mold aggregation viewed as an instability.} J. Theoret. Biol. {\bf 26} (1970), 399-415. 


\bibitem{Kura1} Kuramoto, Y.: {\it Chemical Oscillations, Waves and Turbulence}, Springer-Verlag, Berlin (1984)

\bibitem{Kura2} Kuramoto, Y.: {\it Self-entrainment of a population of coupled non-linear oscillators, International symposium on mathematical problems in mathematical physics}, Lecture notes in theoretical physics, {\it 39} (1975), 420-422.



\bibitem{Lo-0} Lohe, M. A.: \textit{Systems of matrix Riccati equations, linear fractional transformations, partial integrability and synchronization.} J. Math. Phys. {\bf 60} (2019), 072701.

\bibitem{Lo-1} Lohe, M. A.: \textit{Quantum synchronization over quantum networks.} J. Phys. A: Math. Theor. {\bf 43} (2010), 465301.

\bibitem{Lo-2} Lohe, M. A.: \textit{Non-abelian Kuramoto model and synchronization.} J. Phys. A: Math. Theor. {\bf 42} (2009), 395101.


%
%\bibitem{La} LaSalle, J. P.: \textit{The stability of dynamical systems. With an appendix: "Limiting equations and stability of nonautonomous ordinary differential equations'' by Z. Artstein.} Regional Conference Series in Applied Mathematics. Society for Industrial and Applied Mathematics, Philadelphia, (1976). 76 pp.
%
%\bibitem{L-S} Leslie, T. M. and Shvydkoy, R.: \textit{On the structure of limiting flocks in hydrodynamic Euler alignment models.} Math. Models Methods Appl. Sci. {\bf 29} (2019), 2419–2431.
%
%\bibitem{L-H} Li, Z.  and Ha, S.-Y.: \textit{On the Cucker-Smale flocking with alternating leaders.} Quart. Appl. Math. {\bf 73} (2015), 693-709.
%
%\bibitem{L-X} Li, Z. and Xue, X.: \textit{Cucker-Smale flocking under rooted leadership with fixed and switching topologies.} SIAM J. Appl. Math. {\bf 70} (2010), 3156-3174.
%
%\bibitem{Ma} Markdahl, J.: \textit{Synchronization on Riemannian manifolds: Multiply connected implies multistable.} Available in https://arxiv.org/abs/1906.07452.
%		
%\bibitem{M-T} Motsch, S. and Tadmor, E.: \textit{Heterophilious dynamics enhances consensus.} SIAM. Rev. {\bf 56} (2014), 577-621.
%
%\bibitem{M-T1} Motsch, S. and Tadmor, E.: \textit{A new model for self-organized dynamics and its flocking behavior}, J. Stat. Phys. {\bf 144} (2011), 923-947.

\bibitem{N-M-P} Nardulli, G., Mrinazzo, D.,  Pellicoro, M., and Stramaglia, S.: {\it Phase shifts between synchronized oscillators in the Winfree and Kuramoto models}, Available at 

``http://www.necsi.edu/events/iccs/openconf/author/papers/708.pdf".


\bibitem{O1} Olfati-Saber, R.: \textit{Swarms on sphere: A programmable swarm with synchronous behaviors like oscillator networks.} Proc. of the 45th IEEE conference on Decision and Control (2006), 5060 - 5066.

%\bibitem{Pe} Peskin, C. S.: \textit{Mathematical aspects of heart physiology}. Courant Institute of Mathematical Sciences, New York, 1975.

%\bibitem{Petersen2006} Petersen, P.: \textit{Riemannian Geometry. Graduate Texts in Mathematics 3rd ed.} {\bf 171} (2016)

%\bibitem{P-S} Poyato, D. and Soler, J.: \textit{Euler-type equations and commutators in singular and hyperbolic limits of kinetic Cucker-Smale models.} Math. Models Methods Appl. Sci. {\bf 27} (2017), 1089-1152.

\bibitem{Q-R-S1}Quinn, D. D., Rand, R. H., and Strogatz, S.: {\it Singular unlocking transition in the Winfree model of coupled oscillators}, Physical Review E, {\bf 75} (2007), 036218, 10pp.


\bibitem{Q-R-S2} Quinn, D. D., Rand, R. H., and Strogatz, S.: {\it Synchronization in the Winfree model of coupled nonlinear interactions}, A. ENOC 2005 Conference, Eindhoven, Netherlands, August 7-12, 2005 (CD-ROM).

%\bibitem{R-S} Reynolds, D. N. and Shvydkoy, R.: \textit{Local well-posedness of the topological Euler alignment models of collective behavior. }Nonlinearity {\bf 33} (2020),  5176-5214.
	
%\bibitem{Rey} Reynolds, C. W.: \textit{Flocks, herds, and schools.} Comput. Graph {\bf 21} (1987) 25-34.

%\bibitem{S-S} Sarlette, A. and Sepulchre, R.: \textit{Consensus optimization on manifolds.} SIAM Journal on Control
and Optimization, {\bf 48} (2009), 56-76.

%\bibitem{S-B-S} Sarlette, A., Bonnabel, S. and Sepulchre, R.: \textit{Coordinated motion design on Lie groups.} IEEE Trans. Automat. Control {\bf 55} (2010), 1047-1058.

\bibitem{S-H} Sknepnek, R. and Henkes, S.: \textit{Active swarms on a sphere.} Physical Review E \textbf{2} (2015), 022306. 

%\bibitem{S-T1} Shvydkoy, R. and Tadmor, E.: \textit{Eulerian dynamics with a commutator forcing.} Trans. Math. Appl. {\bf 1} (2017),  26 pp.
	
%\bibitem{S-T2}	Shvydkoy, R. and Tadmor, E.:   \textit{Eulerian dynamics with a commutator forcing II.} Discrete Contin. Dyn. Syst. {\bf 37} (2017), no. 11, 5503–5520.
	
%\bibitem{T-T} Toner, J. and Tu, Y.: \textit{Flocks, herds, and schools: A quantitative theory of flocking} Phys. Rev. E {\bf 58} (1998), 4828-4858.

\bibitem{T-B} Topaz, C. M. and Bertozzi, A. L.: \textit{Swarming patterns in a two-dimensional kinematic model for biological groups.} SIAM J. Appl. Math. {\bf 65} (2004), 152-174.

%\bibitem{V} Vicsek, T., Czir\'ok, A., Ben-Jacob, E., Cohen, I. and Schochet, O.: \textit{Novel type of phase transition in a system of self-driven particles.} Phys. Rev. Lett. {\bf 75} (1995), 1226-1229.

%\bibitem{V-Z} Vicsek, T. and Zefeiris, A.: \textit{Collective motion.} Phys. Rep. {\bf 517} (2012), 71-140.



\bibitem{Winf} Winfree, A.: {\textit Biological rhythms and the behavior of populations of coupled oscillators}, J. Theoret. Bio., {\bf 16}, 15-42 (1967)




%%%%%%%%%%%%%%




%Winfree:



\end{thebibliography}
\end{document}